\begin{document}



\newcommand{\mbbN}{\mathbb{N}}
\newcommand{\mcA}{\mathcal{A}}
\newcommand{\mcB}{\mathcal{B}}
\newcommand{\mcS}{\mathcal{S}}
\newcommand{\mcD}{\mathcal{D}}

\newcommand{\ms}{\mathsf} 
\newcommand{\mbb}{\mathbb}
\newcommand{\mb}{\mathbf}
\newcommand{\mc}{\mathcal}
\newcommand{\Z}{\mathbb{Z}}
\newcommand{\rdec}{\mathsf{rdec}}

\pagestyle{plain}
\mainmatter
\title{Accountable Tracing Signatures from Lattices}

\author{San Ling, Khoa Nguyen, Huaxiong Wang, Yanhong Xu}
\institute{
Division of Mathematical Sciences, \\
School of Physical and Mathematical Sciences,\\
Nanyang Technological University, Singapore.\\
\texttt{\{lingsan,khoantt,hxwang,xu0014ng\}@ntu.edu.sg}
}
\maketitle

\begin{abstract}
Group signatures allow users of a group to sign messages anonymously in the name of the group, while incorporating a tracing mechanism to revoke anonymity and identify the signer of any message. Since its introduction by Chaum and van Heyst (EUROCRYPT 1991), numerous proposals have been put forward, yielding various improvements on security, efficiency and functionality. However, a drawback of traditional group signatures is that the opening authority is given too much power, i.e., he can indiscriminately revoke anonymity and there is no mechanism to keep him accountable.
To overcome this problem, Kohlweiss and Miers (PoPET 2015) introduced the notion of accountable tracing signatures ($\mathsf{ATS}$) - an enhanced group signature variant in which the opening authority is kept accountable for his actions. Kohlweiss and Miers demonstrated a generic construction of $\mathsf{ATS}$ and put forward a concrete instantiation based on number-theoretic assumptions. To the best of our knowledge, no other $\mathsf{ATS}$ scheme has been known, and the problem of instantiating $\mathsf{ATS}$ under post-quantum assumptions, e.g., lattices, remains open to date.


~~In this work, we provide the first lattice-based accountable tracing signature scheme. The scheme satisfies the security requirements suggested by Kohlweiss and Miers, assuming the hardness of the Ring Short Integer Solution ($\mathsf{RSIS}$) and  the Ring Learning With Errors ($\mathsf{RLWE}$) problems. At the heart of our construction are a lattice-based key-oblivious encryption scheme and a zero-knowledge argument system allowing to prove that a given ciphertext is a valid $\mathsf{RLWE}$ encryption under some hidden yet certified key. 
These technical building blocks may be of independent interest, e.g., they can be useful for the design of other lattice-based privacy-preserving protocols.


\end{abstract}

\section{Introduction}

Group signature is a fundamental cryptographic primitive introduced by Chaum and van Heyst~\cite{CV91}. It allows members of a group to anonymously sign messages on behalf of the group, but to prevent abuse of anonymity, there is an opening authority (\textsf{OA}) who can identify the signer of any message. While such a tracing mechanism is necessary to ensure user accountability, it grants too much power to the opening authority. Indeed, in traditional models of group signatures, e.g.,~\cite{BMW03,KTY04,BS04-VLR,BSZ05,KY06,SSEHO12,BCCGG16}, the \textsf{OA} can break users' anonymity whenever he wants, and we do not have any method to verify whether this trust is well placed or not.

One existing attempt to restrict the \textsf{OA}'s power is the proposal of group signatures with message-dependent opening (MDO)~\cite{SEHK+12}, in which the \textsf{OA} can only identify the signers of messages admitted by an additional authority named admitter. However, this solution is still unsatisfactory. Once the \textsf{OA} has obtained admission to open a specific message, he can identify all the users, including some innocent ones, who have ever issued signatures on this specific message. Furthermore, by colluding with the admitter, the \textsf{OA} again is able to open all signatures.

To tackle the discussed above problem, Kohlweiss and Miers~\cite{KM15} put forward the notion of accountable tracing signatures ($\mathsf{ATS}$), which is an enhanced variant of group signatures that has an additional mechanism to make the \textsf{OA} accountable. In an $\mathsf{ATS}$ scheme,  the role of the \textsf{OA} is incorporated into that of the group manager (\textsf{GM}), and there are two kinds of group users: traceable ones and non-traceable ones. Traceable users are treated as in traditional group signatures, i.e., their anonymity can be broken by the \textsf{OA}/\textsf{GM}.  Meanwhile,  it is infeasible for anyone, including the \textsf{OA}/\textsf{GM}, to trace signatures generated by non-traceable users. When a user joins the group, the \textsf{OA}/\textsf{GM} first has to determine whether this user is traceable and then he issues a corresponding (traceable/nontraceable) certificate to the user. In a later phase, the \textsf{OA}/\textsf{GM} reveals which user he deems traceable using an ``accounting'' algorithm, yielding an intriguing method to enforce his accountability.

As an example, let us consider the surveillance controls of a building, which is implemented using an \textsf{ATS} scheme. On the one hand, the customers in this building would like to have their privacy protected as much as possible. On the other hand, the police who are conducting security check in this building would like to know as much as they can. To balance the interests of these two parties, the police can in advance narrow down some suspects and asks the \textsf{OA}/\textsf{GM} to make these suspected users traceable and the remaining non-suspected users non-traceable. To check whether the suspects  entered the building, the police can ask the \textsf{OA}/\textsf{GM} to open all signatures that were used for authentication at the entrance. Since only the suspects are traceable,  the group manager can only identify them if they indeed entered this building. However, if a standard group signature scheme~(e.g.,~\cite{ACJT00,BMW03,BBS04,BSZ05}) were used, then the privacy of innocent users would be seriously violated. In this situation, one might think that a traceable signature scheme, as suggested by Kiayias, Tsiounis and Yung~\cite{KTY04}, would work.  By requesting a user-specific trapdoor from the \textsf{OA}/\textsf{GM}, the police can trace all the signatures created by the suspects. However, this only achieves privacy of innocent users against  the \emph{police}, but not against the \emph{group authorities}. In fact, in a traceable signature scheme, the \textsf{OA}/\textsf{GM} has the full power to identify the signers of all signatures and hence can violate the privacy of all users without being detected. In contrast, if an $\mathsf{ATS}$ scheme is used, then the \textsf{OA}/\textsf{GM} must later reveal which user he chose to be traceable, thus enabling his accountability.

In~\cite{KM15}, besides demonstrating the feasibility of \textsf{ATS} under generic assumptions, Kohlweiss and Miers also presented an instantiation based on number-theoretic assumptions, which remains the only known concrete \textsf{ATS} construction to date. This scheme, however, is vulnerable against quantum computers due to Shor's algorithm~\cite{Shor94}.
For the sake of not putting all eggs in one basket, it is therefore tempting to build schemes based on post-quantum foundations. In this paper, we investigate the design of accountable tracing signatures based on lattice assumptions, which are currently among the most viable foundations for post-quantum cryptography. Let us now take a look at the closely related and recently active topic of lattice-based group signatures.

\smallskip

\noindent
{\sc Lattice-based group signatures. }
The first lattice-based group signature scheme was introduced by Gordon, Katz and Vaikuntanathan in 2010~\cite{GKV10}. Subsequently, numerous schemes offering improvements in terms of security and efficiency have been proposed~\cite{CNR12,LLLS13-Asiacrypt,LNW15,NZZ15,LLNW16,LLMNW16-dgs,BCN18,PinoLS18}. Nevertheless, regarding the supports of advanced functionalities, lattice-based group signatures are still way behind their number-theoretic-based counterparts. Indeed, there have been known only a few lattice-based schemes~\cite{LNRW18,LMN16,LLMNW16-dgs,LNWX17,LNWX18} that depart from the BMW model~\cite{BMW03} - which deals solely with static groups and which may be too inflexible to be considered for a wide range of real-life applications. In particular, although there was an attempt~\cite{LMN16} to restrict the power of the \textsf{OA} in the MDO sense, the problem of making the \textsf{OA} accountable in the context of lattice-based group signatures is still open. This somewhat unsatisfactory state-of-affairs motivates our search for a lattice-based instantiation of \textsf{ATS}. As we will discuss below, the technical road towards our goal is not straightforward: there are challenges and missing building blocks along the way.

\medskip

\noindent
{\sc Our Results and Techniques.} In this paper, we introduce the first lattice-based accountable tracing signature scheme. The scheme satisfies the security requirements suggested by Kohlweiss and Miers~\cite{KM15}, assuming  the hardness of the Ring Short Integer Solution ($\mathsf{RSIS}$) problem and  the Ring Learning With Errors ($\mathsf{RLWE}$) problem.  As all other known lattice-based group signatures, the security of our scheme is analyzed in the random oracle model. For a security parameter $\lambda$, our $\mathsf{ATS}$ scheme features group public key size and user secret key size  $\widetilde{\mathcal{O}}(\lambda)$. However, the accountability of the \textsf{OA}/\textsf{GM} comes at a price: the signature size is of order $\widetilde{\mathcal{O}}(\lambda^2)$ compared with $\widetilde{\mathcal{O}}(\lambda)$ in a recent scheme by Ling et al.~\cite{LNWX18}.


Let us now give an overview of our techniques. First, we recall that in an ordinary group signature scheme~\cite{BMW03,BSZ05}, to enable traceability, the user is supposed to encrypt his identifying information and prove the well-formedness of the resulting ciphertext. In an $\mathsf{ATS}$ scheme, however, not all users are traceable. We thus would need a mechanism to distinguish between traceable users and non-traceable ones. A possible method is to let traceable users encrypt their identities under a public key ($\mathsf{pk}$) such that only the \textsf{OA}/\textsf{GM} knows the underlying secret key ($\mathsf{sk}$), while for non-traceable users, no one knows the secret key. However, there seems to be no incentive for users to deliberately make themselves traceable. We hence should think of a way to choose traceable users obliviously. An interesting approach is to randomize  $\mathsf{pk}$ to a new public key $\mathsf{epk}$ so that it is infeasible to decide how these keys are related without the knowledge of the secret key and the used randomness.  More specifically, when a user joins the group, the \textsf{OA}/\textsf{GM}  first randomizes $\mathsf{pk}$ to $\mathsf{epk}$ and sends  the latter to the user together with a certificate. The difference between traceable users and non-traceable ones lies in whether \textsf{OA}/\textsf{GM} knows the underlying secret key. Thanks to the obliviousness property of the randomization, the users are unaware of whether they are traceable. Then, when signing messages, the user encrypts his identity using his own randomized key $\mathsf{epk}$ (note that this ``public key'' should be kept secret) and proves the well-formedness of the ciphertext.  Several questions regarding this approach then arise. What special kind of encryption scheme should we use? How to randomize the public key in order to get the desirable obliviousness? More importantly, how could the user prove the honest execution of encryption if the underlying encryption key is secret?

To address the first two questions, Kohlweiss and Miers~\cite{KM15} proposed the notion of key-oblivious encryption ($\mathsf{KOE}$) - a public-key encryption scheme in which one can randomize public keys in an oblivious manner.  Kohlweiss and Miers showed that a $\mathsf{KOE}$ scheme can be built from a key-private homomorphic public-key encryption scheme.  They then gave an explicit construction based on the ElGamal cryptosystem~\cite{ElGamal84}, where $\mathsf{epk}$ is obtained by multiplying $\mathsf{pk}$ by a ciphertext of~$1$. When adapting this idea into the lattice setting, however, one has to be careful. In fact, we observe that an implicit condition for the underlying key-private public-key encryption scheme is that its public key and ciphertext should have \emph{the same algebraic form}\footnote{This condition is needed so that $\mathsf{epk}$ can be computed as $\mathsf{pk} ~\cdot~ \mathsf{enc}(1)$ (multiplicative homomorphic) or $\mathsf{pk} ~+~ \mathsf{enc}(0)$ (additive homomorphic).}, which is often not the case for the schemes in the lattice setting, e.g.,~\cite{Regev05,GPV08}. Furthermore,  lattice-based encryption schemes from the Learning with Errors (\textsf{LWE}) problem or its ring version \textsf{RLWE} often involve noise terms that grow quickly when one performs homomorphic operations over ciphertexts. Fortunately, we could identify a suitable candidate: the \textsf{RLWE}-based encryption scheme proposed by Lyubashevsky, Peiker and Regev (LPR)~\cite{LPR13-ACM}, for which both the public key and the ciphertext consist of a pair of ring elements. Setting the parameters carefully to control the noise growth in LPR, we are able to adapt the blueprint of~\cite{KM15} into the lattice setting and obtain a lattice-based \textsf{KOE} scheme.

To tackle the third question, we need a zero-knowledge ($\mathsf{ZK}$) protocol for proving well-formedness of the ciphertext under a hidden encryption key, which is quite challenging to build in the \textsf{RLWE} setting. Existing $\mathsf{ZK}$ protocols from lattices belong to two main families. One line of research~\cite{Lyu09,Lyu12,BCKLN14,BKLP15,LyuN17,LyuS18} designed very elegant approximate $\mathsf{ZK}$ proofs for $\mathsf{(R)LWE}$ and $\mathsf{(R)SIS}$ relations by employing rejection sampling techniques. While  these proofs are quite efficient and compact, they only handle linear relations. In other words, they can only prove knowledge of a short vector $\mathbf{x}$ satisfying $\mathbf{y}=\mathbf{A}\cdot \mathbf{x}\bmod q$, for \emph{public} $\mathbf{A}$ and  public $\mathbf{y}$. This seems insufficient for our purpose.  Another line of research~\cite{LNSW13,LNW15,CNW16,LLNW16,LLMNW16-ge,LNWX18} developed decomposition/ extension/permutation techniques that operate in Stern's framework~\cite{Ste96}. Although Stern-like protocols are less practical than those in the first family, they are much more versatile and can even deal with quadratic relations~\cite{LLMNW16-ge}. More precisely, as demonstrated by Libert et al.~\cite{LLMNW16-ge} one can employ Stern-like techniques to prove knowledge of \emph{secret-and-certified} $\mathbf{A}$ together with short secret vector $\mathbf{x}$ satisfying $\mathbf{y}=\mathbf{A}\cdot \mathbf{x}\bmod q$. Thus, Libert et al.'s work appears to be the ``right'' stepping stone for our case. However, in~\cite{LLMNW16-ge}, quadratic relations were considered only in the setting of general lattices, while here we have to deal with the ring setting, for which the multiplication operation is harder to express, capture and prove in zero-knowledge. Nevertheless we manage to adapt their techniques into the ring lattices and obtain the desired technical building block. 

As discussed so far, we have identified the necessary ingredients - the LPR encryption scheme and Stern-like $\mathsf{ZK}$ protocols - for upgrading a lattice-based ordinary group signature to a lattice-based accountable tracing signature. Next, we need to find a lattice-based ordinary group signature scheme that is compatible with the those ingredients. To this end, we work with Ling et al.'s scheme~\cite{LNWX18}, that also employs the LPR system for its tracing layer and Stern-like techniques for proving knowledge of a valid user certificate (which is a Ducas-Micciancio signature~\cite{DM14,DM14-eprint} based on the hardness of the Ring Short Integer Solution (\textsf{RSIS}) problem). We note that the scheme from~\cite{LNWX18} achieves constant-size signatures, which means that the signature size is independent of the number of users. As a by-product, our signatures are also constant-size (although our constant is larger, due to the treatment of quadratic relations).

A remaining aspect is how to enable the accountability of the \textsf{OA}/\textsf{GM}. To this end, we let the latter reveal the choice (either traceable or non-traceable) for a given user together with the randomness used to obtain the randomized public key. The user then checks whether his $\mathsf{epk}$ was computed as claimed. However, the \textsf{OA}/\textsf{GM} may claim a traceable user to be non-traceable  by giving away malicious randomness and accusing that the user had changed $\mathsf{epk}$ by himself. To ensure non-repudiation, \textsf{OA}/\textsf{GM} is required to sign $\mathsf{epk}$ and the users' identifying information when registering the user into the group. This mechanism in fact also prevents dishonest users from choosing  non-traceable $\mathsf{epk}$ by themselves.

The obtained \textsf{ATS} scheme is then proven secure in the random oracle model under the \textsf{RSIS} and \textsf{RLWE} assumptions, according to the security requirements put forward by Kohlweiss and Miers~\cite{KM15}. On the efficiency front, as all known lattice-based group signatures with advanced functionalities, our scheme is still far from being practical. We, however, hope that our result will inspire more efficient constructions in the near future. \smallskip

\noindent
{\sc Organization.} In Section~\ref{section:ATS-background}, we recall some background materials. In Section~\ref{section:ATS-KOE}, we describe our key-oblivious encryption scheme from lattice assumptions. 
Our accountable tracing signature scheme is presented in Section~\ref{section:ATS-ATS}.

\section{Background} \label{section:ATS-background}

{\sc Notations. }
For a positive integer $n$, define the set $\{1,2,\ldots,n\}$ as $[n]$,  the set $\{0,1,\ldots,n\}$ as $[0,n]$, and the set containing all the integers from $-n$ to $n$ as $[-n,n]$. Denote the set of all positive integers as $\mathbb{Z}^{+}$. If $S$ is a finite set, then $x \xleftarrow{\$} S$ means that $x$ is chosen uniformly at random from $S$.  Let $\mathbf{a}\in\mathbb{R}^{m_1}$ and $\mathbf{b}\in\mathbb{R}^{m_2}$ be two vectors for positive integers $m_1,m_2$. Denote $(\mathbf{a}\|\mathbf{b})\in\mathbb{R}^{m_1+m_2}$, instead of $(\mathbf{a}^\top,\mathbf{b}^{\top})^{\top}$, as the concatenation of these two vectors.

\subsection{Rings, RSIS and RLWE}\label{subsection:rings}
Let $q \geq 3$ be a positive integer and let $\mathbb{Z}_q = [-\frac{q-1}{2}, \frac{q-1}{2}]$. In this work, let us consider rings $R = \mathbb{Z}[X]/(X^n+1)$ and $R_q = (R/qR)$, where $n$ is a power of $2$.

Let $\tau$ be the coefficient embedding $\tau: R_q \rightarrow \mathbb{Z}_q^n$ that maps a ring element $v = v_0 + v_1 \cdot X + \ldots + v_{n-1}\cdot X^{n-1} \in R_q$ to a vector $\tau(v) = (v_0, v_1, \ldots, v_{n-1})^\top$ over $\mathbb{Z}_q^n$. Define the ring homomorphism $\mathsf{rot}: R_q \rightarrow \mathbb{Z}_q^{n \times n}$that maps a ring element $a \in R_q$ to a matrix $\mathsf{rot}(a) = \big[\tau(a) \mid \tau(a\cdot X) \mid \cdots \mid \tau(a \cdot X^{n-1})\big]$ over $\mathbb{Z}_q^{n \times n}$ (see, e.g., \cite{Micciancio07,Xagawa15}). Using these two functions, the element product $y = a\cdot v$ over $R_q$ can be interpreted as the matrix-vector multiplication $\tau(y) = \mathsf{rot}(a) \cdot \tau(v)$ over $\mathbb{Z}_q$.

When working with vectors and matrices over $R_q$, we  generalize the notations $\tau$ and $\mathsf{rot}$ in the following way. For a vector $\mathbf{v} = (v_1, \ldots, v_m)^\top \in R_q^m$, define $\tau(\mathbf{v}) = (\tau(v_1) \| \cdots \| \tau(v_m)) \in \mathbb{Z}_q^{mn}$. For a matrix $\mathbf{A} = [a_1 \mid \cdots \mid a_m] \in R_q^{1 \times m}$, define $\mathsf{rot}(A)$ to be the matrix
\[
\mathsf{rot}(\mathbf{A}) = \big[\mathsf{rot}(a_1) \mid \cdots \mid \mathsf{rot}(a_m)\big] \in \mathbb{Z}_q^{n \times mn}.
\]
Using the generalized notations, we can interpret $y = \mathbf{A} \cdot \mathbf{v}$ over $R_q$ as matrix-vector multiplication $\tau(y) = \mathsf{rot}(\mathbf{A}) \cdot \tau(\mathbf{v})$ over $\mathbb{Z}_q$.
\smallskip

For $a = a_0 + a_1 \cdot X + \ldots + a_{n-1}\cdot X^{N-1} \in R$, we define $\|a\|_\infty = \max_i(|a_i|)$. Similarly, for vector $\mathbf{b} = (b_1, \ldots, b_{\mathfrak{m}})^\top \in R^{\mathfrak{m}}$, we define $\|\mathbf{b}\|_\infty = \max_j(\|b_j\|_\infty)$.

We now recall the average-case problems $\mathsf{RSIS}$ and $\mathsf{RLWE}$ associated with the rings $R, R_q$, as well as their hardness results.

\begin{definition}[\cite{LM06,PR06,LMPR08}]
Given a uniform  matrix $\mathbf{A}=[a_1|a_2|\cdots|a_m]$ over $R_q^{1\times m}$, the $\mathsf{RSIS}_{n,m,q,\beta}^{\infty}$ problem asks to find a ring vector $\mathbf{b}=(b_1,b_2,\ldots,b_m)^\top$ over  $R^{m}$ such that $\mathbf{A}\cdot \mathbf{b}=a_1 \cdot b_1 + a_2\cdot b_2+\cdots + a_m \cdot b_m = 0$ over $R_q$ and $0< \|\mathbf{b}\|_{\infty}\leq \beta$.

\end{definition}

For polynomial bounded $m,\beta$ and   $q\geq \beta \cdot \widetilde{\mathcal{O}}(\sqrt{n})$, it was proven that the $\mathsf{RSIS}_{n,m,q,\beta}^{\infty}$ problem is no easier than the $\mathsf{SIVP}_{\gamma}$ problem in any ideal in the ring $R$,  where  $\gamma=\beta \cdot \widetilde{\mathcal{O}}(\sqrt{nm})$ (see~\cite{LM06,PR06,LS15}).

\begin{definition}[\cite{LPR10-EC,SSTX09,LPR13-ACM}]
For positive integers $n,m,q\geq 2$ and a probability distribution $\chi$ over the ring $R$, define a distribution $A_{s, \chi}$ over $R_q \times {R}_q$ for $s \xleftarrow{\$}  {R}_q$ in the following way: it first samples a uniformly random element $a\in R_q$, an error element $e\hookleftarrow\chi$, and then outputs $(a,a\cdot s+e)$. The target of the $\mathsf{RLWE}_{n,m,q,\chi}$ problem is to distinguish $m$ samples chosen from a uniform distribution over $R_q \times {R}_q$ and $m$ samples chosen from the distribution $A_{s, \chi}$ for $s \xleftarrow{\$}  {R}_q$.
\end{definition}
Let  $q\geq 2$ and $B=\widetilde{\mathcal{O}}(\sqrt{n})$ be positive integers. $\chi$ is a distribution over $R$ which efficiently outputs samples $e \in R$ with $\|e\|_\infty \leq B$ with overwhelming probability in $n$. Then there is a  quantum reduction from the $\mathsf{RLWE}_{n,m,q,\chi}$ problem to the  $\mathsf{SIVP}_{\gamma}$ problem  and the $\mathsf{SVP}_{\gamma}$ problem
in any ideal in the ring $R$, where $\gamma=\widetilde{\mathcal{O}}(\sqrt{n}\cdot q/B)$ (see~\cite{LPR10-EC,BGV12,LS15,PeikertRS17}). It is shown that the hardness of the $\mathsf{RLWE}$ problem is preserved when the secret $s$ is sampled from the error distribution $\chi$ (see~\cite{LPR10-EC,BGV12}).

\subsection{Decompositions}\label{subsection:decomposition}
We now recall the integer decomposition technique from~\cite{LNSW13}. For any positive integer $B$,  let   $\delta_B:=\lfloor \log_2 B\rfloor +1 = \lceil \log_2(B+1)\rceil$ and the sequence $B_1, \ldots, B_{\delta_B}$, where $B_j = \lfloor\frac{B + 2^{j-1}}{2^j} \rfloor$, for any $  j \in [\delta_B]$. It is then verifiable that $\sum_{j=1}^{\delta_B} B_j = B$. In addition, for any integer $a\in[0,B]$, one can decompose $a$ into a vector of the form $\mathsf{idec}_B(a)=(a^{(1)},a^{(2)}, \ldots, a^{(\delta_B)})^\top \in \{0,1\}^{\delta_B}$, satisfying $(B_1,B_2,\ldots,B_{\delta_B})\cdot \mathsf{idec}_B(a)=a$. The procedure of the decomposition is presented below in a deterministic manner.
\begin{enumerate}
\item $a': = a$
\item For $j=1$ to $\delta_B$ do:
    \begin{enumerate}[(i)]
    \item If $a' \geq B_j$ then $a^{(j)}: = 1$, else $a^{(j)}: = 0$;
    \item $a': = a' - B_j\cdot a^{(j)}$.
    \end{enumerate}
\item Output $\mathsf{idec}_B(a) = (a^{(1)}, \ldots, a^{(\delta_B)})^\top$.
\end{enumerate}

In~\cite{LNWX18}, the above decomposition procedure is also utilized to deal with polynomials in the ring $R_q$. Specifically, for $B \in [1, \frac{q-1}{2}]$, define the injective function $\rdec_B$ that maps $a \in R_q$ with $\|a\|_\infty \leq B$ to $\mathbf{a} \in  R^{\delta_B}$ with $\|\mathbf{a}\|_\infty \leq 1$, which works as follows.

\begin{enumerate}
\item Let $\tau(a) = (a_0, \ldots, a_{n-1})^\top$. For each $i$, let $\sigma(a_i) = 0$ if $a_i =0$; $\sigma(a_i) = -1$ if $a_i <0$; and $\sigma(a_i) = 1$ if $a_i >0$. \smallskip
\item $\forall \hspace*{1pt} i$, compute $\mathbf{w}_i = \sigma(a_i)\cdot \mathsf{idec}_B(|a_i|) = (w_{i,1}, \ldots, w_{i,\delta_B})^\top \in \{-1,0,1\}^{\delta_B}$. \smallskip
\item Form the vector $\mathbf{w} = (\mathbf{w}_0 \| \ldots \| \mathbf{w}_{n-1}) \in \{-1,0,1\}^{n\delta_B}$, and let $\mathbf{a} \in  R^{\delta_B}$ be the vector such that $\tau(\mathbf{a}) = \mathbf{w}$. \smallskip
\item Output $\rdec_B(a) = \mathbf{a}$.
\end{enumerate}
To deal with ring vectors of dimension $m\in\mathbb{Z}^{+}$ and of infinity bound $B\in\mathbb{Z}^{+}$, we generalize the notion $\rdec_B(\mathbf{v})$ in the following way: it maps a ring vector $\mathbf{v} = (v_1, \ldots, v_m)^\top\in R_q^{m}$ such that $\|\mathbf{v}\|_\infty \leq B$ to a vector $\rdec_B(\mathbf{v}) = \big(\rdec_B(v_1) \| \ldots \| \rdec_B(v_m)\big) \in R^{m\delta_B}$, whose coefficients are in the set $\{-1,0,1\}$. \smallskip

\noindent
Now, $\forall \hspace*{1pt} m, B \in \mathbb{Z}^+$, we define matrices $\mathbf{H}_{B} \in \mathbb{Z}^{n \times n\delta_B}$ and $\mathbf{H}_{m, B}  \in \mathbb{Z}^{nm \times nm\delta_B}$ as
\begin{eqnarray*}
\mathbf{H}_{B} = \begin{bmatrix} B_1 \ldots  B_{\delta_B} &  & & & \\
	  &   &  &  \ddots  &  \\
			  &   &  &    & B_1 \ldots  B_{\delta_B}  \\
\end{bmatrix} , \hspace*{6.8pt}\text{ and } \hspace*{6.8pt}
                        \mathbf{H}_{m, B} = \left[
                           \begin{array}{ccc}
                             \mathbf{H}_B &  &  \\
                              & \ddots &  \\
                              &  & \mathbf{H}_B \\
                           \end{array}
                         \right].
\end{eqnarray*}
Then we have
\[
\tau(a) = \mathbf{H}_{B} \cdot \tau(\rdec_B(a)) \bmod q \hspace*{6.8pt}\text{ and }\hspace*{6.8pt} \tau(\mathbf{v}) = \mathbf{H}_{m,B}\cdot \tau(\rdec_B(\mathbf{v})).
\]

For simplicity reason, when $B = \frac{q-1}{2}$, we will use the notation $\rdec$ instead of $\rdec_{\frac{q-1}{2}}$, and $\mathbf{H}$ instead of $\mathbf{H}_{\frac{q-1}{2}}$.

\subsection{A Variant of the Ducas-Micciancio Signature scheme}\label{subsection:DM-signatures}

We recall the stateful and adaptively secure version of Ducas-Micciancio signature scheme~\cite{DM14,DM14-eprint}, which is used to enroll new users in our construction.

Following~\cite{DM14,DM14-eprint}, throughout this work, for any real constants $c>1$ and $\alpha_0\geq \frac{1}{c-1}$, define a series of sets $\mathcal{T}_j=\{0,1\}^{c_j}$ of lengths $c_j=\lfloor \alpha_0 c^{j}\rfloor$ for $j\in[d]$, where $d\geq \log_c(\omega(\log n))$. For each tag $t=(t_0,t_1,\ldots,t_{c_j})^{\top}\in\mathcal{T}_j$ for $j\in[d]$, associate it with a ring element $t(X)=\sum_{k=0}^{c_j}t_k\cdot X^k\in R_q$. Let $c_0=0$ and then define $t_{[i]}(X)=\sum_{k=c_{i-1}}^{c_i-1}t_k\cdot X^k$ and $t_{[i]}=(t_{c_{i-1}},\ldots,t_{c_{i}-1})^\top$ for $i\in[j]$. Then one can check $t=(t_{[1]}\|t_{[2]}\|\cdots\|t_{[j]})$ and $t(X)=\sum_{i=1}^{j}t_{[i]}(X)$. \smallskip

This variant works with the following parameters.
\begin{itemize}
\item   Let $n,m,q,k$ be some positive integers such that $n\geq 4$ is a power of $2$, $m\geq 2\lceil\log q\rceil +2$, and $q=3^k$. Define the rings $R=\mathbb{Z}[X]/(X^n+1)$ and $R_q=R/qR$.

\item Let the message dimension be $m_s =\mathrm{poly}(n)$. Also, let $\ell=\lfloor\log \frac{q-1}{2}\rfloor +1$, and $\overline{m} = m + k$ and $\overline{m}_s=m_s\cdot \ell$.

\item Let integer $\beta=\widetilde{\mathcal{O}}(n)$ and integer $d$ and sequence $c_0,\ldots,c_d$ be as above.

\item Let $S\in\mathbb{Z}$ be a state that is $0$ initially.
\end{itemize}
The public verification key consists of  the following: \[
\mathbf{A}, \mathbf{F}_0 \in R_q^{1 \times \overline{m}}; \hspace*{6.8pt}\mathbf{A}_{[0]}, \ldots, \mathbf{A}_{[d]} \in R_q^{1 \times k};\hspace*{6.8pt}
\mathbf{F}\in R_q^{1 \times \ell};\hspace*{6.8pt} \mathbf{F}_1 \in R_q^{1 \times \overline{m}_s};  \hspace*{6.8pt}u \in R_q
\] while the secret signing key is a Micciancio-Peikert~\cite{MP12} trapdoor matrix  $\mathbf{R}\in R_q^{m\times k}$.

When  signing a message $\mathfrak{m}\in R_q^{m_s}$, the signer first computes $\overline{\mathfrak{m}}=\rdec(\mathfrak{m})\in R^{\overline{m}_s}$, whose coefficients are in the set $\{-1,0,1\}$. He  then performs the following steps.
   \begin{itemize}
     \item Set the tag $t=(t_0,t_1\ldots, t_{c_d-1})^\top\in \mathcal{T}_d$, where $S=\sum_{j=0}^{c_d-1} 2^j\cdot t_j$, and
    compute $\mathbf{A}_{t} = [\mathbf{A}|\mathbf{A}_{[0]}+\sum_{i=1}^{d}t_{[i]}\mathbf{A}_{[i]}] \in R_q^{1\times (\overline{m} + k)}$.  Update $S$ to $S+1$. \smallskip
   \item Choose  $\mathbf{r}\in R^{\overline{m}}$ with $\|\mathbf{r}\|_{\infty}\leq \beta$.
   \smallskip
   \item Let $y=\mathbf{F}_0 \cdot \mathbf{r}+\mathbf{F}_1\cdot \overline{\mathfrak{m}}\in R_q$ and  ${u}_{p}=\mathbf{F}\cdot \rdec(y)+u \in R_q$.
    \item Employing the trapdoor matrix $\mathbf{R}$, produce a ring vector $\mathbf{v}\in R^{\overline{m} + k}$  with $\mathbf{A}_t\cdot \mathbf{v}=u_p$ over the ring $R_q$ and $\|\mathbf{v}\|_{\infty}\leq \beta$.

    \item Return the tuple $(t,\mathbf{r},\mathbf{v})$ as a signature for the message $\mathfrak{m}$.

      \end{itemize}
To check the validity of the tuple $(t,\mathbf{r},\mathbf{v})$ with respect to message $\mathfrak{m}\in R_q^{m_s}$, the verifier first computes the matrix $\mathbf{A}_t$ as above and verifies the following conditions:
 \begin{eqnarray*}
  \begin{cases}
    \mathbf{A}_t\cdot\mathbf{v}=\mathbf{F}\cdot\rdec(\mathbf{F}_0\cdot\mathbf{r}+\mathbf{F}_1\cdot\rdec(\mathfrak{m}))+u,\\
    \|\mathbf{r}\|_{\infty}\leq \beta,~~\|\mathbf{v}\|_{\infty}\leq \beta.
  \end{cases}
  \end{eqnarray*}
He outputs $1$ if all these three conditions hold and $0$ otherwise.

 \begin{lemma}[\cite{DM14,DM14-eprint}]
 Given at most polynomially bounded number of signature queries, the above variant is existentially unforgeable against adaptive chosen message attacks assuming the hardness of the $\mathsf{RSIS}_{n,\overline{m},q,\widetilde{\mathcal{O}}(n^2)}$ problem.
\end{lemma}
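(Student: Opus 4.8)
The plan is to give a reduction from $\mathsf{RSIS}_{n,\overline{m},q,\widetilde{\mathcal{O}}(n^2)}$ to existential forgery, following the blueprint of Ducas and Micciancio~\cite{DM14,DM14-eprint} specialized to this stateful variant. Given a challenge matrix, I would embed it into the verification key in the Micciancio--Peikert $G$-trapdoor form~\cite{MP12}, arranged so that the reduction can sample short preimages under $\mathbf{A}_t$ for every tag that will actually be used to answer signing queries, while being unable to do so --- and hence forced to extract a solution --- for the tag carried by the adversary's forgery. Because the scheme is stateful and the tag of the $S$-th signature is just the binary encoding of $S$, the sequence of query tags is distinct and predictable; this lets me program the matrices $\mathbf{A}_{[0]},\ldots,\mathbf{A}_{[d]}$ (exploiting the invertibility of tag differences in $R_q$) so that $\mathbf{A}_t$ admits a trapdoor precisely on the query tags.

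With this setup, answering a signing query on $\mathfrak{m}$ amounts to choosing $\mathbf{r}$ as prescribed, computing $y$ and $u_p$ honestly, and using the programmed trapdoor to produce a short $\mathbf{v}$ with $\mathbf{A}_t\cdot\mathbf{v}=u_p$. A standard preimage-sampling argument guarantees that the simulated $(t,\mathbf{r},\mathbf{v})$ is statistically close to a genuine signature, so the adversary's view matches the real game. (One could instead mimic the original proof and reprogram $\mathbf{r}$ chameleon-hash style to hit the target $u_p$; either route yields correctly distributed signatures.)

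The forgery $(t^*,\mathbf{r}^*,\mathbf{v}^*)$ on a fresh $\mathfrak{m}^*$ is then dispatched by a case split. If $t^*$ differs from all query tags, the matrix $\mathbf{A}_{t^*}$ lies in the ``punctured'' form for which the embedded challenge carries no trapdoor, and the verification identity $\mathbf{A}_{t^*}\cdot\mathbf{v}^*=\mathbf{F}\cdot\rdec(\mathbf{F}_0\cdot\mathbf{r}^*+\mathbf{F}_1\cdot\rdec(\mathfrak{m}^*))+u$ rearranges into a short nonzero kernel vector of the challenge matrix. If instead $t^*$ coincides with the tag of some (by statefulness, unique) query, I subtract the two valid verification equations sharing the same $\mathbf{A}_{t^*}$ and $u$, obtaining $\mathbf{A}_{t^*}\cdot(\mathbf{v}^*-\mathbf{v})=\mathbf{F}\cdot(\rdec(y^*)-\rdec(y))$; a further descent (a collision in $\rdec(y)$, then in $y$, then in $(\mathbf{r},\overline{\mathfrak{m}})$) yields a short nonzero relation in one of $\mathbf{A}_{t^*}$, $\mathbf{F}$, or $[\mathbf{F}_0\mid\mathbf{F}_1]$, each of which is an $\mathsf{RSIS}$ solution for the embedded instance. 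For the norm bookkeeping, every short vector in play satisfies $\|\cdot\|_\infty\le\beta=\widetilde{\mathcal{O}}(n)$, and the ring multiplications used to unfold $\mathbf{A}_t$ and to combine the above differences each cost a factor of $n$ in $R$, so the extracted vector has infinity norm $\widetilde{\mathcal{O}}(n^2)$, matching the claimed parameter.

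The main obstacle is the tag-dependent trapdoor bookkeeping of the middle step: one must set up $\mathbf{A}_{[0]},\ldots,\mathbf{A}_{[d]}$ so that the tag differences $t^*-t$ behave invertibly in $R_q$ exactly where needed, ensuring that simulation succeeds on all query tags yet extraction is forced at $t^*$, all while keeping the simulated signatures statistically faithful. This is precisely the delicate part inherited from~\cite{DM14,DM14-eprint,MP12}, and it is what the ring structure complicates relative to the generic-lattice statement, since the required invertibility and the growth of norms under ring multiplication both have to be re-established over $R_q$.
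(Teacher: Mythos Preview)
The paper does not supply its own proof of this lemma: it is stated with a citation to~\cite{DM14,DM14-eprint} and left at that, so there is nothing in the paper to compare your argument against. Your sketch is a faithful outline of the Ducas--Micciancio security reduction as it appears in those references---embedding the $\mathsf{RSIS}$ challenge via the Micciancio--Peikert trapdoor, programming $\mathbf{A}_{[0]},\ldots,\mathbf{A}_{[d]}$ so as to puncture the trapdoor at the forgery tag, and handling the tag-reuse case by the confined-guessing/chameleon-hash descent through $\mathbf{F},\mathbf{F}_0,\mathbf{F}_1$---so it matches the intended external proof.
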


\subsection{Zero-Knowledge Argument of Knowledge}\label{subsection:Stern}
We will work with statistical zero-knowledge argument systems, namely, interactive protocols where the \textsf{ZK} property holds against \emph{any} cheating verifier, while the soundness property only holds against \emph{computationally bounded} cheating provers. More formally, let the set of statements-witnesses $\mathrm{R} = \{(y,w)\} \in \{0,1\}^* \times \{0,1\}^*$ be an \textsf{NP} relation. A two-party game $\langle \mathcal{P},\mathcal{V} \rangle$ is called an interactive argument system for the relation $\mathrm{R}$ with soundness error $e$ if the following two conditions hold:
\begin{itemize}
    \item {\sf Completeness.} If $(y,w) \in \mathrm{R}$ then $\mathrm{Pr}\big[\langle \mathcal{P}(y,w),\mathcal{V}(y) \rangle =1\big]=1.$
    \item {\sf Soundness.} If  $(y,w) \not \in \mathrm{R}$, then $\forall$ PPT $\widehat{\mathcal{P}}$: \hspace*{2.5pt}$\mathrm{Pr}[\langle \widehat{\mathcal{P}}(y,w),\mathcal{V}(y) \rangle =1] \leq e.$
\end{itemize}
An argument system is called statistical \textsf{ZK} if for any~$\widehat{\mathcal{V}}(y)$, there exists a PPT simulator $\mathcal{S}(y)$ having oracle access to~$\widehat{\mathcal{V}}(y)$ and producing a simulated transcript that is statistically close to the one of the real interaction between $\mathcal{P}(y,w)$ and $\widehat{\mathcal{V}}(y)$. A related notion is argument of knowledge, which, for three-move protocols (commitment-challenge-response), requires the existence of a PPT extractor taking as input a set of valid transcripts with respect to all possible values of the ``challenge'' to the same ``commitment'' and outputting $w'$ such that $(y,w') \in \mathrm{R}$.
\smallskip

The statistical zero-knowledge arguments of knowledge ($\mathsf{ZKAoK}$) presented in this work are Stern-like~\cite{Ste96} protocols. In particular, they are $\Sigma$-protocols in the generalized sense defined in~\cite{JKPT12,BCKLN14}
 (where~$3$ valid transcripts are needed for extraction, instead of just~$2$). Stern's protocol was originally proposed in the context of code-based cryptography, and was later adapted into the lattice setting by Kawachi et al.~\cite{KTX08}. Subsequently, it was empowered by Ling et al.~\cite{LNSW13} to handle the matrix-vector relations where the secret vectors are of small infinity norm, and further developed to design various lattice-based schemes. Libert et al.~\cite{LLMNW16-dgs} put forward an abstraction of Stern's protocol to capture a wider range of lattice-based relations. Now let us recall it. \smallskip

\noindent{\bf{An Abstraction of Stern's Protocol.}} Let integers $q,K, L$ be positive such that $L\geq K$ and $q \geq 2$, and $\mathsf{VALID}\subset\{-1,0,1\}^L$. Given a finite set $\mathcal{S}$,  associate every $\eta \in \mathcal{S}$ with a permutation $\Gamma_\eta$ of $L$ elements such that the following conditions hold:
\begin{eqnarray}\label{eq:zk-equivalence}
\begin{cases}
\mathbf{w} \in \mathsf{VALID} \hspace*{2.5pt} \Longleftrightarrow \hspace*{2.5pt} \Gamma_\eta(\mathbf{w}) \in \mathsf{VALID}, \\
\text{If } \mathbf{w} \in \mathsf{VALID} \text{ and } \eta \text{ is uniform in } \mathcal{S}, \text{ then }  \Gamma_\eta(\mathbf{w}) \text{ is uniform in } \mathsf{VALID}.
\end{cases}
\end{eqnarray}
Our target is  to construct a statistical $\mathsf{ZKAoK}$ for the abstract relation $\mathrm{R_{abstract}}$ of the following form:
\begin{eqnarray*}
\mathrm{R_{abstract}} = \big\{(\mathbf{M}, \mathbf{u}), \mathbf{w} \in \mathbb{Z}_q^{K \times L} \times \mathbb{Z}_q^K \times \mathsf{VALID}: \mathbf{M}\cdot \mathbf{w} = \mathbf{u} \bmod q.\big\}
\end{eqnarray*}

To obtain the desired $\mathsf{ZKAoK}$ protocol, one has to prove that $\mathbf{w}\in\mathsf{VALID}$ and $\mathbf{w}$ satisfies the linear equation $\mathbf{M}\cdot \mathbf{w} = \mathbf{u} \bmod q$. To prove $\mathbf{w}\in\mathsf{VALID}$ in a zero-knowledge manner, the prover chooses $\eta \xleftarrow{\$}\mathcal{S}$ and allows the verifier to check $\Gamma_\eta(\mathbf{w}) \in \mathsf{VALID}$. According to the first condition in~(\ref{eq:zk-equivalence}), the verifier should be convinced that $\mathbf{w}$ is indeed from the set $\mathsf{VALID}$. At the same time, the verifier cannot learn any extra information about $\mathbf{w}$ due to the second condition in~(\ref{eq:zk-equivalence}). Furthermore, to prove in $\mathsf{ZK}$ that the linear equation holds, the prover first chooses $\mathbf{r}_w\xleftarrow{\$}\mathbb{Z}_q^{L}$ as a masking vector and then shows the verifier that the equation $\mathbf{M}\cdot (\mathbf{w} + \mathbf{r}_w) = \mathbf{M}\cdot \mathbf{r}_w + \mathbf{u} \bmod q$ holds.

In Figure~\ref{Figure:Interactive-Protocol}, we describe in details  the interaction between two $\mathrm{PPT}$ algorithms prover $\mathcal{P}$ and verifier $\mathcal{V}$. The system utilizes a  statistically hiding and computationally binding string commitment scheme $\mathsf{COM}$ (e.g., the $\mathsf{RSIS}$-based scheme from~\cite{KTX08}).

\begin{figure}[!htbp]

\begin{enumerate}
  \item \textbf{Commitment:} Prover chooses $\mathbf{r}_w \xleftarrow{\$} \mathbb{Z}_q^L$, $\eta \xleftarrow{\$} \mathcal{S}$ and randomness $\rho_1, \rho_2, \rho_3$ for $\mathsf{COM}$.
Then he sends $\mathrm{CMT}= \big(C_1, C_2, C_3\big)$ to the verifier, where
    \begin{gather*}
        C_1 =  \mathsf{COM}(\eta, \mathbf{M}\cdot \mathbf{r}_w \bmod q; \rho_1), \hspace*{5pt}
        C_2 =  \mathsf{COM}(\Gamma_{\eta}(\mathbf{r}_w); \rho_2), \\
        C_3 =  \mathsf{COM}(\Gamma_{\eta}(\mathbf{w} + \mathbf{r}_w \bmod q); \rho_3).
    \end{gather*}

  \item \textbf{Challenge:} $\mathcal{V}$ sends back a challenge $Ch \xleftarrow{\$} \{1,2,3\}$ to $\mathcal{P}$.
  \item \textbf{Response:} According to the choice of $Ch$, $\mathcal{P}$ sends back  $\mathrm{RSP}$ computed in the following way:
\begin{itemize}
\item $Ch = 1$: Let $\mathbf{t}_{w} = \Gamma_{\eta}(\mathbf{w})$, $\mathbf{t}_{r} = \Gamma_{\eta}(\mathbf{r}_w)$, and $\mathrm{RSP} = (\mathbf{t}_w, \mathbf{t}_r, \rho_2, \rho_3)$. \smallskip

\item $Ch = 2$: Let $\eta_2 = \eta$, $\mathbf{w}_2 = \mathbf{w} + \mathbf{r}_w \bmod q$, and
    $\mathrm{RSP} = (\eta_2, \mathbf{w}_2, \rho_1, \rho_3)$. \smallskip
\item $Ch = 3$: Let $\eta_3 = \eta$, $\mathbf{w}_3 = \mathbf{r}_w$, and
 $\mathrm{RSP} = (\eta_3, \mathbf{w}_3, \rho_1, \rho_2)$.
\end{itemize}
\end{enumerate}
\textbf{Verification:}  When receiving $\mathrm{RSP}$ from $\mathcal{P}$, $\mathcal{V}$ performs as follows:
          \begin{itemize}
            \item $Ch = 1$: Check that $\mathbf{t}_w \in \mathsf{VALID}$, $C_2 = \mathsf{COM}(\mathbf{t}_r; \rho_2)$, ${C}_3 = \mathsf{COM}(\mathbf{t}_w + \mathbf{t}_r \bmod q; \rho_3)$. \smallskip

             \item $Ch = 2$: Check that $C_1 = \mathsf{COM}(\eta_2, \mathbf{M}\cdot \mathbf{w}_2 - \mathbf{u} \bmod q; \rho_1)$, ${C}_3 = \mathsf{COM}(\Gamma_{\eta_2}(\mathbf{w}_2); \rho_3)$. \smallskip

            \item $Ch = 3$: Check that $C_1 =  \mathsf{COM}(\eta_3, \mathbf{M}\cdot \mathbf{w}_3; \rho_1), \hspace*{5pt}
        C_2 =  \mathsf{COM}(\Gamma_{\eta_3}(\mathbf{w}_3); \rho_2).$

          \end{itemize}
          In each case, $\mathcal{V}$ returns $1$ if and only if all the conditions hold.
\caption{Stern-like $\mathsf{ZKAoK}$ for the relation $\mathrm{R_{abstract}}$.}
\label{Figure:Interactive-Protocol}
\end{figure}

\begin{theorem}[\cite{LLMNW16-dgs}]\label{Theorem:zk-protocol}
Let $\mathsf{COM}$ be a statistically hiding and computationally binding string commitment scheme. Then  the interactive protocol depicted in Figure~\ref{Figure:Interactive-Protocol} is a statistical \emph{$\mathsf{ZKAoK}$} with perfect completeness, soundness error~$2/3$, and communication cost~$\mathcal{O}(L\log q)$. Specifically:
\begin{itemize}
\item There exists a polynomial-time simulator  that on input $(\mathbf{M}, \mathbf{u})$,  with probability $2/3$ it outputs an accepted transcript that is within statistical distance from the one produced by an honest prover who knows the witness. \smallskip
\item  There exists a polynomial-time algorithm that, takes as inputs $(\mathbf{M}, \mathbf{u})$ and three accepting transcripts on $(\mathbf{M}, \mathbf{u})$, $(\mathrm{CMT},1,\mathrm{RSP}_1)$, $(\mathrm{CMT},2,\mathrm{RSP}_2)$, and $(\mathrm{CMT},3,\mathrm{RSP}_3)$, outputs   $\mathbf{w}' \in \mathsf{VALID}$ such that $\mathbf{M}\cdot \mathbf{w}' = \mathbf{u} \bmod q$.
\end{itemize}
\end{theorem}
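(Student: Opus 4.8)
The plan is to establish the three properties—perfect completeness, statistical zero-knowledge, and knowledge soundness with error $2/3$—by the standard three-step analysis of Stern-type protocols, exploiting the two structural conditions in~(\ref{eq:zk-equivalence}) and the hiding/binding of $\mathsf{COM}$. Completeness is a direct verification: for an honest prover, in each branch $Ch \in \{1,2,3\}$ one checks that the opened values satisfy the stated equations. The only nontrivial points are that $\mathbf{t}_w = \Gamma_\eta(\mathbf{w}) \in \mathsf{VALID}$ whenever $\mathbf{w} \in \mathsf{VALID}$ (the first line of~(\ref{eq:zk-equivalence})), and that $\Gamma_\eta$, being a coordinate permutation, is $\mathbb{Z}_q$-linear, so that $\Gamma_\eta(\mathbf{w} + \mathbf{r}_w) = \Gamma_\eta(\mathbf{w}) + \Gamma_\eta(\mathbf{r}_w)$ and $\mathbf{M}(\mathbf{w} + \mathbf{r}_w) - \mathbf{u} = \mathbf{M}\mathbf{r}_w \bmod q$ for an honest witness. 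For the communication cost, note that $\mathrm{CMT}$ consists of three $\mathsf{COM}$-outputs of fixed length, while $\mathrm{RSP}$ contains at most one element $\eta \in \mathcal{S}$, one vector in $\mathbb{Z}_q^L$, and two opening strings; the dominant term is the $\mathbb{Z}_q^L$ vector, giving $\mathcal{O}(L \log q)$ bits.

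For statistical zero-knowledge I would construct a simulator $\mathcal{S}$ that, given only $(\mathbf{M}, \mathbf{u})$ and oracle access to $\widehat{\mathcal{V}}$, first samples a guess $\overline{Ch} \xleftarrow{\$} \{1,2,3\}$ for the challenge it will be \emph{unable} to answer and builds $\mathrm{CMT}$ so that it can respond correctly to the other two values. Concretely, if $\overline{Ch} = 1$ it uses linear algebra to pick any $\mathbf{w}' \in \mathbb{Z}_q^L$ (not necessarily in $\mathsf{VALID}$) with $\mathbf{M}\mathbf{w}' = \mathbf{u} \bmod q$ and forms the commitments exactly as the honest prover would with $\mathbf{w}'$ in place of $\mathbf{w}$; if $\overline{Ch} \in \{2,3\}$ it instead samples $\mathbf{w}' \xleftarrow{\$} \mathsf{VALID}$ and commits honestly, tweaking only the single commitment whose opening it will refuse to reveal. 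In all three cases $\mathcal{S}$ can answer the two challenges distinct from $\overline{Ch}$; if $\widehat{\mathcal{V}}$ happens to send $\overline{Ch}$, $\mathcal{S}$ aborts. Since $\overline{Ch}$ is independent of $\widehat{\mathcal{V}}$'s view prior to the challenge, the abort probability is $1/3$, so $\mathcal{S}$ succeeds with probability $2/3$, as claimed. Conditioned on not aborting, the transcript is statistically close to a real one: the statistical hiding of $\mathsf{COM}$ conceals the unopened commitment, and the second line of~(\ref{eq:zk-equivalence}) guarantees that $\Gamma_\eta(\mathbf{w}')$, $\Gamma_\eta(\mathbf{r}_w)$, and $\mathbf{w}' + \mathbf{r}_w$ are distributed identically to the honest case.

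The crux of the argument—and the step I expect to require the most care—is knowledge soundness. Given three accepting transcripts $(\mathrm{CMT}, 1, \mathrm{RSP}_1)$, $(\mathrm{CMT}, 2, \mathrm{RSP}_2)$, $(\mathrm{CMT}, 3, \mathrm{RSP}_3)$ sharing the same commitment, I would first invoke the computational binding of $\mathsf{COM}$ to force cross-consistency of the openings: comparing the two openings of $C_1$ yields $\eta_2 = \eta_3 =: \eta$ and $\mathbf{M}\mathbf{w}_2 - \mathbf{u} = \mathbf{M}\mathbf{w}_3$; comparing those of $C_2$ yields $\mathbf{t}_r = \Gamma_\eta(\mathbf{w}_3)$; and comparing those of $C_3$ yields $\mathbf{t}_w + \mathbf{t}_r = \Gamma_\eta(\mathbf{w}_2)$, any failure of these equalities exhibiting a binding collision, which occurs with negligible probability. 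Setting $\mathbf{w}' := \mathbf{w}_2 - \mathbf{w}_3$ and using the $\mathbb{Z}_q$-linearity of $\Gamma_\eta$, I then obtain $\mathbf{M}\mathbf{w}' = \mathbf{u} \bmod q$ together with $\Gamma_\eta(\mathbf{w}') = \Gamma_\eta(\mathbf{w}_2) - \Gamma_\eta(\mathbf{w}_3) = (\mathbf{t}_w + \mathbf{t}_r) - \mathbf{t}_r = \mathbf{t}_w$. Since the $Ch = 1$ verification guarantees $\mathbf{t}_w \in \mathsf{VALID}$, the first line of~(\ref{eq:zk-equivalence}) gives $\mathbf{w}' \in \mathsf{VALID}$, so $\mathbf{w}'$ is precisely the witness required by $\mathrm{R_{abstract}}$. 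The delicate bookkeeping lies in tracking which equality each pair of transcripts certifies and in verifying that the combination $\mathbf{w}_2 - \mathbf{w}_3$ simultaneously inherits the linear relation and the $\mathsf{VALID}$ membership; once this is in place, a prover on a false statement can answer at most two of the three challenges for any fixed $\mathrm{CMT}$, yielding soundness error $2/3$.
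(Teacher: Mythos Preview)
The paper itself omits the proof entirely, deferring to~\cite{LLMNW16-dgs}, so there is no in-paper argument to compare against; your proposal is precisely the standard three-branch analysis of Stern-type protocols that appears in that reference, and it is correct. One small imprecision: in the simulator for $\overline{Ch}\in\{2,3\}$ no commitment actually needs to be ``tweaked''---committing honestly with a uniformly random $\mathbf{w}'\in\mathsf{VALID}$ (and fresh $\eta,\mathbf{r}_w$) already lets you answer both remaining challenges, and the unopened commitment is indistinguishable from the honest one by statistical hiding alone.
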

The details of the proof appeared in~\cite{LLMNW16-dgs} and are omitted here.

\subsection{The Refined Permuting Techniques by Ling et al.}\label{subsection:permutation-technique-by-lnwx}

We next recall the permuting techniques recently suggested by Ling et al.~\cite{LNWX18}, which will be used throughout this paper.  \smallskip

\noindent
{\bf Proving that $z \in \{-1,0,1\}$. } Let $b$ an integer.
Denote the integer $b' \in \{-1,0,1\}$ with $b' = b \bmod 3$ as $[b]_3$.  For any $z \in \{-1,0,1\}$, define vector $\mathsf{enc}_3(z)$ in the following manner:
\[
\mathsf{enc}_3(z) = \big([z+1]_3, [z]_3, [z-1]_3\big)^\top  \in \{-1,0,1\}^3.
\]
Namely, $\mathsf{enc}_3(-1) = (0, -1,1)^\top$, $\mathsf{enc}_3(0) = (1, 0,-1)^\top$ and $\mathsf{enc}_3(1) = (-1, 1,0)^\top$.

Let  $e \in \{-1,0,1\}$, define a permutation $\pi_e$ associated to $e$ as follows. It  transforms    vector $\mathbf{v} = (v^{(-1)}, v^{(0)}, v^{(1)})^\top \in \mathbb{Z}^3$ into vector
\[
\pi_e(\mathbf{v}) = (v^{([-e-1]_3)}, v^{([-e]_3)}, v^{([-e+1]_3)})^\top.
\]

It is then verifiable that, for any $z, e \in \{-1,0,1\}$, the equivalence below holds.
\begin{eqnarray}\label{eq:equivalence-enc-3}
\mathbf{v} = \mathsf{enc}_3(z) \hspace*{6.8pt}\Longleftrightarrow\hspace*{6.8pt} \pi_e(\mathbf{v}) = \mathsf{enc}_3([z+e]_3).
\end{eqnarray}

In the context of Stern's protocol, the above equivalence allows us to prove knowledge of $z\in \{-1,0,1\}$, where~$z$ may have other constrains. Towards it, we simply extend  $z$ to $\mathsf{enc}_3(z)$, sample a uniform $e\in\{-1,0,1\}$, and then show the verifier $\pi_{e}(\mathsf{enc}_3(z))$ is of the form $\mathsf{enc}_3([z+e]_3)$. Due to the equivalence in~(\ref{eq:equivalence-enc-3}), the verifier should be convinced that $z$ is in the set $\{-1,0,1\}$. Furthermore, the ``one time pad'' $e$ fully hides the value of  $z$. More importantly, the above technique is extendable so that we can employ the same $e$ for other positions where $z$ appears. An example of that is to prove that $z$ is involved in a product $t \cdot z$, which we now recall.

\smallskip
\noindent
{\bf Proving that $y = t \cdot z$. }
Let $b \in \{0,1\}$, denote  the bit $1-b$ as $\overline{b}$ and the addition operation modulo $2$ as~$\oplus$.

For any $t \in \{0,1\}$ and $z \in \{-1,0,1\}$, let vector $\mathsf{ext}(t,z) \in \{-1,0,1\}^6$ be of the following form:
\begin{eqnarray*}
\hspace*{-8pt}
\mathsf{ext}(t,z) = \big(\hspace*{2.8pt}
\overline{t}\cdot [z\hspace*{-1.5pt}+\hspace*{-1.5pt}1]_3, \hspace*{4.8pt}
t \cdot [z\hspace*{-1.5pt}+\hspace*{-1.5pt}1]_3, \hspace*{4.8pt}
\overline{t} \cdot [z]_3, \hspace*{4.8pt}
t \cdot [z]_3, \hspace*{4.8pt}
\overline{t} \cdot [z\hspace*{-1.5pt}-\hspace*{-1.5pt}1]_3, \hspace*{4.8pt}
t \cdot [z\hspace*{-1.5pt}-\hspace*{-1.5pt}1]_3\hspace*{2.8pt}
\big)^\top.
\end{eqnarray*}
Let $b \in \{0,1\}$ and $e \in \{-1,0,1\}$, define the permutation $\psi_{b,e}(\cdot)$ associated to $b,e$ as follows. It  transforms  vector $$\mathbf{v} =
\big(v^{(0, -1)}, v^{(1,-1)}, v^{(0,0)}, v^{(1,0)}, v^{(0,1)}, v^{(1,1)}\big)^\top \in \mathbb{Z}^6$$
into vector $\psi_{b,e}(\mathbf{v})$ of form
\[
\psi_{b,e}(\mathbf{v}) = \big(
v^{(b, [-e-1]_3)}, \hspace*{1.6pt}
v^{(\overline{b}, [-e-1]_3)}, \hspace*{1.6pt}
v^{(b, [-e]_3)}, \hspace*{1.6pt}
v^{(\overline{b}, [-e]_3)}, \hspace*{1.6pt}
v^{(b, [-e+1]_3)}, \hspace*{1.6pt}
v^{(\overline{b}, [-e+1]_3)}
\big)^\top.
\]

It can be easily checked that for any $t, b \in \{0,1\}$ and any $z,e \in \{-1,0,1\}$, the following equivalence is satisfied.
\begin{eqnarray}\label{eq:zk-product-equiv}
\mathbf{v} = \mathsf{ext}(t,z)
\hspace*{6.8pt}\Longleftrightarrow \hspace*{6.8pt}
\psi_{b,e}(\mathbf{v}) = \mathsf{ext}(\hspace*{1.6pt}t \oplus b, \hspace*{1.6pt}[z + e]_3\hspace*{1.6pt}).
\end{eqnarray}


The same as in the case $z\in\{-1,0,1\}$, the above equivalence~(\ref{eq:zk-product-equiv}) allows us to prove knowledge of $y$, where $y$ is a product of secret integers $ t\in\{0,1\}$ and $z\in\{-1,0,1\}$.



Next, we recall the generalizations of the above two core techniques to prove knowledge of vector $\mathbf{z} \in \{-1,0,1\}^{\mathfrak{m}}$ as well as vector of the form~(\ref{equation:mix-form}). 

\smallskip
\noindent
{\bf Proving that $\mathbf{z} \in \{-1,0,1\}^{\mathfrak{m}}$. } We first generalize the notion $[b]_3$ to $[\mathbf{b}]_3$ for any $\mathbf{b} \in \mathbb{Z}^{\mathfrak{m}}$, where $[\mathbf{b}]_3$ is the vector $\mathbf{b}'$ such that $\mathbf{b}'=\mathbf{b}\bmod 3$ coordinate-wise.

For $\mathbf{z} = ({z}_1, \ldots, z_{\mathfrak{m}})^\top \in \{-1,0,1\}^{\mathfrak{m}}$, define the following extension:
\[
\mathsf{enc}(\mathbf{z}) = \big(
\hspace*{2.6pt}\mathsf{enc}_3(z_1) \hspace*{2.6pt}\| \cdots \| \hspace*{2.6pt}\mathsf{enc}_3(z_{\mathfrak{m}})\hspace*{2.6pt}
\big) \in \{-1,0,1\}^{3\mathfrak{m}}.
\]

Let $\mathbf{e} = ({e}_1, \ldots, e_{\mathfrak{m}})^\top\in \{-1,0,1\}^{\mathfrak{m}}$, define the permutation $\Pi_{\mathbf{e}}$ associated to $\mathbf{e}$ as follows. It maps  vector $\mathbf{v} = (\mathbf{v}_1 \| \ldots \| \mathbf{v}_{\mathfrak{m}}) \in \mathbb{Z}^{3\mathfrak{m}}$ consisting of $\mathfrak{m}$ blocks of size $3$ to vector as follows:
\[
\Pi_{\mathbf{e}}(\mathbf{v}) = \big(
\pi_{e_1}(\mathbf{v}_1) \| \ldots \| \pi_{e_{\mathfrak{m}}}(\mathbf{v}_{\mathfrak{m}})
\big).
\]
Following~(\ref{eq:equivalence-enc-3}), for any $\mathbf{z}, \mathbf{e} \in \{-1,0,1\}^{\mathfrak{m}}$, we obtain the  following equivalence:
\begin{eqnarray}\label{eq:equivalence-enc-vector}
\mathbf{v} = \mathsf{enc}(\mathbf{z}) \hspace*{6.8pt}\Longleftrightarrow \hspace*{6.8pt} \Pi_{\mathbf{e}}(\mathbf{v}) = \mathsf{enc}([\mathbf{z} + \mathbf{e}]_3).
\end{eqnarray}

\noindent
{\bf Handling a ``mixing'' vector. } We now deal with a ``mixing'' vector of the following form:
\begin{eqnarray}\label{equation:mix-form}
\mathbf{y} = \big(\hspace*{2.6pt}\mathbf{z} \hspace*{2.6pt}\|\hspace*{2.6pt} t_0 \cdot \mathbf{z} \hspace*{2.6pt}\| \hspace*{2.6pt}\ldots \hspace*{2.6pt}\|\hspace*{2.6pt} t_{c_d-1} \cdot \mathbf{z} \hspace*{2.6pt}\big),
\end{eqnarray}
where $\mathbf{z}\in\{-1,0,1\}^{\mathfrak{m}}$ and $t=(t_0,t_1,\ldots, t_{c_d-1})^{\top}\in\{0,1\}^{c_d}$ for $\mathfrak{m}, c_d\in \mathbb{Z}^{+}$.
\smallskip

First, we define the extension vector
$
\mathsf{mix}(\mathbf{t}, \mathbf{z}) \in \{-1,0,1\}^{3\mathfrak{m} + 6\mathfrak{m}c_d}
$
of vector $\mathbf{y}$ in the following manner:
\begin{align*}
\big(\hspace*{2.6pt}
\mathsf{enc}(\mathbf{z}) \hspace*{2.6pt}\|\hspace*{2.6pt}
\mathsf{ext}(t_0, z_1) \hspace*{2.6pt}\| \ldots \|\hspace*{2.6pt} \mathsf{ext}(t_0, z_{\mathfrak{m}}) \hspace*{2.6pt}\|
\ldots \|\hspace*{2.6pt}
\mathsf{ext}(t_{c_d-1}, z_1) \hspace*{2.6pt}\| \ldots \|\hspace*{2.6pt} \mathsf{ext}(t_{c_d-1}, z_{\mathfrak{m}})
\hspace*{2.6pt}\big).
\end{align*}
Next, for $\mathbf{b}= (b_0, \cdots, b_{c_d-1})^\top \in \{0,1\}^{c_d}$ and $\mathbf{e} = (\mathbf{e}_1, \ldots, e_{\mathfrak{m}})^{\top} \in \{-1,0,1\}^{\mathfrak{m}}$, we define the permutation $\Psi_{\mathbf{b}, \mathbf{e}}$ that works as follows. It maps vector $\mathbf{v}\in \mathbb{Z}^{3\mathfrak{m} + 6\mathfrak{m}c_d}$ of form
\[\mathbf{v} =
 \big(
\mathbf{v}_{-1} \hspace*{2.6pt}\|\hspace*{2.6pt} \mathbf{v}_{0,1} \hspace*{2.6pt}\| \ldots \|\hspace*{2.6pt} \mathbf{v}_{0, \mathfrak{m}} \hspace*{2.6pt}\| \ldots \|\hspace*{2.6pt} \mathbf{v}_{c_d-1, 1}\hspace*{2.6pt} \| \ldots \|\hspace*{2.6pt} \mathbf{v}_{c_d-1, \mathfrak{m}}
\big),
\]
where block $\mathbf{v}_{-1}$ has length $3\mathfrak{m}$ and each block $\mathbf{v}_{i,j}$ has length $6$,
to vector $\Psi_{\mathbf{b}, \mathbf{e}}(\mathbf{v})$ of form
\begin{eqnarray*}
\Psi_{\mathbf{b}, \mathbf{e}}(\mathbf{v})=
\big(
\Pi_{\mathbf{e}}(\mathbf{v}_{-1}) \| &&\psi_{b_0, e_1}(\mathbf{v}_{0,1}) \| \ldots \| \psi_{b_0, e_{\mathfrak{m}}}(\mathbf{v}_{0, \mathfrak{m}}) \| \ldots \|\\
&&\psi_{b_{c_d-1}, e_1}(\mathbf{v}_{c_d-1, 1}) \| \ldots \| \psi_{b_{c_d-1}, e_{\mathfrak{m}}}(\mathbf{v}_{c_d-1, \mathfrak{m}})
\big).
\end{eqnarray*}
Then, for all $\mathbf{t}, \mathbf{b} \in \{0,1\}^{c_d}$ and $\mathbf{z}, \mathbf{e} \in \{-1,0,1\}^{\mathfrak{m}}$, one can check the following equivalence holds:
\begin{eqnarray}\label{eq:equivalence-mix}
\mathbf{v} = \mathsf{mix}(\mathbf{t},\mathbf{z})
\hspace*{1.8pt}\Longleftrightarrow \hspace*{1.8pt}
\Psi_{\mathbf{b},\mathbf{e}}(\mathbf{v}) = \mathsf{mix}(\hspace*{1.6pt}\mathbf{t} \oplus \mathbf{b}, \hspace*{1.6pt}[\mathbf{z} + \mathbf{e}]_3\hspace*{1.6pt}).
\end{eqnarray}

\subsection{Zero-Knowledge Protocol for the Ducas-Micciancio Signature}\label{subsection:zk-for-DM}
We now recall the statistical zero-knowledge argument of knowledge of a valid message-signature pair for the Ducas-Micciancio signature, as presented in~\cite{LNWX18}. Let $n,q, m, k, \overline{m},\overline{m}_s, \ell, \beta, d, c_0, \ldots, c_d$ as specified in Section~\ref{subsection:DM-signatures}.
The protocol is summarized below.
\begin{itemize}
\item The public input consists of
\begin{eqnarray*}
\mathbf{A}, \mathbf{F}_0 \in R_q^{1 \times \overline{m}}; \hspace*{6.8pt}\mathbf{A}_{[0]}, \ldots, \mathbf{A}_{[d]} \in R_q^{1 \times k};
\mathbf{F} \in R_q^{1 \times \ell}; \hspace*{6.8pt} \mathbf{F}_1 \in R_q^{1\times \overline{m}_s};\hspace*{6.8pt}u \in R_q.
\end{eqnarray*}

\item The secret input of the prover consists of message $\mathfrak{m} \in R_q^{m_s}$ and signature $(t, \mathbf{r}, \mathbf{v})$, where
\begin{eqnarray*}\hspace*{-10pt}
    \begin{cases}
    t = (t_0, \ldots, t_{c_1 -1}, \ldots , t_{c_{d-1}}, \ldots, t_{c_d -1})^\top \in \{0,1\}^{c_d}; \\[2.6pt]
    \mathbf{r} \in R^{\overline{m}}; \hspace*{6.8pt}\mathbf{v} = (\mathbf{s} \| \mathbf{z})\in R^{\overline{m} + k};
    \hspace*{6.8pt} \mathbf{s} \in R^{\overline{m}}; \hspace*{6.8pt}
    \mathbf{z} \in R^k;
    \end{cases}
    \end{eqnarray*}
\item The goal of the prover is to prove in $\mathsf{ZK}$ that $\|\mathbf{r}\|_\infty \leq \beta$, $\|\mathbf{v}\|_\infty \leq \beta$, and that the following equation
    \begin{eqnarray}\label{eq:protocol-initial}
\mathbf{A} \cdot \mathbf{s} + \mathbf{A}_{[0]} \cdot \mathbf{z} + \sum_{i=1}^d \mathbf{A}_{[i]} \cdot t_{[i]} \cdot \mathbf{z}  =  \mathbf{F}\cdot \mathbf{y} + u
\end{eqnarray}
holds for $\big\{t_{[i]} = \sum_{j = c_{i-1}}^{{c_i -1}} t_j \cdot X^j\big\}_{i=1}^d$ and
\begin{eqnarray}\label{eq:protocol-cham-initial}
\mathbf{y} = \rdec\left(\mathbf{F}_0 \cdot \mathbf{r} + \mathbf{F}_1 \cdot \rdec(\mathfrak{m})\right) \in R^\ell.
\end{eqnarray}
\end{itemize}
The next step is to transform the secret input into a vector $\mathbf{w}$ that belongs to a specific set $\mathsf{VALID}$ and reduce the considered statements~(\ref{eq:protocol-initial}) and~(\ref{eq:protocol-cham-initial}) into $\mathbf{M}\cdot \mathbf{w}=\mathbf{u}\bmod q$ for some public input $\mathbf{M},\mathbf{u}$, in the form of the abstract protocol from Section~\ref{subsection:Stern}. To realize this, we employ the following two steps.

\smallskip

\noindent
{\sc Decomposing-Unifying.}
To begin with, we utilize  the notations $\mathsf{rot}$ and~$\tau$ from Section~\ref{subsection:rings} and the decomposition techniques from Section~\ref{subsection:decomposition}. 
\smallskip

Let $\mathbf{s}^\star = \tau(\rdec_\beta(\mathbf{s})) \in \{-1,0,1\}^{n \overline{m}\delta_\beta}$, $\mathbf{z}^\star = \tau(\rdec_\beta(\mathbf{z})) \in \{-1,0,1\}^{nk\delta_\beta}$ and $\mathbf{r}^\star = \tau(\rdec_\beta(\mathbf{r})) \in \{-1,0,1\}^{n\overline{m}\delta_\beta}$. Then, one can check that, equation~(\ref{eq:protocol-initial}) is equivalent to,
\begin{eqnarray*}
\nonumber&& [\mathsf{rot}(\mathbf{A}_{[0]})\cdot \mathbf{H}_{{k}, \beta}]\cdot \mathbf{z}^\star +
\sum_{i=1}^d \sum_{j=c_{i-1}}^{c_i -1} [\mathsf{rot}(\mathbf{A}_{[i]}\cdot X^{j}) \cdot \mathbf{H}_{k, \beta}]\cdot t_j \cdot \mathbf{z}^\star + \\[2.6pt]
&&[\mathsf{rot}(\mathbf{A})\cdot \mathbf{H}_{\overline{m}, \beta}]\cdot \mathbf{s}^\star - [\mathsf{rot}(\mathbf{F})]\cdot \tau(\mathbf{y}) = \tau(u) \bmod q,
\end{eqnarray*}
and equation~(\ref{eq:protocol-cham-initial}) is equivalent to
\begin{eqnarray*}
[\mathsf{rot}(\mathbf{F}_{0})\cdot \mathbf{H}_{\overline{m}, \beta}] \cdot \mathbf{r}^\star + [\mathsf{rot}(\mathbf{F}_1)]\cdot \tau(\rdec(\mathfrak{m})) - [\mathbf{H}]\cdot \tau(\mathbf{y}) = \mathbf{0} \bmod q.
\end{eqnarray*}

Rearrange the two derived equations using some basic algebra, we are able to obtain the following unifying equation:
\[
\mathbf{M}_0 \cdot \mathbf{w}_0 = \mathbf{u} \bmod q,
\]
where $\mathbf{u} = (\tau(u) \hspace*{2.6pt}\|\hspace*{2.6pt} \mathbf{0}) \in \mathbb{Z}_q^{2n}$ and $\mathbf{M}_0$ are built from public input, and  $\mathbf{w}_0 = (\mathbf{w}_1 \hspace*{2.6pt}\|\hspace*{2.6pt} \mathbf{w}_2)$ is built from secret input with $\mathbf{w}_1 \in \{-1,0,1\}^{(k\delta_\beta + c_d k\delta_\beta)n}$ and $\mathbf{w}_2\in \{-1,0,1\}^{2n\overline{m}\delta_\beta+ n\ell+n\overline{m}_s}$ and
\begin{eqnarray*}
\begin{cases}
\mathbf{w}_1 = (\mathbf{z}^\star \hspace*{3.6pt}\|\hspace*{3.6pt} t_0 \cdot \mathbf{z}^\star
\hspace*{3.6pt}\|\hspace*{3.6pt} \ldots \hspace*{3.6pt}\|\hspace*{3.6pt} t_{c_d-1}\cdot \mathbf{z}^\star); \\[1.6pt]
\mathbf{w}_2 = (\mathbf{s}^\star \hspace*{3.6pt}\|\hspace*{3.6pt} \mathbf{r}^\star \hspace*{3.6pt}\|\hspace*{3.6pt}  \tau(\mathbf{y}) \hspace*{3.6pt}\|\hspace*{3.6pt} \tau(\rdec(\mathfrak{m}))).
\end{cases}
\end{eqnarray*}
Until now, we have transformed the secret input into a vector $\mathbf{w}_0$ whose coefficients are in the set $\{-1,0,1\}$ and reduced statements~(\ref{eq:protocol-initial}) and~(\ref{eq:protocol-cham-initial}) into $\mathbf{M}_0\cdot \mathbf{w}_0 = \mathbf{u} \bmod q$, where $\mathbf{M}_0, \mathbf{u}$ are public.

\noindent
{\sc Extending-Permuting. }
Now the target is to transform the secret vector $\mathbf{w}_0$ to a vector $\mathbf{w}$ such that the conditions in~(\ref{eq:zk-equivalence}) hold. Towards this goal, the extension and permutation techniques described in Section~\ref{subsection:permutation-technique-by-lnwx} is employed.

We first extend $\mathbf{w}_0 = (\mathbf{w}_1 \| \mathbf{w}_2)$ as follows.
\begin{align}\label{eq:zk-w'_1}
\mathbf{w}_1 &\mapsto \mathbf{w}'_1 = \mathsf{mix}\big(t, \mathbf{z}^\star\big) \in \{-1,0,1\}^{L_1}; \\
\nonumber\mathbf{w}_2 &\mapsto \mathbf{w}'_2 = \mathsf{enc}(\mathbf{w}_2) \in \{-1,0,1\}^{L_2}.
\end{align}
Then form a new vector $\mathbf{w} = (\mathbf{w}'_1 \| \mathbf{w}'_2) \in \{-1,0,1\}^{L}$, where
$ L = L_1 + L_2$  and
$$L_1 = (k\delta_\beta + 2c_d k \delta_\beta)3n;  \hspace*{2.8pt}
L_2 = 6n\overline{m}\delta_\beta+ 3n\ell+3n\overline{m}_s.$$
According to the extension, adding suitable zero-columns to $\mathbf{M}_0$ to obtain a new matrix $\mathbf{M} \in \mathbb{Z}_q^{2n \times L}$ such that $\mathbf{M} \cdot \mathbf{w} = \mathbf{M}_0 \cdot \mathbf{w}_0$.
\smallskip

We are ready to define the set $\mathsf{VALID}$ that consists of our transformed secret vector $\mathbf{w}$, the set $\mathcal{S}$, and the associated permutations $\{\Gamma_\eta: \eta \in \mathcal{S}\}$, such
that the conditions in~(\ref{eq:zk-equivalence}) are all satisfied.
\smallskip

\noindent
Let $\mathsf{VALID}$ be the set of all vectors $\mathbf{v}' = (\mathbf{v}'_1 \| \mathbf{v}'_2) \in \{-1,0,1\}^{L}$ such that the following conditions hold:
\begin{itemize}
\item $\mathbf{v}'_1 = \mathsf{mix}(t, \mathbf{z}^\star)$ for some vectors $t \in \{0,1\}^{c_d}$ and $\mathbf{z}^\star \in \{-1,0,1\}^{nk\delta_\beta}$. 

\item $\mathbf{v}'_2 = \mathsf{enc}(\mathbf{w}_2)$ for vector $\mathbf{w}_2 \in \{-1,0,1\}^{L_2/3}$.
\end{itemize}

It is easy to see that $\mathbf{w}$ belongs to this special set $\mathsf{VALID}$. \smallskip

Now, define $\mathcal{S} = \{0,1\}^{c_d} \times \{-1,0,1\}^{nk\delta_\beta} \times \{-1,0,1\}^{L_2/3}$. For each element $\eta= (\mathbf{b}, \mathbf{e}, \mathbf{f}) \in \mathcal{S}$, define an associated permutation $\Gamma_\eta$ as follows. It permutes vector $\mathbf{v}^\star = (\mathbf{v}_1^\star \| \mathbf{v}_2^\star) \in \mathbb{Z}^L$, where $\mathbf{v}_1^\star \in \mathbb{Z}^{L_1}$ and $\mathbf{v}_2^\star \in \mathbb{Z}^{L_2}$, into vector of the following form:
\[
\Gamma_\eta(\mathbf{v}^\star) = \big(\hspace*{2.6pt}\Psi_{\mathbf{b}, \mathbf{e}}(\mathbf{v}_1^\star) \hspace*{2.6pt}\|\hspace*{2.6pt} \Pi_{\mathbf{f}}(\mathbf{v}_2^\star)\hspace*{2.6pt}\big).
\]
It then follows from the equivalences in~(\ref{eq:equivalence-enc-vector}) and~(\ref{eq:equivalence-mix}) that $\mathsf{VALID}$, $\mathcal{S}$, and $\Gamma_\eta$ satisfy the conditions in~(\ref{eq:zk-equivalence}). Therefore, we have obtained an instance of the abstract protocol from Section~\ref{subsection:Stern}. Up to this point, running the protocol of Figure~\ref{Figure:Interactive-Protocol} results in the desired statistical $\mathsf{ZKAoK}$ protocol. The protocol has perfect completeness, soundness error $2/3$, and communication cost $\mathcal{O}(L \cdot \log q)$, which is of order ${\mathcal{O}}(n \cdot \log^4 n) = \widetilde{\mathcal{O}}(\lambda)$. 

 \subsection{Key-Oblivious Encryption}\label{subsection:KOE-security-model}
  We next recall the definitions of key-oblivious encryption (\textsf{KOE}), as introduced in~\cite{KM15}. A $\mathsf{KOE}$ scheme consists of the following polynomial-time algorithms.

  \begin{description}
  \item[$\mathsf{Setup}(\lambda)$:] On input the security parameter $\lambda$, it outputs public parameter $\mathsf{pp}$. $\mathsf{pp}$ is implicit for all algorithms below if not explicitly mentioned. \smallskip
  \item[$\mathsf{KeyGen}(\mathsf{pp})$:] On input $\mathsf{pp}$, it generates a key pair $(\mathsf{pk},\mathsf{sk})$. \smallskip
  \item[$\mathsf{KeyRand}(\mathsf{pk})$:] On input the public key $\mathsf{pk}$, it outputs a new public key $\mathsf{pk'}$ for the same secret key.\smallskip
  \item[$\mathsf{Enc}(\mathsf{pk},\mathfrak{m})$:] On inputs $\mathsf{pk}$ and a message $\mathfrak{m}$, it outputs a ciphertext $\mathsf{ct}$ on this message.\smallskip
  \item[$\mathsf{Dec}(\mathsf{sk},\mathsf{ct})$:] On inputs $\mathsf{sk}$ and $\mathsf{ct}$, it outputs the decrypted message $\mathfrak{m'}$.

  \end{description}

\noindent{\sc{Correctness.}}
The above scheme must satisfy the following correctness requirement: For all $\lambda$, all $\mathsf{pp}\leftarrow\mathsf{Setup}(\lambda)$, all $(\mathsf{pk}, \mathsf{sk})\leftarrow \mathsf{KeyGen}(\mathsf{pp})$, all $\mathsf{pk'}\leftarrow\mathsf{KeyRand}(\mathsf{pk})$, all $\mathfrak{m}$,
\[\mathsf{Dec}(\mathsf{sk},\mathsf{Enc}(\mathsf{pk'},\mathfrak{m}))=\mathfrak{m}.\]

\smallskip

\noindent
{\sc Security. }
The security requirements of a $\mathsf{KOE}$ scheme consist of \emph{key randomizability} ($\mathsf{KR}$), \emph{plaintext indistinguishability under key randomization} ($\mathsf{INDr}$), and \emph{key privacy under key randomization} ($\mathsf{KPr}$). 

\noindent{\sc{Key Randomizability.}} $\mathsf{KR}$ requires that any 
adversary cannot determine how public keys are related to each other without possession of secret keys. Details are modelled in the experiment $\mathbf{Exp}_{\mathsf{KOE},\mathcal{A}}^{\mathsf{KR}}(\lambda)$ in Fig~\ref{Figure:KOE-security-definition}.

Define the advantage $\mathbf{Adv}_{\mathsf{KOE},\mathcal{A}}^{\mathsf{KR}}(\lambda)$ of adversary $\mathcal{A}$ against $\mathsf{KR}$ of the $\mathsf{KOE}$ scheme as $\vert2\text{Pr}[\mathbf{Exp}_{\mathsf{KOE},\mathcal{A}}^{\mathsf{KR}}(\lambda)=1]-1\vert$.  A $\mathsf{KOE}$ scheme is key randomizable if the advantage of any $\mathrm{PPT}$ adversary $\mathcal{A}$ is negligible. \smallskip

\noindent {\sc{Plaintext indistinguishability under key randomization.}} $\mathsf{INDr}$ requires that any adversary cannot distinguish ciphertext of one message from ciphertext of another one even though the adversary is allowed to choose the two messages and to randomize the public key. Details are modelled in the experiment $\mathbf{Exp}_{\mathsf{KOE},\mathcal{A}}^{\mathsf{INDr}}(\lambda)$ in Fig~\ref{Figure:KOE-security-definition}.

Define the advantage $\mathbf{Adv}_{\mathsf{KOE},\mathcal{A}}^{\mathsf{INDr}}(\lambda)$ of adversary $\mathcal{A}$ against $\mathsf{INDr}$ of the $\mathsf{KOE}$ scheme as $\vert2\text{Pr}[\mathbf{Exp}_{\mathsf{KOE},\mathcal{A}}^{\mathsf{INDr}}(\lambda)=1]-1\vert$.  A $\mathsf{KOE}$ scheme is plaintext indistinguishable under key randomization if the advantage of any $\mathrm{PPT}$ adversary $\mathcal{A}$ is negligible.

\begin{figure}[!htb]
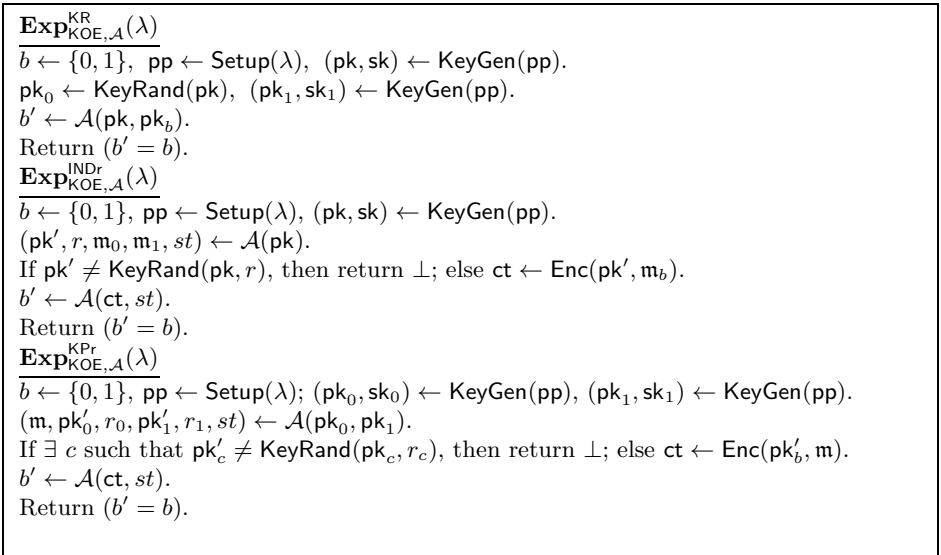

\begin{center}
\begin{tabular}{c}
\begin{minipage}{12cm}
\begin{small}
    \underline{$\mathbf{Exp}_{\mathsf{KOE},\mathcal{A}}^{\mathsf{KR}}(\lambda)$}\\
        $\> b\leftarrow\{0,1\}$,
        $\> \mathsf{pp}\leftarrow\mathsf{Setup}(\lambda)$,
        $\> (\mathsf{pk},\mathsf{sk})\leftarrow \mathsf{KeyGen}(\mathsf{pp})$.\\
        $\>\mathsf{pk}_0\leftarrow\mathsf{KeyRand}(\mathsf{pk})$,
        $\>(\mathsf{pk}_1,\mathsf{sk}_1)\leftarrow \mathsf{KeyGen}(\mathsf{pp})$.\\
        $\>b'\leftarrow\mathcal{A}(\mathsf{pk},\mathsf{pk}_b)$.\\
        $ \>{\mbox{Return}}~(b'=b).$

$\underline{\mathbf{Exp}_{\mathsf{KOE},\mathcal{A}}^{\mathsf{INDr}}(\lambda)}$ \\
        $b\leftarrow\{0,1\}$,
        $\mathsf{pp}\leftarrow\mathsf{Setup}(\lambda)$,
        $(\mathsf{pk},\mathsf{sk})\leftarrow \mathsf{KeyGen}(\mathsf{pp})$.\\
        $(\mathsf{pk}',r,\mathfrak{m}_0,\mathfrak{m}_1,st)\leftarrow\mathcal{A}(\mathsf{pk})$.\\
        If $\mathsf{pk}'\neq \mathsf{KeyRand}(\mathsf{pk},r)$, then return~$\bot$;
        else
        $\mathsf{ct}\leftarrow \mathsf{Enc}(\mathsf{pk'},\mathfrak{m}_b)$.\\
        $b'\leftarrow\mathcal{A}(\mathsf{ct},st)$.\\
        {\mbox{Return}}~$(b'=b).$

$\underline{\mathbf{Exp}_{\mathsf{KOE},\mathcal{A}}^{\mathsf{KPr}}(\lambda)}$ \\
    $b\leftarrow\{0,1\}$,
    $\mathsf{pp}\leftarrow\mathsf{Setup}(\lambda)$;
    $(\mathsf{pk}_0,\mathsf{sk}_0)\leftarrow \mathsf{KeyGen}(\mathsf{pp})$,
    $(\mathsf{pk}_1,\mathsf{sk}_1)\leftarrow \mathsf{KeyGen}(\mathsf{pp})$.\\
    $(\mathfrak{m},\mathsf{pk}'_0,r_0,\mathsf{pk}'_1,r_1,st)\leftarrow\mathcal{A}(\mathsf{pk}_0,\mathsf{pk}_1)$.\\
    If $\exists~c$ such that $\mathsf{pk}'_c\neq \mathsf{KeyRand}(\mathsf{pk}_c,r_c)$, then return~$\bot$;
    else
    $\mathsf{ct}\leftarrow \mathsf{Enc}(\mathsf{pk}'_b,\mathfrak{m})$.\\
    $b'\leftarrow\mathcal{A}(\mathsf{ct},st)$.\\
    {\mbox{Return}}~$(b'=b).$
\end{small}
\end{minipage}
\end{tabular}
\end{center}
\caption{Experiment to define security requirements of a $\mathsf{KOE}$ scheme.}\label{Figure:KOE-security-definition}
\end{figure}

\noindent{\sc{Key privacy under key randomization.}} $\mathsf{KPr}$ requires that  any adversary cannot distinguish ciphertext of a message under one public key from ciphertext of the same message under another public key even though the adversary is allowed to choose the message and to randomize the two public keys. Details are modelled in the experiment $\mathbf{Exp}_{\mathsf{KOE},\mathcal{A}}^{\mathsf{KPr}}(\lambda)$ in Fig~\ref{Figure:KOE-security-definition}.

Define the advantage $\mathbf{Adv}_{\mathsf{KOE},\mathcal{A}}^{\mathsf{KPr}}(\lambda)$ of adversary $\mathcal{A}$ against $\mathsf{INDr}$ of the $\mathsf{KOE}$ scheme as $\vert2\text{Pr}[\mathbf{Exp}_{\mathsf{KOE},\mathcal{A}}^{\mathsf{KPr}}(\lambda)=1]-1\vert$.  A $\mathsf{KOE}$ scheme is key private under key randomization if the advantage of any $\mathrm{PPT}$ adversary $\mathcal{A}$ is negligible.

\subsection{Accountable Tracing Signatures}\label{subsection:ATS-security-model}
We then recall the definition of accountable tracing signature ($\mathsf{ATS}$), as introduced in~\cite{KM15}.
An $\mathsf{ATS}$ scheme involves  a group manager ($\mathsf{GM}$) who also serves as the opening authority (\textsf{OA}), a set of users, who are potential group members. As a standard group signature scheme (e.g.~\cite{BMW03,BSZ05}), $\mathsf{GM}$ is able to identify the signer of a given signature. However, if $\mathsf{GM}$ is able to do so, there is an additional \emph{accounting} mechanism that later reveals which {user he chose to trace} (traceable user).  Specifically, if a user suspects that he was traceable by group manager who had claimed non-traceability of this user, then the user can resort to this mechanism to check whether group manager is honest/accountable or not.
An $\mathsf{ATS}$ scheme consists of the following polynomial-time algorithms.
\begin{description}
\item[$\mathsf{Setup}(\lambda)$:] On input the security parameter $\lambda$, it outputs public parameter $\mathsf{pp}$. $\mathsf{pp}$ is implicit for all algorithms below if not explicitly mentioned. \smallskip

  \item[$\mathsf{GKeyGen}(\mathsf{pp})$:] This algorithm is run by $\mathsf{GM}$. On input $\mathsf{pp}$, $\mathsf{GM}$ generates  group public key $\mathsf{gpk}$ and group secret keys: issue key $\mathsf{ik}$ and opening key $\mathsf{ok}$. \smallskip 

  \item[$\mathsf{UKeyGen}(\mathsf{pp})$:] Given input $\mathsf{pp}$, it outputs a user key pair $(\mathsf{upk},\mathsf{usk})$. 

  \item[$\mathsf{Enroll}(\mathsf{gpk},\mathsf{ik},\mathsf{upk},\mathsf{tr})$:] This algorithm is run by $\mathsf{GM}$. Upon receiving a user public key $\mathsf{upk}$ from a user, $\mathsf{GM}$ determines the value of the bit $\mathsf{tr}\in\{0,1\}$, indicating whether  the user is traceable ($\mathsf{tr}=1$) or not. He then produces a certificate $\mathsf{cert}$ for this user according to his choice of $\mathsf{tr}$.  $\mathsf{GM}$ then registers this user to the group and stores the registration information and the witness $w^{\mathsf{escrw}}$ to the bit $\mathsf{tr}$, and sends $\mathsf{cert}$ to the user. \smallskip

  \item[$\mathsf{Sign}(\mathsf{gpk},\mathsf{cert},\mathsf{usk},M)$:] Given the inputs $\mathsf{gpk}$, $\mathsf{cert}$, $\mathsf{usk}$ and message $M$, this algorithm outputs  a signature $\Sigma$ on this message $M$.
      \smallskip

  \item[$\mathsf{Verify}(\mathsf{gpk},M,\Sigma)$:]  Given the inputs $\mathsf{gpk}$ and the message-signature pair $(M,\Sigma)$, this algorithm outputs $1/0$ indicating whether the signature is valid or not. \smallskip

  \item[$\mathsf{Open}(\mathsf{gpk},\mathsf{ok},M,\Sigma)$:] Given the inputs $\mathsf{gpk}$, $\mathsf{ok}$ and the pair $(M,\Sigma)$, this algorithm returns a user public key $\mathsf{upk}'$ and a proof $\Pi_{\mathsf{open}}$ demonstrating  that user $\mathsf{upk}'$ indeed generated the signature $\Sigma$. 
      In case of $\mathsf{upk}'=\bot$, $\Pi_{\mathsf{open}}=\bot$. \smallskip
  \item[$\mathsf{Judge}(\mathsf{gpk},M,\Sigma,\mathsf{upk}',\Pi_{\mathsf{open}})$:] Given all the inputs, this algorithm outputs $1/0$ indicating whether it accepts the opening result or not. \smallskip

  \item[$\mathsf{Account}(\mathsf{gpk},\mathsf{cert},w^{\mathsf{escrw}},\mathsf{tr})$:] Given all the inputs, this algorithm returns $1$ confirming the choice of $\mathsf{tr}$ and $0$ otherwise.

\end{description}

\noindent{\sc{Correctness.}} The above $\mathsf{ATS}$ scheme requires that: for any honestly generated signature, the $\mathsf{Verify}$ algorithm always outputs $1$. Furthermore, if the user is traceable, then $\mathsf{Account}$ algorithm outputs~$1$ when $\mathsf{tr}=1$, and the $\mathsf{Open}$ algorithm can identify the signer and generate a proof $\Pi_{\mathsf{open}}$ that will be accepted by the $\mathsf{Judge}$ algorithm. On the other hand, if the user is non-traceable, then the $\mathsf{Account}$ algorithm outputs~$1$ when $\mathsf{tr}=0$, and  the $\mathsf{Open}$ algorithm outputs $\bot$.  
\begin{remark}
There is a minor difference between the syntax we describe here and that presented by Kohlweiss and Miers~\cite{KM15}. Specifically, we omit the time epoch when the user joins the group, since we do not consider forward and backward tracing scenarios as   in~\cite{KM15}. 
\end{remark}

\noindent
{\sc Security. }
The security requirements of an $\mathsf{ATS}$ scheme consist of \emph{anonymity under tracing} ($\mathsf{AuT}$), \emph{traceability} ($\mathsf{Trace}$), and \emph{non-frameability} ($\mathsf{NF}$), \emph{anonymity with accountability} ($\mathsf{AwA}$) and \emph{trace-obliviousness} ($\mathsf{TO}$). 

\noindent{\sc{Anonymity under tracing.}} $\mathsf{AuT}$ is the standard anonymity requirement of group signatures (e.g. \cite{BMW03,BSZ05}). It guarantees that even when being traced, users are anonymous to the adversary who does not hold the opening key. 
Details are modelled in the experiment in Figure~\ref{Figure:ATS-AuT}.

\begin{figure}
\begin{tabular}{cc}
\begin{minipage}{4.35cm}
\begin{small}
 \underline{$\mb{Exp}_{\ms{ATS},\mc{A}}^{\ms{AuT}-b}(\lambda)$} 

     \smallskip
        $ \ms{pp}\leftarrow\ms{Setup}(\lambda)$.\\
        $(\ms{gpk},\ms{ik},\mathsf{ok})\leftarrow\ms{GKeyGen}(\ms{pp}).$\\
        $b'\leftarrow \mc{A}^{\ms{Ch},\ms{Open}}(\ms{gpk},\ms{ik})$\\
        $ \text{Return}~b'$.

  $\underline{\mathbf{Oracle}~\mathsf{Open}(M,\Sigma)}$

    \smallskip
    If $\Sigma\in Q$, then return $\bot$,\\
    Else~return \\$(\mathsf{upk},\Pi)\leftarrow \mathsf{Open}(\mathsf{ok},M,\Sigma)$.
\end{small}
\end{minipage}
&
\begin{minipage}{7.65cm}
\begin{small}

 $\underline{\mathbf{Oracle}~\ms{Ch}(\ms{cert}_0,\ms{cert}_1,\ms{usk}_0,\ms{usk}_1,M,w_{0}^{\ms{escrw}},w_{1}^{\ms{escrw}},1)}$

     \smallskip
        $\Sigma_0\leftarrow\ms{Sign}(\ms{gpk},\ms{cert}_0,\ms{usk}_0,M)$.\\
        $\Sigma_1\leftarrow\ms{Sign}(\ms{gpk},\ms{cert}_1,\ms{usk}_1,M)$.\\
        $\text{If}~(\Sigma_0\neq\bot\wedge\Sigma_1\neq\bot~\wedge$\\
        $~~~~\ms{Account}(\ms{gpk},\ms{cert}_0,w_0^{\ms{escrw}},1)~\wedge$\\
        $~~~~\ms{Account}(\ms{gpk},\ms{cert}_1,w_1^{\ms{escrw}},1))$\\
        $~~~~Q\leftarrow Q\cup \{\Sigma_b\}$\\
        $~~~~\text{return}~\Sigma_b$,\\
        $\text{Else~return}~\bot.$
\end{small}
\end{minipage}
\end{tabular}
\caption{Experiment to define anonymity under tracing }\label{Figure:ATS-AuT}
\end{figure}

Define the advantage $\mathbf{Adv}_{\mathsf{ATS},\mathcal{A}}^{\mathsf{AuT}}(\lambda)$ of adversary $\mathcal{A}$ against anonymity under tracing of the $\mathsf{ATS}$ scheme as $\vert\text{Pr}[\mathbf{Exp}_{\mathsf{ATS},\mathcal{A}}^{\mathsf{AuT}-1}(\lambda)=1]-\text{Pr}[\mathbf{Exp}_{\mathsf{ATS},\mathcal{A}}^{\mathsf{AuT}-0}(\lambda)=1]\vert$. An $\mathsf{ATS}$ scheme is anonymous under tracing if the advantage of any $\mathrm{PPT}$ adversary $\mathcal{A}$ is negligible.
\smallskip

\noindent{\sc{Traceability.}} Traceability requires that every valid signature will trace to someone as long as the adversary does not hold both the certificate and user secret key of {a user who is not traceable} (non-traceable user). As pointed out by Kohlweiss and Miers~\cite{KM15}, this is slightly different from the standard traceability game (e.g. \cite{BMW03,BSZ05}), where all users are being traced by $\mathsf{GM}$. In an $\mathsf{ATS}$ scheme, when adversary queries certificate of a user of his choice, challenger will always generate a certificate according to $\mathsf{tr}=1$. In other words, the user of the adversary's choice is a traceable user. This ensures that the adversary does not  hold both certificate and user secret key for a non-traceable user. Details are modelled in the experiment in Figure~\ref{Figure:ATS-Trace}.

\begin{figure}
\begin{tabular}{cc}
\begin{minipage}{5cm}
\begin{small}

 \smallskip
    \underline{$\mb{Exp}_{\ms{ATS},\mc{A}}^{\ms{Trace}}(\lambda)$}

     \smallskip
        $ \ms{pp}\leftarrow\ms{Setup}(\lambda)$.\\
        $(\ms{gpk},\ms{ik},\mathsf{ok})\leftarrow\ms{GKeyGen}(\ms{pp}).$\\
        $(M,\Sigma)\leftarrow \mc{A}^{\ms{UKG},\ms{Enroll},\ms{Sign},\ms{Open}}(\ms{gpk}).$\\
        Return $0$ if $(M,\Sigma)\in Q$ or \\ $~~~\ms{Verify}(\ms{gpk},M,\Sigma)=0$.\\
        Else $(\ms{upk},\Pi)\leftarrow\ms{Open}(\ms{ok},m,\Sigma).$\\
        \hspace*{5pt}Return $1$ if $\ms{upk}=\bot$ or \\$~~~~~~\ms{Judge}(\ms{gpk},M,\Sigma,\ms{upk},\Pi)=0$.\\
        \hspace*{5pt}Else return $0$.

         \underline{$\mathbf{Oracle}~\ms{UKG}(\ms{pp})$}

     \smallskip
        $(\ms{upk},\ms{usk})\leftarrow\ms{UKeyGen}(\ms{pp})$.\\
        $ S[\ms{upk}]=\ms{usk}$.\\
        $\text{Return}~\ms{upk}$.
\end{small}
\end{minipage}
&
\begin{minipage}{7cm}
\begin{small}
 $\underline{\mb{Oracle}~\ms{Enroll}(\ms{upk},\ms{tr})}$

     \smallskip
        Let $\ms{tr'}=(\ms{upk}\notin \text{dom}~S)\in\{0,1\}$.\\
        $(\ms{cert},w^{\ms{escrw}})\leftarrow\ms{Enroll}(\ms{ik},\ms{upk},\ms{tr}\vee\ms{tr'}).$\\ 
        $\text{Return}~\ms{cert}$.\\
    $\underline{\mb{Oracle}~\ms{Sign}(\ms{cert},M)}$

     \smallskip
        $\ms{usk}=S[\ms{cert.upk}]$.\\
        $\text{If}~(\ms{usk}=\bot),~\text{return}~\bot.$\\
        $\text{Else}~~\Sigma\leftarrow\ms{Sign}(\ms{gpk},\ms{cert},\ms{usk},M)$.\\
       $\hspace*{22pt}Q=Q\cup\{(m,\Sigma)\}$. \\
       $\hspace*{22pt}\text{return}~\Sigma$.

       $\underline{\mathbf{Oracle}~\mathsf{Open}(M,\Sigma)}$

    \smallskip
    $(\mathsf{upk},\Pi)\leftarrow \mathsf{Open}(\mathsf{ok},M,\Sigma)$\\
    Return $(\mathsf{upk},\Pi)$.
\end{small}
\end{minipage}\\

\end{tabular}
\caption{Experiment to define traceability. }\label{Figure:ATS-Trace}
\end{figure}

Define the advantage $\mathbf{Adv}_{\mathsf{ATS},\mathcal{A}}^{\mathsf{Trace}}(\lambda)$ of adversary $\mathcal{A}$ against traceability of the $\mathsf{ATS}$ scheme as $\text{Pr}[\mathbf{Exp}_{\mathsf{ATS},\mathcal{A}}^{\mathsf{Trace}}(\lambda)=1]$. An $\mathsf{ATS}$ scheme is traceable if the advantage of any $\mathrm{PPT}$ adversary $\mathcal{A}$ is negligible. \smallskip

\noindent{\sc{Non-frameability.}} It requires that the adversary cannot sign messages on behalf of honest users, even though the adversary can corrupt $\mathsf{GM}$ and all other users. This ensures that {signatures signed by a traceable user} (traceable signatures) are non-repudiated. Details are modelled in the experiment in Figure~\ref{Figure:ATS-NF}.
\begin{figure}
\begin{tabular}{cc}
\begin{minipage}{5cm}
\begin{small}

    \smallskip
    \underline{$\mb{Exp}_{\ms{ATS},\mc{A}}^{\ms{NF}}(\lambda)$}

     \smallskip
        $ \ms{pp}\leftarrow\ms{Setup}(\lambda)$.\\
        $(\ms{gpk},\ms{st})\leftarrow\mc{A}(\ms{pp}).$\\
        $\text{If}~\ms{gpk.pp}\neq\ms{pp},~\text{return}~\bot.$\\
        $(M,\Sigma,\ms{upk},\Pi)\leftarrow \mc{A}^{\ms{UKG},\ms{Sign}}(\ms{st})$.\\
         Return $1$ if $((M,\Sigma)\notin Q~\wedge~$  \\ $~~~~\ms{Verify}(\ms{gpk},M,\Sigma)=1~\wedge~$\\
          $~~~~\ms{upk}\in \mathsf{dom}(S)~\wedge~$  \\ $~~~~\ms{Judge}(\ms{gpk},M,\Sigma,\ms{upk},\Pi)=1)$.
\end{small}
\end{minipage}
&
\begin{minipage}{7.35cm}
\begin{small}

 $\underline{\mathbf{Oracle}~\ms{UKG}(\ms{pp})}$

     \smallskip
        $(\ms{upk},\ms{usk})\leftarrow\ms{UKeyGen}(\ms{pp})$,\\
        $ S[\ms{upk}]=\ms{usk}$.\\
        $\text{Return}~\ms{upk}$.\\
    $\underline{\mb{Oracle}~\ms{Sign}(\ms{cert},M)}$

     \smallskip
        $\ms{usk}=S[\ms{cert.upk}]$.\\
        $\text{If}~(\ms{usk}=\bot)~\text{return}~\bot.$\\
        $\Sigma\leftarrow\ms{Sign}(\ms{gpk},\ms{cert},\ms{usk},M)$.\\
        $Q=Q\cup\{(M,\Sigma)\}$. Return $\Sigma$.
\end{small}
\end{minipage}
\end{tabular}
\caption{Experiment to define non-frameability. }\label{Figure:ATS-NF}
\end{figure}

Define the advantage $\mathbf{Adv}_{\mathsf{ATS},\mathcal{A}}^{\mathsf{NF}}(\lambda)$ of adversary $\mathcal{A}$ against non-frameability of the $\mathsf{ATS}$ scheme as $\text{Pr}[\mathbf{Exp}_{\mathsf{ATS},\mathcal{A}}^{\mathsf{NF}}(\lambda)=1]$. An $\mathsf{ATS}$ scheme is non-frameable if the advantage of any $\mathrm{PPT}$ adversary $\mathcal{A}$ is negligible.

\noindent{\sc{Anonymity with accountability.}} $\mathsf{AwA}$ requires that a user is anonymous even from a corrupted group manager that has full control over the system as long as this user is non-traceable. In other words, the certificate is generated according to $\mathsf{tr}=0$. 
Details are modelled in the experiment in Figure~\ref{Figure:ATS-AwA}.

\begin{figure}
\begin{tabular}{cc}
\begin{minipage}{4.65cm}
\begin{small}

\smallskip
    \underline{$\mb{Exp}_{\ms{ATS},\mc{A}}^{\ms{AwA}-b}(\lambda)$}

        \smallskip
        $ \ms{pp}\leftarrow\ms{Setup}(\lambda)$.\\
        $(\ms{gpk},\ms{st})\leftarrow\mc{A}(\ms{pp}).$\\
        $\text{If}~\ms{gpk.pp}\neq\ms{pp},~\text{return}~\bot.$\\
        $b'\leftarrow \mc{A}^{\ms{Ch}}(\ms{st})$\\
        $ \text{Return}~b'$.
\end{small}
\end{minipage}
&
\begin{minipage}{7.35cm}
\begin{small}

\smallskip
    $\underline{\mathbf{Oracle}~\ms{Ch}(\ms{cert}_0,\ms{cert}_1,\ms{usk}_0,\ms{usk}_1,M,w_{0}^{\ms{escrw}},w_{1}^{\ms{escrw}},0)}$

\smallskip
        $\Sigma_0\leftarrow\ms{Sign}(\ms{gpk},\ms{cert}_0,\ms{usk}_0,M)$.\\
        $\Sigma_1\leftarrow\ms{Sign}(\ms{gpk},\ms{cert}_1,\ms{usk}_1,M)$.\\
        $\text{If}~(\Sigma_0\neq\bot\wedge\Sigma_1\neq\bot~\wedge$\\
        $~~~~\ms{Account}(\ms{gpk},\ms{cert}_0,w_0^{\ms{escrw}},0)~\wedge$\\
        $~~~~\ms{Account}(\ms{gpk},\ms{cert}_1,w_1^{\ms{escrw}},0)), $\\
        $~~~~\text{return}~\Sigma_b.$\\
        $\text{Else~return}~\bot$.
\end{small}
\end{minipage}
\end{tabular}
\caption{Experiment to define anonymity with accountability. }\label{Figure:ATS-AwA}
\end{figure}

Define the advantage $\mathbf{Adv}_{\mathsf{ATS},\mathcal{A}}^{\mathsf{AwA}}(\lambda)$ of $\mathcal{A}$ against anonymity with accountability of the $\mathsf{ATS}$ scheme as $\vert\text{Pr}[\mathbf{Exp}_{\mathsf{ATS},\mathcal{A}}^{\mathsf{AwA}-1}(\lambda)=1]-\text{Pr}[\mathbf{Exp}_{\mathsf{ATS},\mathcal{A}}^{\mathsf{AwA}-0}(\lambda)=1]\vert$. An $\mathsf{ATS}$ scheme is anonymous with accountability if the advantage of any $\mathrm{PPT}$ adversary $\mathcal{A}$ is negligible. \smallskip

\noindent{\sc{Trace-obliviousness.}} Trace-obliviousness requires that each user cannot determine whether they are being traced or not. Details are modelled in the experiment in Figure~\ref{Figure:ATS-TO}.

\begin{figure}
\begin{tabular}{cc}
\begin{minipage}{4.65cm}
\begin{small}

\smallskip
    \underline{$\mb{Exp}_{\ms{ATS},\mc{A}}^{\ms{TO}-b}(\lambda)$}

     \smallskip
        $\ms{pp}\leftarrow\ms{Setup}(\lambda)$.\\
        $(\ms{gpk},\ms{ik},\mathsf{ok})\leftarrow\ms{GKeyGen}(\ms{pp}).$\\
        $b'\leftarrow \mc{A}^{\ms{Ch},\ms{Enroll},\mathsf{Open}}(\ms{gpk})$\\
        $ \text{Return}~b'$.

 \end{small}
\end{minipage}

&
\begin{minipage}{7.35cm}
\begin{small}

    $\underline{\mb{Oracle}~\ms{Enroll}(\ms{upk},\ms{tr})}$

    \smallskip
        $(\ms{cert},w^{\ms{escrw}})\leftarrow\ms{Enroll}(\ms{ik},\ms{upk},\mathsf{tr}).$\\
        $\text{Return}~\ms{cert}$.\\
          $\underline{\mathbf{Oracle}~\ms{Ch}(\ms{upk})}$

          \smallskip
        $(\ms{cert},w^{\ms{escrw}})\leftarrow\ms{Enroll}(\ms{ik},\ms{upk}, b).$\\
        $U=U\cup\{\ms{upk}\},~\text{Return}~\ms{cert}$.

 $\underline{\mathbf{Oracle}~\mathsf{Open}(M,\Sigma)}$

    \smallskip
    $(\mathsf{upk},\Pi)\leftarrow \mathsf{Open}(\mathsf{ok},M,\Sigma)$\\
    If $\mathsf{upk}\in U$, then return $\bot$;
    Else return $(\mathsf{upk},\Pi)$.

\end{small}
\end{minipage}
\end{tabular}
\caption{Experiment to define trace-obliviousness.}\label{Figure:ATS-TO}
\end{figure}

Define the advantage $\mathbf{Adv}_{\mathsf{ATS},\mathcal{A}}^{\mathsf{TO}}(\lambda)$ of adversary $\mathcal{A}$ against trace-obliviousness of the $\mathsf{ATS}$ scheme as $\vert\text{Pr}[\mathbf{Exp}_{\mathsf{ATS},\mathcal{A}}^{\mathsf{TO}-1}(\lambda)=1]-\text{Pr}[\mathbf{Exp}_{\mathsf{ATS},\mathcal{A}}^{\mathsf{TO}-0}(\lambda)=1]\vert$. An $\mathsf{ATS}$ scheme is trace-oblivious if the advantage of any $\mathrm{PPT}$ adversary $\mathcal{A}$ is negligible.

\section{Key-Oblivious Encryption from Lattices}\label{section:ATS-KOE}
In~\cite{KM15}, Kohlweiss and Miers constructed a $\mathsf{KOE}$ scheme based on ElGamal cryptosystem~\cite{ElGamal84}. To adapt their blueprint into the lattice setting, we would need a key-private homomorphic encryption scheme whose public keys and ciphertexts should have the same algebraic form (e.g., each of them is a pair of ring elements). We observe that, the LPR \textsf{RLWE}-based encryption scheme, under appropriate setting of parameters, does satisfy these conditions. We thus obtain an instantiation of \textsf{KOE} which will then serve as a building block for our \textsf{ATS} construction in Section~\ref{section:ATS-ATS}.

\subsection{Description of Our KOE Scheme}\label{subsection:KOE-construction}
Our \textsf{KOE} scheme works as follows.
\begin{description}
\item[$\mathsf{Setup}(\lambda)$:] Given the security parameter $\lambda$, let $n=\mathcal{O}(\lambda)$ be a power of $2$ and $q=\widetilde{\mathcal{O}}(n^4)$. Also let $\ell=\lfloor\log \frac{q-1}{2}\rfloor+1$. Define the rings $R=\mathbb{Z}[X]/(X^n+1)$ and $R_q=R/qR$. Let the integer bound $B$ be of order $\widetilde{\mathcal{O}}(\sqrt{n})$ and $\chi$ be a $B$-bounded distribution over the ring $R$. This algorithm then outputs public parameter $\mathsf{pp}=\{n, q, \ell, R, R_q, B, \chi\}$. \smallskip

\item[$\mathsf{KeyGen}(\mathsf{pp})$:] Given the input $\mathsf{pp}$, this algorithm samples $s\hookleftarrow \chi$, $\mathbf{e}\hookleftarrow\chi^{\ell}$ and $\mathbf{a}\xleftarrow{\$} R_q^{\ell}$. Set $\mathsf{pk}=(\mathbf{a},\mathbf{b})=(\mathbf{a},\mathbf{a}\cdot s+\mathbf{e})\in R_q^{\ell}\times R_q^{\ell}$ and $\mathsf{sk}=s$. It then returns $(\mathsf{pk},\mathsf{sk})$. \smallskip

\item[$\mathsf{KeyRand}(\mathsf{pk})$:] Given the public key $\mathsf{pk}=(\mathbf{a},\mathbf{b})$, it samples $g\hookleftarrow\chi$, $\mathbf{e}_1\hookleftarrow\chi^{\ell}$ and $\mathbf{e}_2\hookleftarrow \chi^{\ell}$. Compute $$(\mathbf{a}',\mathbf{b}')=(\mathbf{a}\cdot g+\mathbf{e}_1,\hspace*{6.8pt}\mathbf{b}\cdot g+\mathbf{e}_2)\in R_q^{\ell}\times R_q^{\ell}.$$ This algorithm then  outputs randomized public key as $\mathsf{pk}'=(\mathbf{a}',\mathbf{b}')$. \smallskip

\item[$\mathsf{Enc}(\mathsf{pk}',p)$:]Given the public key $\mathsf{pk}'=(\mathbf{a}',\mathbf{b}')$ and a message $p\in R_q$, it samples $g'\in\chi$, $\mathbf{e}'_1\in\chi^{\ell}$ and $\mathbf{e}'_2\in \chi^{\ell}$. Compute $$(\mathbf{c}_1,\mathbf{c}_2)=(\mathbf{a}'\cdot g'+\mathbf{e}'_1,\hspace*{6.8pt}\mathbf{b}'\cdot g'+\mathbf{e}'_2+\lfloor q/4\rfloor\cdot\rdec(p))\in R_q^{\ell}\times R_q^{\ell}.$$
    This algorithm returns ciphertext as $\mathsf{ct}=(\mathbf{c}_1,\mathbf{c}_2)$.\smallskip

\item[$\mathsf{Dec}(\mathsf{sk},\mathsf{ct})$:] Given $\mathsf{sk}=s$ and $\mathsf{ct}=(\mathbf{c}_1,\mathbf{c}_2)$, the algorithm proceeds as follows.
\begin{enumerate}
        \item It  computes \[\mathbf{p''}=\frac{\mathbf{c}_{2}-\mathbf{c}_{1}\cdot s}{\lfloor q/4\rfloor}.\]
        \item For each coefficient of $\mathbf{p''}$,

             \begin{itemize}
                    \item if it is closer to $0$ than to $-1$ and $1$, then round it to $0$; \smallskip

                    \item   if it is closer to $-1$ than to $0$ and $1$, then round it to $-1$;\smallskip

                    \item if it is closer to $1$ than to $0$ and $-1$, then round it to $1$.
             \end{itemize}

        \item Denote the rounded $\mathbf{p''}$ as $\mathbf{p'}\in R_q^{\ell}$ with coefficients in $\{-1,0,1\}$.
        \item Let $p'\in R_q$ such that $\tau(p')=\mathbf{H}\cdot \tau(\mathbf{p'})$. Here,  $\mathbf{H}\in\mathbb{Z}_q^{n\times n\ell}$ is the decomposition matrix for elements of $R_q$ (see Appendix~\ref{subsection:decomposition}).
        \end{enumerate}
\end{description}

\subsection{Analysis of Our KOE Scheme}

\noindent{\sc{Correctness.}}
Note that \begin{align*}
\mathbf{c}_{2}-\mathbf{c}_{1}\cdot s&=\mathbf{b}'\cdot g'+\mathbf{e}'_2+\lfloor q/4\rfloor\cdot \rdec(p) -(\mathbf{a}'\cdot g'+\mathbf{e}'_1)\cdot s \\
&=\mathbf{e}\cdot g\cdot g' + \mathbf{e}_2\cdot g'-\mathbf{e}_1\cdot s\cdot g'+\mathbf{e}'_{2}- \mathbf{e}'_{1} \cdot s+\lfloor q/4\rfloor\cdot \rdec(p)
\end{align*}
where $s, g, g', \mathbf{e},\mathbf{e}_1,\mathbf{e}_2,\mathbf{e}'_1,\mathbf{e}'_2$ are $B$-bounded.  Hence we have:
\[\|\mathbf{e}\cdot g\cdot g' + \mathbf{e}_2\cdot g'-\mathbf{e}_1\cdot s\cdot g'+\mathbf{e}'_{2}- \mathbf{e}'_{1} \cdot s\|_{\infty}\leq 3n^2\cdot B^3=\widetilde{\mathcal{O}}(n^{3.5})\leq \big\lceil \frac{q}{10}\big\rceil=\widetilde{\mathcal{O}}(n^4).\]
With overwhelming probability, the rounding procedure described in the $\mathsf{Dec}$ algorithm  recovers $\rdec(p)$ and hence outputs $p$. Therefore, our $\mathsf{KOE}$ scheme is correct.

\smallskip
\noindent
{\sc Security.} The security of our \textsf{KOE} scheme is stated in the following theorem.

\begin{theorem}\label{theorem:KOE-sec}
Under the $\mathsf{RLWE}$ assumption, the described key-oblivious encryption scheme satisfies: (i) key randomizability;  (ii) plaintext indistinguishability under key randomization; and (iii) key privacy under key randomization.
\end{theorem}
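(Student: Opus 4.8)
The plan is to reduce all three properties to the hardness of $\mathsf{RLWE}$ through sequences of game hops, exploiting a single unifying fact: both a (re‑)randomized LPR public key and an LPR ciphertext are computationally indistinguishable from uniform over $R_q^\ell \times R_q^\ell$. Indeed, $\mathsf{KR}$ asserts pseudorandomness of randomized keys, while $\mathsf{INDr}$ and $\mathsf{KPr}$ both follow once we establish that $\mathsf{Enc}(\mathsf{pk}',\mathfrak{m})$ is pseudorandom regardless of $\mathfrak{m}$ and of which valid key $\mathsf{pk}'$ is used --- the former hides the plaintext, the latter hides the key. Throughout, I would invoke the variant of $\mathsf{RLWE}$ in which the secret is drawn from $\chi$, whose hardness is recalled in Section~\ref{subsection:rings}.

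For $\mathsf{KR}$, I would proceed by hybrids over the view $(\mathsf{pk},\mathsf{pk}_b)$. Starting from $b=0$, the adversary sees $\mathsf{pk}=(\mathbf{a},\mathbf{a}s+\mathbf{e})$ together with $\mathsf{pk}_0=(\mathbf{a}g+\mathbf{e}_1,\mathbf{b}g+\mathbf{e}_2)$. First replace $\mathbf{b}=\mathbf{a}s+\mathbf{e}$ by a uniform $\mathbf{b}$: this is a direct $\mathsf{RLWE}$ hop with secret $s$ and uniform public $\mathbf{a}$, sound because $s$ appears nowhere else in the game. Now $(\mathbf{a}\,\|\,\mathbf{b})$ is uniform and $\mathsf{pk}_0=(\mathbf{a}\,\|\,\mathbf{b})\cdot g + (\mathbf{e}_1\,\|\,\mathbf{e}_2)$ is a batch of $\mathsf{RLWE}$ samples with secret $g$; crucially, in the $\mathsf{KR}$ game the randomization secret $g$ is chosen by the challenger and never revealed, so a second $\mathsf{RLWE}$ hop replaces $\mathsf{pk}_0$ by uniform. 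Hence the $b=0$ view is indistinguishable from a pair of uniform keys. The $b=1$ view $(\mathsf{pk},\mathsf{pk}_1)$ with a freshly generated $\mathsf{pk}_1$ is likewise indistinguishable from uniform by applying $\mathsf{RLWE}$ to each key, so composing the hops shows the two views are indistinguishable and $\mathbf{Adv}^{\mathsf{KR}}$ is negligible.

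For $\mathsf{INDr}$ and $\mathsf{KPr}$ I would isolate the core claim that $\mathsf{ct}=(\mathbf{a}'g'+\mathbf{e}'_1,\ \mathbf{b}'g'+\mathbf{e}'_2+\lfloor q/4\rfloor\,\rdec(\mathfrak{m}))$ is pseudorandom, using the freshly sampled encryption secret $g'$ and noise $\mathbf{e}'_1,\mathbf{e}'_2$, which are unknown to the adversary. After first turning each base key $\mathbf{b}_c=\mathbf{a}_cs_c+\mathbf{e}_c$ into a uniform element (the $\mathsf{RLWE}$ hop with secret $s_c$, legitimate since the decryption key is never used in either experiment), one would argue that the key $\mathsf{pk}'$ fed to $\mathsf{Enc}$ behaves like a uniform pair, whence $\bigl((\mathbf{a}'\,\|\,\mathbf{b}')\cdot g' + (\mathbf{e}'_1\,\|\,\mathbf{e}'_2)\bigr)$ is a batch of $\mathsf{RLWE}$ samples and $\mathsf{ct}$ is pseudorandom. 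Since the resulting distribution is independent of both $\mathfrak{m}$ and of the index of the underlying key, the challenge bit is information‑theoretically hidden in the final game, yielding negligible $\mathbf{Adv}^{\mathsf{INDr}}$ and $\mathbf{Adv}^{\mathsf{KPr}}$ simultaneously.

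The main obstacle is precisely that last step of the $\mathsf{INDr}/\mathsf{KPr}$ argument: unlike in $\mathsf{KR}$, here the adversary supplies the randomization coins $r=(g,\mathbf{e}_1,\mathbf{e}_2)$ and hence knows $\mathsf{pk}'$ exactly, so one cannot simply invoke $\mathsf{RLWE}$ with secret $g$ to pass from the base keys to $\mathsf{pk}'$. The delicate point is to justify that $\mathsf{pk}'=(\mathbf{a}g+\mathbf{e}_1,\mathbf{b}g+\mathbf{e}_2)$ is a \emph{good} key --- one for which encryption with fresh $g'$ yields pseudorandom ciphertexts --- even for adversarially chosen coins, while simultaneously controlling the noise growth so that the analysis stays in the $\mathsf{RLWE}$‑usable regime. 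The correctness computation already pins down the relevant estimate, namely that the accumulated error satisfies $\|\cdot\|_\infty\le 3n^2B^3=\widetilde{\mathcal{O}}(n^{3.5})\le\lceil q/10\rceil$, and I would reuse exactly this bound to keep every intermediate error term small enough that the embedded $\mathsf{RLWE}$ samples remain correctly distributed throughout the hybrids.
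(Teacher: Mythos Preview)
Your treatment of key randomizability matches the paper's Lemma~\ref{lemma:KR} essentially verbatim: an $\mathsf{RLWE}$ hop in the secret $s$ turns $\mathsf{pk}$ into a uniform pair, after which $\mathsf{pk}_0$ consists of $2\ell$ $\mathsf{RLWE}$ samples in the (challenger-held) secret $g$ and $\mathsf{pk}_1$ is handled by an independent hop; both sides collapse to uniform.

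For $\mathsf{INDr}$ and $\mathsf{KPr}$ your overall strategy is also the paper's, and you have correctly put your finger on the one non-routine step. The gap in your write-up is the resolution you offer for it. Re-using the correctness estimate $\|\cdot\|_\infty\le 3n^2B^3\le\lceil q/10\rceil$ does not help here: that bound controls the \emph{magnitude} of the accumulated noise so that decryption succeeds, but the obstacle you identified is \emph{distributional} --- you need the encryption key $\mathsf{pk}'=(\widetilde{\mathbf{a}}g+\mathbf{e}_1,\widetilde{\mathbf{b}}g+\mathbf{e}_2)$, whose randomization coins $(g,\mathbf{e}_1,\mathbf{e}_2)$ the adversary chose, to serve as a valid ``public'' part for an $\mathsf{RLWE}$ hop in the fresh encryption secret $g'$. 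No noise bound supplies that.

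The paper does not try to argue this directly. Instead it inserts an extra hybrid (the hop $G_1\to G_2$ in the proof of Lemma~\ref{lemma:INDr}): after $\mathsf{pk}$ has been replaced by a uniform $\widetilde{\mathsf{pk}}=(\widetilde{\mathbf{a}},\widetilde{\mathbf{b}})$, the challenger silently swaps the adversary's $\mathsf{pk}'$ for a \emph{fresh uniform} $\widetilde{\mathsf{pk}}'=(\widetilde{\mathbf{a}}',\widetilde{\mathbf{b}}')$ when forming the challenge ciphertext, arguing this is indistinguishable because, over the randomness of $(\widetilde{\mathbf{a}},\widetilde{\mathbf{b}})$, the randomized key $\mathsf{pk}'$ is itself indistinguishable from uniform. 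Only \emph{after} that swap does the paper invoke $\mathsf{RLWE}$ with secret $g'$ (the hop $G_2\to G_3$), which is now clean because the public part $(\widetilde{\mathbf{a}}',\widetilde{\mathbf{b}}')$ is genuinely uniform and independent of everything the adversary holds; the $\mathsf{KPr}$ proof (Lemma~\ref{lemma:KPr}) follows the same template with one such swap per base key. So your plan is right in spirit, but you should replace the appeal to the correctness bound by this explicit key-swap hop; that is the mechanism the paper actually uses to get past the obstacle you flagged.
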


The proof of Theorem~\ref{theorem:KOE-sec} is established by Lemma~\ref{lemma:KR}-\ref{lemma:KPr}.

\begin{lemma}\label{lemma:KR}
The key-oblivious encryption scheme described in Section~\ref{subsection:KOE-construction} is key randomizable defined in Section~\ref{subsection:KOE-security-model} under $\mathsf{RLWE}$ assumption.

\end{lemma}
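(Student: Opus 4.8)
The plan is to prove key randomizability through a short sequence of game hops, each relying on the (normal-form) $\mathsf{RLWE}$ assumption, where the secret is drawn from $\chi$ as justified at the end of Section~\ref{subsection:rings}. The central observation is that both the $b=0$ distribution (the adversary sees $\mathsf{pk}$ together with a randomization of it) and the $b=1$ distribution (the adversary sees $\mathsf{pk}$ together with an independent fresh key) can each be shown computationally indistinguishable from the \emph{same} intermediate distribution: the one in which the second key is a uniformly random pair $(\mathbf{u}_1,\mathbf{u}_2)\xleftarrow{\$}R_q^{\ell}\times R_q^{\ell}$, independent of $\mathsf{pk}=(\mathbf{a},\mathbf{b})=(\mathbf{a},\mathbf{a}\cdot s+\mathbf{e})$. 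The triangle inequality then bounds $\mathbf{Adv}_{\mathsf{KOE},\mathcal{A}}^{\mathsf{KR}}(\lambda)$ by a sum of $\mathsf{RLWE}$-advantages.

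For the $b=1$ side the reduction is immediate. The fresh key is $\mathsf{pk}_1=(\mathbf{a}_1,\mathbf{a}_1\cdot s_1+\mathbf{e}')$ with $s_1,\mathbf{e}'$ sampled independently of $\mathsf{pk}$, and since $\mathbf{a}_1$ is already uniform, a single $\mathsf{RLWE}$ instance with $\ell$ samples and secret $s_1$ shows that $(\mathbf{a}_1,\mathbf{a}_1\cdot s_1+\mathbf{e}')$ is indistinguishable from $(\mathbf{u}_1,\mathbf{u}_2)$. The reduction generates $\mathsf{pk}$ itself (hence knows $s,\mathbf{e}$), embeds its challenge as the second key, and forwards the adversary's guess.

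For the $b=0$ side I would use two hops. First, replace $\mathbf{b}=\mathbf{a}\cdot s+\mathbf{e}$ inside $\mathsf{pk}$ by a uniform $\mathbf{b}$: the reduction receives an $\mathsf{RLWE}$ challenge $(\mathbf{a},\mathbf{b})$ with secret $s$, samples $g,\mathbf{e}_1,\mathbf{e}_2\hookleftarrow\chi$ itself, and outputs $\big(\mathbf{a},\,\mathbf{b},\,\mathbf{a}\cdot g+\mathbf{e}_1,\,\mathbf{b}\cdot g+\mathbf{e}_2\big)$, so that the tuple equals the $b=0$ distribution when $\mathbf{b}$ is genuine and a new hybrid when $\mathbf{b}$ is uniform. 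After this hop both $\mathbf{a}$ and $\mathbf{b}$ are uniform and independent, and the pair $\big(\mathbf{a}\cdot g+\mathbf{e}_1,\,\mathbf{b}\cdot g+\mathbf{e}_2\big)$ is precisely a block of $2\ell$ $\mathsf{RLWE}$ samples under the single shared secret $g$ with uniform public elements $(\mathbf{a}\|\mathbf{b})\in R_q^{2\ell}$. A second $\mathsf{RLWE}$ invocation (with $2\ell$ samples, secret $g$) then replaces this pair by $(\mathbf{u}_1,\mathbf{u}_2)$, reaching the common intermediate distribution and completing the chain.

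I expect the only delicate point to be the bookkeeping in the first hop of the $b=0$ branch: the element $\mathbf{b}$ plays a double role, appearing both as part of the public key shown to the adversary \emph{and} as the factor multiplied by $g$ to build $\mathbf{b}'=\mathbf{b}\cdot g+\mathbf{e}_2$. The reduction must therefore carry the \emph{same} challenge value through both occurrences rather than resampling, or the embedding would be unfaithful. The reuse of a single secret $g$ across all $2\ell$ samples in the second hop is likewise essential, which is why we appeal to the multi-sample form of $\mathsf{RLWE}$ with secret drawn from $\chi$; everything else is routine, and the final bound is a sum of three negligible $\mathsf{RLWE}$ terms.
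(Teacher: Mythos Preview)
Your overall strategy is sound and close to the paper's, but as written the two chains do not meet at the same intermediate distribution. You declare the common midpoint to be ``real $\mathsf{pk}=(\mathbf{a},\mathbf{a}\cdot s+\mathbf{e})$ together with a uniform pair $(\mathbf{u}_1,\mathbf{u}_2)$''. The $b=1$ branch indeed lands there after one $\mathsf{RLWE}$ hop. But on the $b=0$ branch your first hop replaces $\mathbf{b}=\mathbf{a}\cdot s+\mathbf{e}$ by a uniform $\mathbf{b}$, and your second hop replaces $(\mathbf{a}\cdot g+\mathbf{e}_1,\mathbf{b}\cdot g+\mathbf{e}_2)$ by $(\mathbf{u}_1,\mathbf{u}_2)$; the endpoint is therefore $(\mathbf{a},\text{uniform }\mathbf{b},\mathbf{u}_1,\mathbf{u}_2)$, not $(\mathbf{a},\mathbf{a}\cdot s+\mathbf{e},\mathbf{u}_1,\mathbf{u}_2)$. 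The two branches differ in the second coordinate of the first key, so the triangle inequality does not close. The fix is trivial---add a third hop on the $b=0$ side reversing the first one (switch uniform $\mathbf{b}$ back to $\mathbf{a}\cdot s+\mathbf{e}$), or symmetrically add a hop on the $b=1$ side making $\mathsf{pk}$ uniform there as well---but the sentence ``reaching the common intermediate distribution and completing the chain'' is currently false.

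For comparison, the paper organizes the argument more symmetrically: it first replaces $\mathsf{pk}$ by a uniform pair $(\widetilde{\mathbf{a}},\widetilde{\mathbf{b}})$ once, so that this hop is shared by both branches, and then argues that in the $b=0$ case the tuple $(\widetilde{\mathbf{a}},\widetilde{\mathbf{b}},\widetilde{\mathbf{a}}\cdot g+\mathbf{e}_1,\widetilde{\mathbf{b}}\cdot g+\mathbf{e}_2)$ is $2\ell$ samples from $A_{g,\chi}$ and hence pseudorandom, while in the $b=1$ case the independent fresh key is pseudorandom directly. This avoids the mismatch you ran into and uses the same total number of $\mathsf{RLWE}$ invocations.
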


\begin{proof}
Notice that the samples chosen according to $\mathcal{A}_{s,\chi}$ for some $s\hookleftarrow \chi$ are indistinguishable from random under the $\mathsf{RLWE}$ assumption. Therefore, the honestly generated public key $\mathsf{pk}=(\mathbf{a},\mathbf{b})\in R_q^{\ell}\times R_q^{\ell}$ is indistinguishable from truly random pair $\widetilde{\mathsf{pk}}=(\widetilde{\mathbf{a}},\widetilde{\mathbf{b}})\in R_q^{\ell}\times R_q^{\ell}$.  Hence, we may replace $\mathsf{pk}$ with $\widetilde{\mathsf{pk}}$ and this modification is negligible to the adversary.

Let $\mathsf{pk}_0=(\widetilde{\mathbf{a}}\cdot g+\mathbf{e}_1,\widetilde{\mathbf{b}}\cdot g+\mathbf{e}_2)$ and $\mathsf{pk}_1=(\mathbf{a}',\mathbf{a}'\cdot s'+\mathbf{e}')$, where $\mathsf{pk}_1$ is independent of $\widetilde{\mathsf{pk}}$. When $b=0$,  adversary is given $(\widetilde{\mathbf{a}},\widetilde{\mathbf{b}},\widetilde{\mathbf{a}}\cdot g+\mathbf{e}_1,\widetilde{\mathbf{b}}\cdot g+\mathbf{e}_2)$, which are $2\ell$  samples chosen according to $\mathcal{A}_{g,\chi}$. Therefore, $(\widetilde{\mathsf{pk}},\mathsf{pk}_0)$ is indistinguishable from $2\ell$ samples chosen according to $U(R_q\times R_q)$. When $b=1$,  adversary is given $(\widetilde{\mathbf{a}},\widetilde{\mathbf{b}},\mathbf{a'},\mathbf{a'}\cdot s'+\mathbf{e'})$. Since $\mathsf{pk}_1$ is independent of $\widetilde{\mathsf{pk}}$, so we can replace $\mathsf{pk}_1$ with a truly random pair. Hence, $(\widetilde{\mathsf{pk}},\mathsf{pk}_1)$ is also indistinguishable from $2\ell$ samples chosen according to $U(R_q\times R_q)$. Therefore, the adversary cannot distinguish the case $b=0$ from the case $b=1$.

It then follows that the advantage of any $\mathrm{PPT}$ adversary in the experiment $\mathbf{Exp}_{\mathsf{KOE},\mathcal{A}}^{\mathsf{KR}}(\lambda)$ is negligible and hence our $\mathsf{KOE}$ scheme is key randomizable.
\end{proof}

\begin{lemma}\label{lemma:INDr}

The key-oblivious encryption scheme described in Section~\ref{subsection:KOE-construction} is plaintext indistinguishable under key randomization defined in Section~\ref{subsection:KOE-security-model} under $\mathsf{RLWE}$ assumption.

\end{lemma}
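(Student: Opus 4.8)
The plan is to proceed by a short sequence of game hops, reducing the adversary's advantage to two applications of decision-$\mathsf{RLWE}$. Let $G_0$ be the real experiment $\mathbf{Exp}_{\mathsf{KOE},\mathcal{A}}^{\mathsf{INDr}}(\lambda)$. In the first hop $G_0 \to G_1$ I would replace the honestly generated public key $\mathsf{pk}=(\mathbf{a},\mathbf{b})=(\mathbf{a},\mathbf{a}\cdot s+\mathbf{e})$ by a uniformly random pair $\widetilde{\mathsf{pk}}=(\mathbf{a},\widetilde{\mathbf{b}})\in R_q^\ell\times R_q^\ell$. This hop is justified exactly as in the proof of Lemma~\ref{lemma:KR}: an $\mathsf{RLWE}$ distinguisher (secret $s$) receives $\ell$ samples, plants them as $\mathsf{pk}$, and simulates the rest of the experiment by itself, since it can check $\mathsf{pk}'=\mathsf{KeyRand}(\mathsf{pk},r)$ from the adversary's output, pick the bit $b$, and run $\mathsf{Enc}(\mathsf{pk}',\mathfrak{m}_b)$ on its own. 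Hence $|\mathrm{Pr}[G_0=1]-\mathrm{Pr}[G_1=1]|\le \mathbf{Adv}^{\mathsf{RLWE}}$. The reduction never needs to understand how $\mathbf{b}$ is later used inside $\mathsf{pk}'$; it only forwards $\mathsf{pk}$ and computes the challenge ciphertext honestly.

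In $G_1$ the adversary outputs $r=(g,\mathbf{e}_1,\mathbf{e}_2)$ together with the forced value $\mathsf{pk}'=(\mathbf{a}',\mathbf{b}')=(\mathbf{a}\cdot g+\mathbf{e}_1,\;\widetilde{\mathbf{b}}\cdot g+\mathbf{e}_2)$. The key structural observation is that, whenever $g$ is a unit in $R_q$, multiplication by $g$ is a bijection of $R_q^\ell$ and thus maps the uniform $\mathbf{a}$ (resp. the independent uniform $\widetilde{\mathbf{b}}$) to a uniform element; adding the fixed $\mathbf{e}_1$ (resp. $\mathbf{e}_2$) preserves uniformity. Therefore $(\mathbf{a}',\mathbf{b}')$ is uniformly distributed over $R_q^\ell\times R_q^\ell$. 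Consequently the challenge ciphertext, with the encoded message subtracted, is exactly
\[
(\mathbf{c}_1,\;\mathbf{c}_2-\lfloor q/4\rfloor\cdot\rdec(\mathfrak{m}_b)) \;=\; (\mathbf{a}'\|\mathbf{b}')\cdot g' + (\mathbf{e}'_1\|\mathbf{e}'_2),
\]
a genuine $\mathsf{RLWE}$ sample of dimension $2\ell$ under the uniform matrix $(\mathbf{a}'\|\mathbf{b}')$, with fresh secret $g'\hookleftarrow\chi$ and fresh noise. In the second hop $G_1\to G_2$ I would invoke decision-$\mathsf{RLWE}$ (secret $g'$) to replace this pair by a uniform one; then $\mathbf{c}_2$ is uniform and independent of $b$, so $\mathrm{Pr}[G_2=1]=1/2$ and the overall $\mathsf{INDr}$ advantage is bounded by a sum of two $\mathsf{RLWE}$ advantages, which is negligible.

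The delicate point, and the step I expect to be the real obstacle, is making the hop $G_1\to G_2$ rigorous. Because the $\mathsf{RLWE}$ public matrix here is $(\mathbf{a}'\|\mathbf{b}')$, which is \emph{determined by the adversary's randomness $r$ chosen only after $\mathsf{pk}$ has been fixed}, one cannot simply embed a given $\mathsf{RLWE}$ challenge matrix into $(\mathbf{a}',\mathbf{b}')$: the samples returned by an $\mathsf{RLWE}$ oracle carry their own public parts, and the public part of an $\mathsf{RLWE}$ sample cannot be retargeted to a prescribed value (nor can one program $\mathsf{pk}$ backward, as that would require knowing $r$ in advance). The way around this is precisely the uniformity established above: once $(\mathbf{a}',\mathbf{b}')$ is argued to be (statistically) uniform, the challenge ciphertext has the \emph{distribution} of an LPR encryption under a uniformly random key, so the hop follows from semantic security of the LPR scheme under a uniform key rather than from a direct matrix-planting reduction. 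This in turn forces one to treat the case where $g$ fails to be a unit; in particular a malicious $r$ with $g=0$ yields the short key $\mathsf{pk}'=(\mathbf{e}_1,\mathbf{e}_2)$, for which $\mathbf{c}_2=\mathbf{e}_2\cdot g'+\mathbf{e}'_2+\lfloor q/4\rfloor\cdot\rdec(\mathfrak{m}_b)$ would \emph{leak} $\mathfrak{m}_b$ by rounding. I would therefore pin down a ring/parameter condition ensuring that every short nonzero element of $R_q$ is invertible (so that the only short non-unit is $g=0$), and argue that such degenerate randomness is excluded as invalid or contributes negligibly; the $B$-bounded noise used in the correctness analysis is exactly what keeps the honest case inside the decryptable regime.
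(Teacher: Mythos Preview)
Your overall structure matches the paper's: a hybrid sequence that first replaces $\mathsf{pk}$ by a uniform pair via $\mathsf{RLWE}$, then argues the ciphertext is pseudorandom. The paper uses five games $G_0,\dots,G_4$; its $G_1\to G_2$ swaps encryption under the adversary's $\mathsf{pk}'$ for encryption under a \emph{fresh} uniform key, justified only by the one sentence ``$\mathsf{pk}'$ obtained from randomizing $\widetilde{\mathsf{pk}}$ is indistinguishable from random,'' and $G_2\to G_3$ is then the standard LPR step. You collapse these two hops into one and engage with the justification far more explicitly than the paper does.

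There are, however, two concrete gaps in your argument. First, the claim that $(\mathbf{a}',\mathbf{b}')$ is statistically uniform whenever $g$ is a unit is not correct as written: the adversary chooses $r=(g,\mathbf{e}_1,\mathbf{e}_2)$ \emph{after} seeing $(\mathbf{a},\widetilde{\mathbf{b}})$, so $g$ may depend on them. The bijection reasoning ``$\mathbf{a}\mapsto \mathbf{a}\cdot g$ is uniform because $g$ is a unit'' requires $g$ to be fixed independently of $\mathbf{a}$; with an adaptive $g=g(\mathbf{a},\widetilde{\mathbf{b}})$ the composite map $(\mathbf{a},\widetilde{\mathbf{b}})\mapsto(\mathbf{a}',\mathbf{b}')$ need not be a bijection, and $(\mathbf{a}',\mathbf{b}')$ need not be uniform (toy schema: let $g\in\{1,2\}$ depending on the parity of one coordinate of $\mathbf{a}$). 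The $B$-boundedness of $g$ limits the adversary but does not by itself restore uniformity. Second, your treatment of $g=0$ does not close: if $0\in\mathrm{supp}(\chi)$ the adversary may deterministically set $g=0$, which is neither rejected by the experiment's check $\mathsf{pk}'=\mathsf{KeyRand}(\mathsf{pk};r)$ nor a negligible event, and your own rounding observation then recovers $\mathfrak{m}_b$. The paper's proof does not confront either point---it simply asserts the $G_1\to G_2$ indistinguishability without discussing adaptivity or degenerate $g$---so your proposal is no less rigorous than what appears in the paper, but the step you correctly flag as ``the real obstacle'' remains unresolved in both.
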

\begin{proof}
Let $\mathcal{A}$ be any PPT adversary attacking the plaintext indistinguishability under key randomization with advantage $\epsilon$, we will show $\epsilon=\mathrm{negl}(\lambda)$ assuming the hardness of the $\mathsf{RLWE}$ problem.  Specifically, we construct a sequence of indistinguishable games $G_0,G_1, G_2, G_3,G_4$, such that, $\mathbf{Adv}_{\mathcal{A}}(G_0)=\epsilon$ and $\mathbf{Adv}_{\mathcal{A}}(G_4)=0$.
\begin{description}
\item[Game~$G_0$:] This is the real experiment $\mathbf{Exp}_{\mathsf{KOE},\mathcal{A}}^{\mathsf{INDr}}(\lambda)$. The challenger generates a public key $\mathsf{pk}=(\mathbf{a},\mathbf{b})=(\mathbf{a},\mathbf{a}\cdot s+\mathbf{e})$ honestly, sends it to the adversary $\mathcal{A}$, receives back a randomized key pair $\mathsf{pk}'=(\mathbf{a}\cdot g+\mathbf{e}_1,\mathbf{b}\cdot g+\mathbf{e}_2)$, the randomness used to generate $\mathsf{pk}'$, and two messages $p_0,p_1\in R_q$. The challenger first checks whether  $\mathsf{pk}'$ is generated from the randomness or not. If not, the challenger returns $\bot$. Otherwise, he samples $b\xleftarrow{\$}\{0,1\}$ and encrypts the message $p_b$ to  ciphertext $(\mathbf{c}_1,\mathbf{c}_2)=(\mathbf{a}'\cdot g'+\mathbf{e}'_1,\hspace*{6.8pt}\mathbf{b}'\cdot g'+\mathbf{e}'_2+\lfloor q/4\rfloor\cdot\rdec(p_b))$ and sends $(\mathbf{c}_1,\mathbf{c}_2)$ to the adversary $\mathcal{A}$, who then outputs $b'\in\{0,1\}$.  This game outputs $1$ if $b'=b$ or~$0$ otherwise. By assumption, $\mathcal{A}$ has advantage $\epsilon$ in this game.
    \smallskip

\item[Game~$G_1$:] In this game,we make a slight modification to the Game $G_0$: the public key $\mathsf{pk}$ is replaced with a truly random pair $\widetilde{\mathsf{pk}}=(\widetilde{\mathbf{a}},\widetilde{\mathbf{b}})$. By the $\mathsf{RLWE}_{n,q,\ell,\chi}$ assumption, the adversary cannot distinguish  $\mathsf{pk}=(\mathbf{a},\mathbf{b})$ from uniform. It then follows that $G_0$ is indistinguishable from $G_1$. We additionally remark that $\mathsf{pk}'$ obtained from randomizing $\widetilde{\mathsf{pk}}$ is indistinguishable from random by the same assumption.
\smallskip
\item[Game~$G_2$:] In this game, we modify $G_1$ as follows: instead of generating $(\mathbf{c}_1,\mathbf{c}_2)$ faithfully using the randomized public key $\mathsf{pk}'$, we generate ciphertext $(\mathbf{c}_1,\mathbf{c}_2)$ as $(\widetilde{\mathbf{a}'}\cdot g'+\mathbf{e}'_1,\hspace*{6.8pt}\widetilde{\mathbf{b}'}\cdot g'+\mathbf{e}'_2+\lfloor q/4\rfloor\cdot\rdec(p_b))$, where $\widetilde{\mathsf{pk}'}=(\widetilde{\mathbf{a}'},\widetilde{\mathbf{b}'})$ is uniformly chosen over $R_q^{\ell}\times R_q^{\ell}$. Since $\mathsf{pk}'$ obtained from randomizing $\widetilde{\mathsf{pk}}$ is indistinguishable from random,  this modification is indistinguishable to adversary $\mathcal{A}$. 

\smallskip
\item[Game~$G_3$:]  In this game, we generate $(\mathbf{c}_1,\mathbf{c}_2)$ as $(\mathbf{z}_1,\mathbf{z}_2+\lfloor q/4\rfloor\cdot\rdec(p_b))$, where $(\mathbf{z}_1,\mathbf{z}_2)\in R_q^{\ell}\times R_q^{\ell}$ are uniformly random. The assumed hardness of the $\mathsf{RLWE}_{n,q,\ell,\chi}$ problem implies that $G_2$ and $G_3$ are computationally indistinguishable.
\smallskip
\item[Game~$G_4$:] In the game, we make a conceptual modification to $G_3$. Namely, we sample uniformly random $\mathbf{z}'_1\in R_q^{\ell}$ and $\mathbf{z}'_2\in R_q^{\ell}$ and let $(\mathbf{c}_1,\mathbf{c}_2)=(\mathbf{z}'_1,\mathbf{z}'_2)$. It is clear that $G_3$ and $G_4$ are statistically indistinguishable. Moreover, since $G_4$ is no longer dependent on the challenger's bit $b$, the advantage of $\mathcal{A}$ in this game is $0$.

 \end{description}
It follows from the above construction that the advantage $\epsilon$ of the adversary $\mathcal{A}$ is negligible. This concludes the proof.
\end{proof}
\begin{lemma}\label{lemma:KPr}

The key-oblivious encryption scheme described in Section~\ref{subsection:KOE-construction} is key private under key randomization defined in Section~\ref{subsection:KOE-security-model} under $\mathsf{RLWE}_{n,q,\chi}$ assumption.

\end{lemma}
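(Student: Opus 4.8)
The plan is to prove key privacy by a sequence of indistinguishable games that mirrors the proofs of Lemmas~\ref{lemma:KR} and~\ref{lemma:INDr}, the goal being to show that the challenge ciphertext $\mathsf{ct}\leftarrow\mathsf{Enc}(\mathsf{pk}'_b,\mathfrak{m})$ is computationally indistinguishable from a uniformly random pair over $R_q^{\ell}\times R_q^{\ell}$, and hence carries no information about the bit $b$. Write $\mathsf{pk}_c=(\mathbf{a}_c,\mathbf{b}_c)=(\mathbf{a}_c,\mathbf{a}_c\cdot s_c+\mathbf{e}_c)$ for $c\in\{0,1\}$, and let the adversary's randomization coins be $r_c=(g_c,\mathbf{e}_{1,c},\mathbf{e}_{2,c})$, so that the consistency check forces $\mathsf{pk}'_c=(\mathbf{a}'_c,\mathbf{b}'_c)=(\mathbf{a}_c\cdot g_c+\mathbf{e}_{1,c},\,\mathbf{b}_c\cdot g_c+\mathbf{e}_{2,c})$ and the challenge is $(\mathbf{c}_1,\mathbf{c}_2)=(\mathbf{a}'_b\cdot g'+\mathbf{e}'_1,\,\mathbf{b}'_b\cdot g'+\mathbf{e}'_2+\lfloor q/4\rfloor\cdot\rdec(\mathfrak{m}))$.

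I would proceed with the following steps in order. First, in game $G_1$ I would replace \emph{both} honestly generated keys $\mathsf{pk}_0,\mathsf{pk}_1$ with truly uniform pairs $\widetilde{\mathsf{pk}}_0,\widetilde{\mathsf{pk}}_1$; since each $\mathsf{pk}_c$ consists of $\ell$ samples from $\mathcal{A}_{s_c,\chi}$, a two-step hybrid invoking the $\mathsf{RLWE}$ assumption (exactly as in Lemma~\ref{lemma:KR}) shows $G_0\approx G_1$. Second, observing that $\widetilde{\mathbf{a}}_b,\widetilde{\mathbf{b}}_b$ are now uniform and independent, I would argue that the public part $(\mathbf{a}'_b,\mathbf{b}'_b)$ of the randomized key used for encryption is (statistically) uniform given $r_b$, so in game $G_2$ it can be replaced by a fresh uniform pair $(\widetilde{\mathbf{a}'},\widetilde{\mathbf{b}'})$. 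Third, in game $G_3$ I would apply the $\mathsf{RLWE}_{n,q,\ell,\chi}$ assumption to the \emph{fresh} encryption randomness: with secret $g'\hookleftarrow\chi$ and uniform public $(\widetilde{\mathbf{a}'},\widetilde{\mathbf{b}'})$, the pair $(\widetilde{\mathbf{a}'}\cdot g'+\mathbf{e}'_1,\,\widetilde{\mathbf{b}'}\cdot g'+\mathbf{e}'_2)$ is replaced by uniform $(\mathbf{z}_1,\mathbf{z}_2)$, so that $(\mathbf{c}_1,\mathbf{c}_2)=(\mathbf{z}_1,\mathbf{z}_2+\lfloor q/4\rfloor\cdot\rdec(\mathfrak{m}))$. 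Finally, in the conceptual game $G_4$ I would note that $\mathbf{z}_2+\lfloor q/4\rfloor\cdot\rdec(\mathfrak{m})$ is itself uniform, so the challenge is uniform and independent of $b$; hence the advantage in $G_4$ is $0$ and, by the chain $G_0\approx\cdots\approx G_4$, the advantage in the real experiment is negligible.

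The main obstacle is the second step, namely justifying that $\mathsf{pk}'_b$ may be treated as uniform in the encryption even though the adversary chooses and reveals the randomization coins $r_b$. Unlike in Lemma~\ref{lemma:KR}, where $g_b,\mathbf{e}_{1,b},\mathbf{e}_{2,b}$ are sampled honestly and kept secret, here they are adversarial, so I cannot invoke the $\mathsf{RLWE}$-pseudorandomness of $\mathsf{KeyRand}$ directly. The sharp point is that, once $\widetilde{\mathbf{a}}_b$ is uniform, the map $\mathbf{x}\mapsto\mathbf{x}\cdot g_b+\mathbf{e}_{1,b}$ is a bijection on $R_q^{\ell}$ precisely when $g_b$ acts invertibly, in which case $\mathbf{a}'_b$ is \emph{exactly} uniform given $r_b$ (and likewise $\mathbf{b}'_b$, using the independent uniform $\widetilde{\mathbf{b}}_b$), making $(\mathbf{a}'_b,\mathbf{b}'_b)$ uniform and legitimizing the $\mathsf{RLWE}$ step on the encryption. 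I would therefore need to argue invertibility of the adversarially chosen $g_b$ in $R_q=\mathbb{Z}[X]/(X^n+1)$ for the chosen modulus $q=\widetilde{\mathcal{O}}(n^4)$ (e.g.\ by restricting the admissible randomness space to invertible $g_b$, or by a lemma guaranteeing that the relevant elements are units in $R_q$ with overwhelming probability); with this technical ingredient in place, every remaining transition is a routine $\mathsf{RLWE}$ or statistical argument and the advantage collapses to $0$.
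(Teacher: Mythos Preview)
Your game sequence mirrors the paper's proof exactly. The paper also first replaces $\mathsf{pk}_0$ and then $\mathsf{pk}_1$ by uniform pairs via $\mathsf{RLWE}$ (its $G_1,G_2$), then in its $G_3$ replaces the encryption key $\mathsf{pk}'_b$ by a fresh uniform pair $\widetilde{\mathsf{pk}'}$, deferring the justification to ``the same argument'' as in Lemma~\ref{lemma:INDr}, and concludes that the resulting ciphertext is independent of $b$. So structurally you and the paper do the same thing; the paper simply does not isolate the step you call $G_2$ as an obstacle.

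You are right, however, that this step \emph{is} the crux, and your proposed fix via invertibility of $g_b$ does not close it. In the $\mathsf{KPr}$ experiment the adversary supplies the coins $r_b$ and hence determines $g_b$; nothing in the scheme or the consistency check $\mathsf{pk}'_b=\mathsf{KeyRand}(\mathsf{pk}_b,r_b)$ forces $g_b$ to be a unit in $R_q$, and $g_b=0$ (or any zero-divisor) is a perfectly admissible choice since $0$ lies in the support of $\chi$. With $g_0=0$ one gets $\mathsf{pk}'_0=(\mathbf{e}_{1,0},\mathbf{e}_{2,0})$, which is $B$-bounded, so the first ciphertext component $\mathbf{a}'_0\cdot g'+\mathbf{e}'_1$ is itself short; choosing $r_1$ generically makes $\mathbf{a}'_1\cdot g'+\mathbf{e}'_1$ look uniform, and the adversary distinguishes. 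Thus ``restrict to invertible $g_b$'' changes the scheme, and ``invertible with overwhelming probability'' does not apply when $g_b$ is adversarially fixed rather than honestly sampled. In short, the obstacle you flagged is a genuine gap---one that the paper's own brief proof (and the $G_1\to G_2$ step of Lemma~\ref{lemma:INDr} to which it defers) glosses over---and the invertibility argument alone does not resolve it.
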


\begin{proof}
The proof of Lemma~\ref{lemma:KPr} is similar to that of Lemma~\ref{lemma:INDr}, we briefly describe it here.  As in Lemma~\ref{lemma:INDr}, we construct a sequence of indistinguishable games $G_0,G_1, G_2, G_3$, such that, $\mathbf{Adv}_{\mathcal{A}}(G_0)=\mathbf{Adv}_{\mathsf{KOE},\mathcal{A}}^{\mathsf{KPr}}(\lambda)$ and $\mathbf{Adv}_{\mathcal{A}}(G_3)=0$.

Game $G_0$ is the experiment $\mathbf{Exp}_{\mathsf{KOE},\mathcal{A}}^{\mathsf{KPr}}(\lambda)$, Game $G_1$ modifies Game $G_0$ by replacing public key $\mathsf{pk}_0$  with truly random pair $\widetilde{\mathsf{pk}}_0$ while Game $G_2$ modifies Game $G_1$ by replacing public key $\mathsf{pk}_1$  with another independent and  random pair $\widetilde{\mathsf{pk}}_1$. By the hardness of the $\mathsf{RLWE}_{n,q,\ell,\chi}$ problem, these two modifications are indistinguishable to any PPT adversary. In Game $G_3$, we further modify Game $G_2$ by generating the ciphertext $(\mathbf{c}_1,\mathbf{c}_2)$ using $\widetilde{\mathsf{pk}'}$ chosen uniformly over $R_q^{\ell}\times R_q^{\ell}$ as in Lemma~\ref{lemma:INDr}. By the same argument, this change is negligible to any PPT adversary. Furthermore, since $G_3$ is no longer dependent on the challenger's bit $b$, the advantage of adversary in this game is $0$. This ends the brief description.

\end{proof}

\section{Handling Quadratically Hidden RLWE Relations}\label{section:Stern-quad-RLWE}
In Section~\ref{subsection:extended-permutation-technique}, we extend the refined permuting technique recalled in Section~\ref{subsection:permutation-technique-by-lnwx} to prove that a secret integer $y$ is multiplication of two secret integers $a\in\{-1,0,1\}$ and $g\in\{-1,0,1\}$. We then describe our zero-knowledge protocol for handling quadratic relations in the $\mathsf{RLWE}$ setting in Section~\ref{subsection:zk-for-rlwe}. Specifically, we demonstrate how to prove in zero-knowledge that a give vector $\mathbf{c}$ is a correct $\mathsf{RLWE}$ evaluation, i.e.,~$\mathbf{c}=\mathbf{a}\cdot g+\mathbf{e}$, where the hidden vectors $\mathbf{a},\mathbf{e}$ and element $g$ may satisfy additional conditions. The protocol is developed based on Libert et al.'s work~\cite{LLMNW16-ge} on quadratic relations in the general lattice setting.
\subsection{Our Extended Permuting Technique}\label{subsection:extended-permutation-technique}
{\bf Proving that $y = a \cdot g$. }
For any $a, g\in \{-1,0,1\}$, define vector $\mathsf{mult}_3(a,g) \in \{-1,0,1\}^9$ of the following form:
\begin{align*}
&\mathsf{mult}_3(a,g)=
\big(
[a+1]_3\cdot [g+1]_3, \hspace*{3.6pt}
[a]_3 \cdot [g+1]_3, \hspace*{3.6pt}
[a-1]_3 \cdot [g+1]_3, \hspace*{3.6pt}
[a+1]_3 \cdot [g]_3, \hspace*{3.6pt}\\
&[a]_3\cdot [g]_3, \hspace*{3.6pt}
[a-1]_3\cdot [g]_3\hspace*{3.6pt},
[a+1]_3 \cdot [g-1]_3, \hspace*{3.6pt}
[a]_3\cdot [g-1]_3, \hspace*{3.6pt}
[a-1]_3\cdot [g-1]_3
\big)^\top.
\end{align*}
Then for any $b, e \in \{-1,0,1\}$, we define the permutation $\phi_{b,e}(\cdot)$ that acts in the following way. It maps vector $\mathbf{v}$ of the following form
\begin{align*}
\mathbf{v}=
\big(v^{(-1, -1)}, v^{(0,-1)},v^{(1,-1)}, v^{(-1,0)}, v^{(0,0)},
v^{(1,0)},v^{(-1,1)},v^{(0,1)}, v^{(1,1)}\big)^\top \in \mathbb{Z}^9
\end{align*}
into vector $\phi_{b,e}(\mathbf{v})$ of the following form
\begin{align*}
\phi_{b,e}(\mathbf{v})=
\big( &
v^{([-b-1]_3, [-e-1]_3)},
v^{([-b]_3, [-e-1]_3)},
v^{([-b+1]_3, [-e-1]_3)}, \\
&v^{([-b-1]_3, [-e]_3)},
v^{([-b]_3, [-e]_3)},
v^{([-b+1]_3, [-e]_3)}, \\
&v^{([-b-1]_3, [-e+1]_3)},
v^{([-b]_3, [-e+1]_3)},
v^{([-b+1]_3, [-e+1]_3)}
\big)^\top.
\end{align*}
Then for any $a, b, g, e \in \{-1,0,1\}$, one is able to check that the following equivalence is satisfied.
\begin{eqnarray}\label{eq:equivalence-ext-3}
\mathbf{v} =  \mathsf{mult}_3(a,g)
 \Longleftrightarrow
\phi_{b,e}(\mathbf{v})  =  \mathsf{mult}_3( [a + b]_3, \hspace*{-1pt}[g + e]_3).
\end{eqnarray}
Note that the above equivalence in~(\ref{eq:equivalence-ext-3}) is essential to prove knowledge of such secret integer $y$ in the framework of Stern's protocol. We first extend $y$ to vector $\mathbf{v}=\mathsf{mult}_3(a,g)$, sample uniform $b\in\{0,1\}$ and $e\in\{-1,0,1\}$, and then demonstrate to the verifier $\phi_{b,e}(\mathbf{v})= \mathsf{mult}_3(\hspace*{1.6pt}[a+b]_3, \hspace*{1.6pt}[g + e]_3\hspace*{1.6pt})$. Due to the equivalence in~(\ref{eq:equivalence-ext-3}), the verifier should be convinced of the well-formedness of $y$ and no extra information is revealed to him. Furthermore, the technique is extendable so that we can use the same ``one time pads'' $b$ and  $e$ at the places where $a$ and $g$ appear, respectively.

\smallskip

Now we generalize the above technique to prove knowledge of vector of the following expansion form. We aim to obtain equivalence similar to~(\ref{eq:equivalence-ext-3}), which is useful in Stern's framework.
\smallskip

\noindent
{\bf Handling an expansion vector. } 
%
We now tackle an expansion vector $\mathbf{y}=\mathsf{expd}(\mathbf{a},\mathbf{g})$ of the form $\mathbf{y}=(\mathbf{y}_0\|\ldots\|\mathbf{y}_{n-1})\in\{-1,0,1\}^{n^2\ell\delta_B}$, where $\mathbf{y}_i$ is of the following form
\begin{align*}
\mathbf{y}_i=(a_{1}\cdot g_{i,1},\hspace*{2pt}\ldots,\hspace*{2pt}a_{1}\cdot g_{i,\delta_B},\hspace*{2pt}\ldots,\hspace*{2pt}
a_{n\ell}\cdot g_{i,1}\hspace*{2pt},\ldots,\hspace*{2pt}a_{n\ell}\cdot g_{i,\delta_B}),
\end{align*}
$\mathbf{g}\in\{-1,0,1\}^{n\delta_B}$ is of the form
\begin{align*}
\mathbf{g}=(g_{0,1},g_{0,2},\ldots,g_{0,\delta_B},\ldots,g_{n-1,1},g_{n-1,2},\ldots,g_{n-1,\delta_B})^\top,
\end{align*}  and $\mathbf{a}=(a_1,\ldots,a_{n\ell})^\top\in\{-1,0,1\}^{n\ell}$ for some positive integers $n,\ell,\delta_B$. \smallskip

Denote $\mathbf{y}=(a_i\cdot g_{j,k})_{i\in[n\ell],j\in[0,n-1],k\in[\delta_B]}$, we then
define an extension of the expansion vector $\mathbf{y}$ as $\mathsf{mult}(\mathbf{a},\mathbf{g})=(\mathsf{mult}_{3}(a_i,g_{j,k}))_{i\in[n\ell],j\in[0,n-1],k\in[\delta_B]}\in\{-1,0,1\}^{9n^2\ell\delta_B}$.

For  $\mathbf{e}=(e_{0,1},e_{0,2},\ldots,e_{0,\delta_B},\ldots,e_{n-1,1},e_{n-1,2},\ldots,e_{n-1,\delta_B})^\top\in\{-1,0,1\}^{n\delta_B}$ and $\mathbf{b}=(b_1,\ldots,b_{n\ell})^\top\in\{-1,0,1\}^{n\ell}$, we define the permutation $\Phi_{\mathbf{b},\mathbf{e}}(\cdot)$ that behaves as follows. It maps   vector $\mathbf{v}\in\mathbb{Z}^{9n^2\ell\delta_B}$ of the following form: \begin{align*}
\big(&\mathbf{v}_{1,0,1}\|\cdots\|\mathbf{v}_{1,0,\delta_B}\|\cdots\|\mathbf{v}_{n\ell,0,1}\|\cdots\|\mathbf{v}_{n\ell,0,\delta_B}\|\\
&\mathbf{v}_{1,1,1}\|\cdots\|\mathbf{v}_{1,1,\delta_B}\|\cdots\|\mathbf{v}_{n\ell,1,1}\|\cdots\|\mathbf{v}_{n\ell,1,\delta_B}\|\\
&\cdots\cdot\\
&\mathbf{v}_{1,n-1,1}\|\cdots\|\mathbf{v}_{1,n-1,\delta_B}\|\cdots\|\mathbf{v}_{n\ell,n-1,1}\|\cdots\|\mathbf{v}_{n\ell,n-1,\delta_B}\big)
\end{align*}
which consists of blocks of size $9$, to vector $\Phi_{\mathbf{b},\mathbf{e}}(\mathbf{v})$ of the following form:
\begin{align*}
\big(
&\phi_{b_1,e_{0,1}}(\mathbf{v}_{1,0,1})\|\cdots\|\phi_{b_1,e_{0,\delta_B}}(\mathbf{v}_{1,0,\delta_B})\|\cdots\|\\
&\phi_{b_{n\ell},e_{0,1}}(\mathbf{v}_{n\ell,0,1})\|\cdots\|\phi_{b_{n\ell},e_{0,\delta_B}}(\mathbf{v}_{n\ell,0,\delta_B})\|\\
&\phi_{b_1,e_{1,1}}(\mathbf{v}_{1,1,1})\|\cdots\|\phi_{b_1,e_{1,\delta_B}}(\mathbf{v}_{1,1,\delta_B})\|\cdots\|\\
&\phi_{b_{n\ell},e_{1,1}}(\mathbf{v}_{n\ell,1,1})\|\cdots\|\phi_{b_{n\ell},e_{1,\delta_B}}(\mathbf{v}_{n\ell,1,\delta_B})\|\\
&\cdots\cdot\\
&\phi_{b_1,e_{n-1,1}}(\mathbf{v}_{1,n-1,1})\|\cdots\|\phi_{b_1,e_{n-1,\delta_B}}(\mathbf{v}_{1,n-1,\delta_B})\|\cdots\|\\
&\phi_{b_{n\ell},e_{n-1,1}}(\mathbf{v}_{n\ell,n-1,1})\|\cdots\|\phi_{b_{n\ell},e_{n-1,\delta_B}}(\mathbf{v}_{n\ell,n-1,\delta_B})\big)
\end{align*}
For any $\mathbf{a},\mathbf{b}\in\{-1,0,1\}^{n\ell}$ and any $\mathbf{g},\mathbf{e}\in\{-1,0,1\}^{n\delta_B}$, it then follows from~(\ref{eq:equivalence-ext-3}) that the following equivalence holds.
\begin{eqnarray}\label{eq:equivalence-ext-vector}
\mathbf{v} = \mathsf{mult}(\mathbf{a},\mathbf{g}) \hspace*{6.8pt}\Longleftrightarrow \hspace*{6.8pt} \Phi_{\mathbf{b},\mathbf{e}}(\mathbf{v}) = \mathsf{mult}([\mathbf{a} + \mathbf{b}]_3, [\mathbf{g} + \mathbf{e}]_3).
\end{eqnarray}

\subsection{Proving the RLWE Relation with Hidden Vector}\label{subsection:zk-for-rlwe}
We are going to describe our statistical $\mathsf{ZKAoK}$ protocol for the $\mathsf{RLWE}$ relation with hidden vector.
Let $q, \ell, B$ be some integers and   $R, R_q$ be two rings, which are specified as in Section~\ref{subsection:KOE-construction}. Our goal is to design a $\mathsf{ZK}$ argument system that allows a prover $\mathcal{P}$ to convince a verifier $\mathcal{V}$ on input $\mathbf{c}\in R_q^{\ell}$ that $\mathcal{P}$ knows secrets $\mathbf{a}\in R_q^{\ell}$,  $g\in R_q$   and $\mathbf{e}\in R_q^{\ell}$   such that $g$ and $\mathbf{e}$ are $B$-bounded and
\begin{eqnarray}\label{equation:rlwe-initial-1}
\mathbf{c}=\mathbf{a}\cdot g+\mathbf{e}.
 \end{eqnarray}
 Furthermore, this protocol should be extendable such that we are able to prove that the secrets $\mathbf{a},g,\mathbf{e}$ satisfy other relations.

As in Section~\ref{subsection:zk-for-DM}, we aim to obtain an instance of the abstract protocol from  Section~\ref{subsection:Stern}. 
\smallskip

\noindent
{\sc Decomposing-Unifying.}
To start with, we also employ the notations $\mathsf{rot}$ and~$\tau$ from Section~\ref{subsection:rings}  and the decomposition techniques from Section~\ref{subsection:decomposition} to transform equation~(\ref{equation:rlwe-initial-1}) into
$\mathbf{M}_0\cdot \mathbf{w}_0 = \mathbf{u} \bmod q$, where $\mathbf{M}_0, \mathbf{u}$ are built from public input, and vector $\mathbf{w}_0$ is built from secret input  and  coefficients of which are in the set $\{-1,0,1\}$.
\smallskip

\noindent Let $\mathbf{a}=(a_1,a_2,\cdots,a_\ell)^\top$, $\tau(g)=(g_0,\cdots,g_{n-1})^\top$, $\mathbf{a}_i^{\star}=\tau(\rdec({a}_i))\in\{-1,0,1\}^{n\ell}$  $\forall~ i\in[\ell]$, $\mathbf{g}^{\star}=\tau(\rdec_B(g))\in\{-1,0,1\}^{n\delta_B}$. Let  $\mathbf{a}_i^{\star}=(a_{i,1},a_{i,2},\cdots, a_{i,n\ell})^\top$  $\forall~i\in[\ell]$, $\mathbf{g}^{\star}=(g_{0,1},\cdots g_{0,\delta_B},\cdots,g_{n-1,1},\cdots,g_{n-1,\delta_B})^\top$. We then have the following:
\begin{align*}
\tau(a_i\cdot g)&=\mathsf{rot}(a_i)\cdot \tau(g)
                 =[\tau(a_i)|\tau(a_i\cdot X)|\ldots|\tau(a_i\cdot X^{n-1})]\cdot \tau(g)   \\
                &= \sum_{j=0}^{n-1} \tau(a_i\cdot X^j)\cdot g_j
                 =\sum_{j=0}^{n-1} \mathsf{rot}(X^j)\cdot\tau(a_i)\cdot g_j
                =\sum_{j=0}^{n-1} \mathsf{rot}(X^j)\cdot \mathbf{H}\cdot \mathbf{a}_i^{\star}\cdot g_j \\
                &=\sum_{j=0}^{n-1} \mathsf{rot}(X^j)\cdot \mathbf{H}\cdot (a_{i,1}\cdot g_j,\ldots,a_{i,n\ell}\cdot g_j)^\top\bmod q\\
\end{align*}
\vspace*{-24pt}

Observe that, for each $k\in[n\ell]$, we have
\begin{align*}
a_{i,k}\cdot g_j&=a_{i,k}\cdot (B_1,\ldots,B_{\delta_B})\cdot (g_{j,1},\dots,g_{j,\delta_B})^\top \\
&=(B_1,\ldots,B_{\delta_B}) \cdot (a_{i,k}\cdot g_{j,1},\dots,a_{i,k}\cdot g_{j,\delta_B} )^\top
\end{align*}
Denote $\mathbf{y}_{i,j}\in\{-1,0,1\}^{n\ell\delta_B}$ of the following form: \[\mathbf{y}_{i,j}=(a_{i,1}\cdot g_{j,1},\dots,a_{i,1}\cdot g_{j,\delta_B},\ldots, a_{i,n\ell}\cdot g_{j,1},\dots,a_{i,n\ell}\cdot g_{j,\delta_B} )^\top,\]
we then obtain
$$(a_{i,1}\cdot g_j,\ldots,a_{i,n\ell}\cdot g_j)^\top = \mathbf{H}_{\ell,B}\cdot \mathbf{y}_{i,j}\mod q. $$
Define $\mathbf{Q}_0\in\mathbb{Z}_q^{n\times n^2\ell\delta_B}$ of the following form: \[\mathbf{Q}_0=[\mathsf{rot}(X^0)\cdot\mathbf{H}\cdot\mathbf{H}_{\ell,B}|\cdots|\mathsf{rot}(X^{n-1})\cdot\mathbf{H}\cdot\mathbf{H}_{\ell,B}].\] Let $\mathbf{y}_i=(\mathbf{y}_{i,0}\|\cdots\|\mathbf{y}_{i,n-1})=\mathsf{expd}(\mathbf{a}_i^{\star},\mathbf{g}^\star)\in\{-1,0,1\}^{n^2\ell\delta_B}$,
we then obtain:
$$\tau(a_i\cdot g) =\mathbf{Q}_0\cdot \mathbf{y}_i \mod q.$$
Let $\mathbf{e}^{\star}=\tau(\rdec_B(\mathbf{e}))\in\{-1,0,1\}^{n\ell\delta_B}$, $\mathbf{Q}=\left(
                                                                                       \begin{array}{cccc}
                                                                                         \mathbf{Q}_0 & ~ & ~ & ~ \\
                                                                                         ~ &  \mathbf{Q}_0 &~ & ~ \\
                                                                                         ~ & ~ & \ddots & ~ \\
                                                                                         ~ & ~ & ~ &  \mathbf{Q}_0  \\
                                                                                       \end{array}
                                                                                     \right)\in\mathbb{Z}_q^{n\ell\times n^2\ell^2\delta_B}.$
Now equation~(\ref{equation:rlwe-initial-1}) is equivalent to
\begin{align*}
\tau(\mathbf{c})&=(\tau(a_1\cdot g),\ldots,\tau(a_\ell\cdot g))^\top+\tau(\mathbf{e})\\
& =\mathbf{Q}\cdot(\mathbf{y}_1\|\cdots\|\mathbf{y}_{\ell})+\mathbf{H}_{\ell,B}\cdot \mathbf{e}^{\star} \bmod q
\end{align*}

Rearrange the above equivalent form using some basic algebra, we are able to obtain an unifying equation of the following form:
\[\mathbf{M}_0\cdot \mathbf{w}_0=\mathbf{u} \bmod q, \]
where  $\mathbf{M}_0$ is built from the public matrices $\mathbf{Q}$ and $\mathbf{H}_{\ell,B}$, 
$\mathbf{u}$ is the vector $\tau(\mathbf{c})$, while $\mathbf{w}_0=(\mathbf{y}_1\|\cdots\|\mathbf{y}_{\ell}\|\mathbf{e}^{\star})\in\{-1,0,1\}^{n^2\ell^2\delta_{B}+n\ell\delta_B}$.
\medskip

\noindent
{\sc Extending-Permuting. } In this second step, we aim to  transform the secret $\mathbf{w}_0$ to a vector $\mathbf{w}$ such that it satisfies the requirements specified by the abstract protocol from section~\ref{subsection:Stern}. In the process, the techniques introduced in Section~\ref{subsection:permutation-technique-by-lnwx} and~\ref{subsection:extended-permutation-technique} are utilized. \smallskip

We first extend $\mathbf{w}_0 = (\mathbf{y}_1 \| \cdots\|\mathbf{y}_\ell\| \mathbf{e}^{\star})$ as follows.
\begin{align*}
\mathbf{y}_i &\mapsto \mathbf{y}'_i = \mathsf{mult}\big(\mathbf{a}_i^{\star}, \mathbf{g}^\star\big) \in \{-1,0,1\}^{9n^2\ell\delta_B},~i\in[\ell]; \\
\mathbf{e}^{\star} &\mapsto \mathbf{e}'^{\star} = \mathsf{enc}(\mathbf{e}^{\star}) \in \{-1,0,1\}^{L_2}.
\end{align*}
Notice that for each $i\in[\ell]$, we have $\mathbf{y}_i=\mathsf{expd}(\mathbf{a}_i^\star,\mathbf{g}^\star)$. We then form vector $\mathbf{w}=(\mathbf{y}'_1\|\cdots\|\mathbf{y}'_{\ell}\|\mathbf{e}'^{\star})\in\{-1,0,1\}^{L}$,
where $$ L = L_1 + L_2; \hspace*{6.8pt}
L_1 =9n^2\ell^2\delta_B;  \hspace*{6.8pt}
L_2 = 3n\ell\delta_B.$$
According to the extension, we insert appropriate zero-columns to matrix $\mathbf{M}_0$, obtaining a new matrix $\mathbf{M} \in \mathbb{Z}_q^{n\ell \times L}$ such that the equation $\mathbf{M} \cdot \mathbf{w} = \mathbf{M}_0 \cdot \mathbf{w}_0$ holds.
\smallskip

We now define the set $\mathsf{VALID}$ that includes our secret vector $\mathbf{w}$, the set $\mathcal{S}$, and the associated  permutations $\{\Gamma_\eta: \eta \in \mathcal{S}\}$, such that the conditions in~(\ref{eq:zk-equivalence}) are satisfied.

\noindent
Let $\mathsf{VALID}$ be the set of all vectors $\mathbf{v}' = (\mathbf{v}'_1 \| \cdots\|\mathbf{v}'_\ell\|\mathbf{v}'_{\ell+1}) \in \{-1,0,1\}^{L}$ such that the following conditions hold:
\begin{itemize}
\item There exist $\mathbf{a}_i^{\star} \in \{-1,0,1\}^{n\ell}$ for each $i\in[\ell]$ and $\mathbf{g}^\star \in \{-1,0,1\}^{n\delta_B}$ such that $\mathbf{v}'_i = \mathsf{mult}(\mathbf{a}_i^{\star}, \mathbf{g}^\star)$. 
\item There exists $\mathbf{e}^{\star} \in \{-1,0,1\}^{n\ell\delta_B}$ such that $\mathbf{v}'_{\ell+1} = \mathsf{enc}(\mathbf{e}^{\star})$.
\end{itemize}
It is easy to see that the obtained vector $\mathbf{w}$ belongs to the set $\mathsf{VALID}$. \smallskip

Now let $\mathcal{S}=(\{-1,0,1\}^{n\ell})^{\ell}\times \{-1,0,1\}^{n\delta_B}\times\{-1,0,1\}^{n\ell\delta_B}$, and associate every element $\eta=(\mathbf{b}_1,\ldots,\mathbf{b}_{\ell},\mathbf{f}_1,\mathbf{f}_2)\in\mathcal{S}$ with permutation $\Gamma_{\eta}$ that behaves as follows. For a vector of the form $\mathbf{v}=(\mathbf{v}_1\|\cdots\|\mathbf{v}_{\ell}\|\mathbf{v}_{\ell+1})\in \mathbb{Z}^{L}$, where $\mathbf{v}_i\in \mathbb{Z}^{9n^2\ell\delta_B}$ for each $i\in[\ell]$ and $\mathbf{v}_{\ell+1}\in \mathbb{Z}^{L_2}$, it transforms $\mathbf{v}$ into vector
\[
\Gamma_\eta(\mathbf{v}) = \big(\hspace*{2.6pt}\Phi_{\mathbf{b}_1, \mathbf{f}_1}(\mathbf{v}_1) \hspace*{2.6pt} \|\cdots\|\hspace*{2.6pt}\Phi_{\mathbf{b}_{\ell}, \mathbf{f}_1}(\mathbf{v}_{\ell}) \hspace*{2.6pt}\|\hspace*{2.6pt} \Pi_{\mathbf{f}_2}(\mathbf{v}_{\ell+1})\hspace*{2.6pt}\big).
\]
It then follows from the equivalences in~(\ref{eq:equivalence-enc-vector}) and~(\ref{eq:equivalence-ext-vector}) that $\mathsf{VALID}$, $\mathcal{S}$, and $\Gamma_\eta$ fulfill the requirements specified in~(\ref{eq:zk-equivalence}). Therefore, we have transformed the considered statement to a case of the abstract protocol from Section~\ref{subsection:Stern}. To obtain the desired statistical $\mathsf{ZKAoK}$ protocol, it suffices for the prover and verifier to run the interactive protocol described in Figure~\ref{Figure:Interactive-Protocol}. The protocol has perfect completeness, soundness error $2/3$ and communication cost $\mathcal{O}(L \cdot \log q)$, which is of order ${\mathcal{O}}(n^2 \cdot \log^4 n) = \widetilde{\mathcal{O}}(\lambda^2)$.

\section{Accountable Tracing Signatures from Lattices}\label{section:ATS-ATS}
In this section, we construct our $\mathsf{ATS}$ scheme based on: (i) The Ducas-Micciancio signature scheme (as recalled in Section~\ref{subsection:DM-signatures}); (ii) The \textsf{KOE} scheme described in Section~\ref{section:ATS-KOE}; and
(iii) Stern-like zero-knowledge argument system that underlies our $\mathsf{ATS}$ construction, which is obtained by smoothly combining previous techniques as recalled in Section~\ref{subsection:zk-for-DM} and ours as described in Section~\ref{subsection:zk-for-rlwe}.
\subsection{The Zero-Knowledge Argument System Underlying the ATS Scheme}\label{subsection:main-zk-protocol}
Before describing our accountable tracing signature scheme in Section~\ref{subsection:ATS-construction}, let us first present the statistical $\mathsf{ZKAoK}$ that will be invoked by the signer when generating group signatures.
Let $n,q,k,\ell,m,\overline{m},\overline{m}_s,d,c_0,\cdots, c_d,\beta,B$ be parameters as specified in Section~\ref{subsection:ATS-construction}.
The protocol is summarized as follows.
\begin{itemize}
\item  The public input consists of
\begin{eqnarray*}
&\mathbf{A}, \mathbf{F}_0 \in R_q^{1 \times \overline{m}}; \hspace*{1pt}\mathbf{A}_{[0]}, \ldots, \mathbf{A}_{[d]} \in R_q^{1 \times k};
\mathbf{F} \in R_q^{1 \times \ell}; \hspace*{1pt} \\
&\mathbf{F}_1 \in R_q^{1\times \overline{m}_s};
u \in R_q;  \hspace*{1pt} \mathbf{B}\in R_q^{m};
\mathbf{c}_{1,1},\mathbf{c}_{1,2}\in R_q^{\ell}, \hspace*{1pt} \mathbf{c}_{2,1},\mathbf{c}_{2,2}\in R_q^{\ell}.
\end{eqnarray*}
\item The secret input of the prover  consists of message $\mathfrak{m}=(p\|\mathbf{a}'_1\|\mathbf{b}'_1\|\mathbf{a}'_2\|\mathbf{b}'_2)$ and the corresponding Ducas-Micciancio signature $(t, \mathbf{r}, \mathbf{v})$, a user secret key  $ \mathbf{x}$ that corresponds to the public key $p$, and encryption randomness   $g'_1, g'_2 , \mathbf{e}'_{1,1},\mathbf{e}'_{1,2},\mathbf{e}'_{2,1},\mathbf{e}'_{2,2} $,  where
\begin{eqnarray*}\hspace*{-10pt}
    \begin{cases}
    p\in R_q;  \hspace*{6.8pt} \mathbf{a}'_1\in R_q^{\ell};\hspace*{6.8pt}  \mathbf{b}'_1\in R_q^{\ell};\hspace*{6.8pt} \mathbf{a}'_2\in R_q^{\ell}; \hspace*{6.8pt} \mathbf{b}'_2\in R_q^{\ell};\\
    t = (t_0, \ldots, t_{c_1 -1}, \ldots , t_{c_{d-1}}, \ldots, t_{c_d -1})^\top \in \{0,1\}^{c_d}; \\[2.6pt]
    \mathbf{r} \in R^{\overline{m}}; \hspace*{6.8pt}\mathbf{v} = (\mathbf{s} \| \mathbf{z})\in R^{\overline{m} + k};
    \hspace*{6.8pt} \mathbf{s} \in R^{\overline{m}}; \hspace*{6.8pt}
    \mathbf{z} \in R^k;\\
   \mathbf{x}\in R^m; \hspace*{6.8pt} g'_1, g'_2  \in R; \hspace*{6.8pt} \mathbf{e}'_{1,1},\mathbf{e}'_{1,2},\mathbf{e}'_{2,1},\mathbf{e}'_{2,2}\in R^{\ell}.
    \end{cases}
    \end{eqnarray*}
\item The goal of the prover  is to prove in $\mathsf{ZK}$ that $\|\mathbf{r}\|_\infty \leq \beta$, $\|\mathbf{v}\|_\infty \leq \beta$, $\|\mathbf{x}\|_{\infty}\leq 1$, $\|{g}'_i\|_{\infty}\leq B$, $\|\mathbf{e}_{i,1}\|_{\infty}\leq B$, $\|\mathbf{e}_{i,2}\|_{\infty}\leq B$   and that the following conditions hold:
\[
\mathbf{A}_t \cdot \mathbf{v} =  \mathbf{F}\cdot \rdec\left(\mathbf{F}_0 \cdot \mathbf{r} + \mathbf{F}_1 \cdot \rdec(\mathfrak{m})\right) + u,
\]
\[
\mathbf{B}\cdot \mathbf{x} = p,
\]
\begin{eqnarray}\label{equation:for-ct-step-0}
\hspace*{-16pt}   \text{for}~ i\in\{1,2\},~
\mathbf{c}_{i,1}=\mathbf{a}'_i\cdot g'_i+\mathbf{e}'_{i,1}, \hspace{6.8pt}
\mathbf{c}_{i,2}=\mathbf{b}'_{i}\cdot g'_i+\mathbf{e}'_{i,2}+\lfloor q/4\rfloor \cdot \rdec(p).
\end{eqnarray}

\end{itemize}
Since we already established the transformations for  the Ducas-Micciancio signature in Section~\ref{subsection:zk-for-DM}, we now focus on the transformations for other relations.

Let $\mathbf{a}'_i=(a'_{i,1},\ldots, a'_{i,\ell})^\top$, $\mathbf{b}'_i=(b'_{i,1},\ldots, b'_{i,\ell})\top$ for each $i\in\{1,2\}$.  First, we employ the decomposition techniques in Section~\ref{subsection:decomposition} to the following secrets.
\begin{itemize}
\item Let $\mathbf{x}^\star=\tau(\mathbf{x})\in\{-1,0,1\}^{nm}$.

\item For each $i\in\{1,2\}$, each $j\in[\ell]$, compute $\mathbf{a}_{i,j}^{\star}=\tau(\rdec(a'_{i,j}))\in\{-1,0,1\}^{n\ell}$, $\mathbf{b}_{i,j}^{\star}=\tau(\rdec(b'_{i,j}))\in\{-1,0,1\}^{n\ell}$. \smallskip

\item For $i\in\{1,2\}$, compute $\mathbf{g}_i^\star=\tau(\rdec_B(g'_i))\in\{-1,0,1\}^{n\delta_B}$.\smallskip

\item For $i\in\{1,2\}$, compute $\mathbf{e}_{i,1}^\star=\tau(\rdec_B(\mathbf{e}'_{i,1}))\in\{-1,0,1\}^{n\ell\delta_B}$  and $\mathbf{e}_{i,2}^\star=\tau(\rdec_B(\mathbf{e}'_{i,2}))\in\{-1,0,1\}^{n\ell\delta_B}$. 

\end{itemize}
Then the equation $\mathbf{B} \cdot \mathbf{x} = p$ over $R_q$ is equivalent to
\begin{eqnarray}\label{equation:for-x-step-1}
[\mathsf{rot}(\mathbf{B})]\cdot \mathbf{x}^\star-[\mathbf{H}]\cdot \tau(\rdec(p))=\mathbf{0}^{n} \bmod q.
\end{eqnarray}
For each $i\in\{1,2\}$, each $j\in[\ell]$, let
\begin{eqnarray}\label{equation:for-expand-a-b}
\begin{cases}
\mathbf{y}_{i,j}=\mathsf{expd}\hspace*{1.8pt}(\mathbf{a}_{i,j}^{\star},~\mathbf{g}_i^{\star})\in\{-1,0,1\}^{n^2\ell\delta_B}, \\
\mathbf{z}_{i,j}=\mathsf{expd}\hspace*{1.8pt}(\mathbf{b}_{i,j}^{\star},~\mathbf{g}_i^{\star})\in\{-1,0,1\}^{n^2\ell\delta_B}.
\end{cases}
\end{eqnarray}
From Section~\ref{subsection:zk-for-rlwe}, we know that equations in~(\ref{equation:for-ct-step-0}) can be written as, for $i\in\{1,2\}$,
\begin{eqnarray}\label{equation:for-ct-step-1}
\begin{cases}
\tau(\mathbf{c}_{i,1})=\left[\mathbf{Q}\right]\cdot (\mathbf{y}_{i,1}\|\cdots\|\mathbf{y}_{i,\ell})+\left[\mathbf{H}_{\ell,B}\right]\cdot \mathbf{e}_{i,1}^{\star}; \\
\tau(\mathbf{c}_{i,2})=\left[\mathbf{Q}\right]\cdot (\mathbf{z}_{i,1}\|\cdots\|\mathbf{z}_{i,\ell})+\left[\mathbf{H}_{\ell,B}\right]\cdot \mathbf{e}_{i,2}^{\star}+
\lfloor q/4\rfloor \cdot \tau(\rdec(p)).
\end{cases}
\end{eqnarray}
Following the procedure in Section~\ref{subsection:zk-for-DM}, we form secret vectors $\mathbf{w}_1\in \{-1,0,1\}^{(k\delta_\beta + c_d k\delta_\beta)n}$, $\mathbf{w}_2\in \{-1,0,1\}^{2n\overline{m}\delta_\beta+ n\ell+n\overline{m}_s}$ of the form:
\begin{eqnarray*}
\begin{cases}
\mathbf{w}_1 = (\mathbf{z}^\star \hspace*{3.6pt}\|\hspace*{3.6pt} t_0 \cdot \mathbf{z}^\star
\hspace*{3.6pt}\|\hspace*{3.6pt} \ldots \hspace*{3.6pt}\|\hspace*{3.6pt} t_{c_d-1}\cdot \mathbf{z}^\star); \\[1.6pt]
\mathbf{w}_2 = (\mathbf{s}^\star \hspace*{3.6pt}\|\hspace*{3.6pt} \mathbf{r}^\star \hspace*{3.6pt}\|\hspace*{3.6pt}  \tau(\mathbf{y}) \hspace*{3.6pt}\|\hspace*{3.6pt} \tau(\rdec(\mathfrak{m}))),
\end{cases}
\end{eqnarray*}
where $\tau(\rdec(\mathfrak{m}))$
\begin{align*} =&~(\tau(\rdec(p))\|\tau(\rdec(\mathbf{a}'_{1}))\|\tau(\rdec(\mathbf{b}'_{1}))\|
\hspace*{0pt}\tau(\rdec(\mathbf{a}'_{2}))\|\tau(\rdec(\mathbf{b}'_{1})))\\[2.6pt]
=&~(\tau(\rdec(p))\|\mathbf{a}_{1,1}^{\star}\|\cdots\|\mathbf{a}_{1,\ell}^{\star}\|\mathbf{b}_{1,1}^{\star}\|\cdots\|\mathbf{b}_{1,\ell}^{\star}\|
\hspace*{0pt}\mathbf{a}_{2,1}^{\star}\|\cdots\|\mathbf{a}_{2,\ell}^{\star}\|\mathbf{b}_{2,1}^{\star}\|\cdots\|\mathbf{b}_{2,\ell}^{\star}). \end{align*}
Since $\tau(\rdec(p))$ has been included in $\mathbf{w}_2$, we now combine the remaining secret vectors appearing in equations~(\ref{equation:for-x-step-1}), (\ref{equation:for-ct-step-1}) into $\mathbf{w}_3\in \{-1,0,1\}^{nm  + 4n\ell\delta_B} $ of the form \[\mathbf{w}_3=\big(\hspace*{2.6pt}
\mathbf{x}^\star \hspace*{2.6pt}\|\hspace*{2.6pt}\mathbf{e}_{1,1}^\star \hspace*{2.6pt}\|\hspace*{2.6pt} \mathbf{e}_{1,2}^\star \hspace*{2.6pt}\|\hspace*{2.6pt} \mathbf{e}_{2,1}^\star \hspace*{2.6pt}\|\hspace*{2.6pt} \mathbf{e}_{2,2}^\star\hspace*{2.6pt} \big)\]
and $\mathbf{w}_4\in\{-1,0,1\}^{4n^2\ell^2\delta_B}$ of the form
\begin{align*}
\mathbf{w}_4=\big(&\mathbf{y}_{1,1}\|\cdots\|\mathbf{y}_{1,\ell}\|\mathbf{z}_{1,1}\|\cdots\|\mathbf{z}_{1,\ell}\|
\mathbf{y}_{2,1}\|\cdots\|\mathbf{y}_{2,\ell}\|\mathbf{z}_{2,1}\|\cdots\|\mathbf{z}_{2,\ell}\big)
\end{align*}
such that for $i\in\{1,2\}$, and $j\in[\ell]$, $\mathbf{y}_{i,j},\mathbf{z}_{i,j}$ satisfy the  equations in~(\ref{equation:for-expand-a-b}).

For the sake of simplicity when defining our tailored set $ \mathsf{VALID}$ and permutation $\Gamma_{\eta}$, we rearrange our secret vectors $\mathbf{w}_2, \mathbf{w}_3$ into vector $\overline{\mathbf{w}}_2\in\{-1,0,1\}^{L'_2}$ of the form \[\overline{\mathbf{w}}_2=\big(\mathbf{s}^\star \hspace*{1pt}\|\hspace*{1pt} \mathbf{r}^\star \hspace*{1pt}\|\hspace*{1pt}  \tau(\mathbf{y}) \hspace*{1pt}\|\hspace*{1pt} \tau(\rdec(p))\hspace*{1pt}\| \hspace*{1pt}\mathbf{x}^\star  \hspace*{1pt}\|\hspace*{1pt}\mathbf{e}_{1,1}^\star \hspace*{1pt}\|\hspace*{1pt} \mathbf{e}_{1,2}^\star \hspace*{1pt}\|\hspace*{1pt} \mathbf{e}_{2,1}^\star \hspace*{1pt}\|\hspace*{1pt} \mathbf{e}_{2,2}^\star\hspace*{1pt}\big).\]
and $\overline{\mathbf{w}}_3\in\{-1,0,1\}^{4n\ell^2}$ of the form
\begin{align*}
\overline{\mathbf{w}}_3=\big(\mathbf{a}_{1,1}^{\star}\hspace*{2.6pt}\|\cdots\|\hspace*{2.6pt}\mathbf{a}_{1,\ell}^{\star}\hspace*{2.6pt}\|\hspace*{2.6pt}\mathbf{b}_{1,1}^{\star}\hspace*{2.6pt}\|\cdots\|\mathbf{b}_{1,\ell}^{\star}\hspace*{2.6pt}\|\hspace*{2.6pt}
\mathbf{a}_{2,1}^{\star}\hspace*{2.6pt}\|\cdots\|\hspace*{2.6pt}\mathbf{a}_{2,\ell}^{\star}\hspace*{2.6pt}\|\hspace*{2.6pt}\mathbf{b}_{2,1}^{\star}\hspace*{2.6pt}\|\cdots\|\hspace*{2.6pt}\mathbf{b}_{2,\ell}^{\star}\big)
\end{align*}
with $L'_2=2n\overline{m}\delta_{\beta}+2n\ell+nm+4n\ell\delta_B$.
Now we form our secret vector as  $\mathbf{w}_0=(\mathbf{w}_1\|\overline{\mathbf{w}}_2\|\overline{\mathbf{w}}_3\|\mathbf{w}_4)$. \smallskip

Second, we apply the extension and permutation techniques from Section~\ref{subsection:permutation-technique-by-lnwx} and Section~\ref{subsection:extended-permutation-technique} to our secret vectors $\mathbf{w}_0$.  Let $\mathbf{w}'_1=\mathsf{mix}(t,\mathbf{z}^{\star})\in\{-1,0,1\}^{L_1}$ be the ``mixing'' vector obtained in equation~(\ref{eq:zk-w'_1}), $\mathbf{w}'_2=\mathsf{enc}(\overline{\mathbf{w}}_2)\in\{-1,0,1\}^{L_2}$
$\mathbf{w}'_3=\mathsf{enc}(\overline{\mathbf{w}}_3)\in\{-1,0,1\}^{L_3}$, 
and $\mathbf{w}'_4=\mathsf{Mult}(\mathbf{w}_4)\in\{-1,0,1\}^{L_4}$ be of the following form:
\begin{align*}
\big(\hspace{1.6pt}&
\mathsf{mult}(\mathbf{a}_{1,1}^{\star},\mathbf{g}_1^{\star})\hspace{1.6pt}\|\cdots\|\hspace{1.6pt}\mathsf{mult}(\mathbf{a}_{1,\ell}^{\star},\mathbf{g}_1^{\star}) \hspace{1.6pt}\|
\mathsf{mult}(\mathbf{b}_{1,1}^{\star},\mathbf{g}_1^{\star})\hspace{1.6pt}\|\cdots\|\hspace{1.6pt}\mathsf{mult}(\mathbf{b}_{1,\ell}^{\star},\mathbf{g}_1^{\star}) \hspace{1.6pt}\| \\
&\mathsf{mult}(\mathbf{a}_{2,1}^{\star},\mathbf{g}_2^{\star})\hspace{1.6pt}\| \cdots \|\hspace{1.6pt}\mathsf{mult}(\mathbf{a}_{2,\ell}^{\star},\mathbf{g}_2^{\star})\hspace{1.6pt}\| \hspace{1.6pt}
\mathsf{mult}(\mathbf{b}_{2,1}^{\star},\mathbf{g}_2^{\star})\hspace{1.6pt}\| \cdots\|\hspace{1.6pt}\mathsf{mult}(\mathbf{b}_{2,\ell}^{\star},\mathbf{g}_2^{\star})
\hspace{1.6pt}\big),
\end{align*}
Where $L_1=3k\delta_\beta+6nk\delta_{\beta}c_d$, $L_2=3L'_2$, $L_3=12n\ell^2$, and $L_4=36n^2\ell^2\delta_B$.
Denote $L=L_1+L_2+L_3+L_4$. Form our extended vector $\mathbf{w}=(\mathbf{w}'_1\|\mathbf{w}'_2\|\mathbf{w}'_3\|\mathbf{w}'_4)\in\{-1,0,1\}^{L}$. \smallskip

Following the process in Section~\ref{subsection:zk-for-DM} and Section~\ref{subsection:zk-for-rlwe}, we are able to obtain public matrix/vector $\mathbf{M}$ and $\mathbf{u}$ such that the considered statement is reduced to $\mathbf{M}\cdot \mathbf{w} = \mathbf{u} \bmod q$. Therefore, we are prepared to define the set $\mathsf{VALID}$ that includes our secret vector $\mathbf{w}$, the set $\mathcal{S}$, and the associated  permutations $\{\Gamma_\eta: \eta \in \mathcal{S}\}$, such that the conditions in~(\ref{eq:zk-equivalence}) are satisfied. \smallskip


\noindent Let $ \mathsf{VALID} $ be the set of all vectors $\mathbf{v}' = (\mathbf{v}'_1 \| \mathbf{v}'_2\|\mathbf{v}'_3\|\mathbf{v}'_4) \in \{-1,0,1\}^{L}$ such that the following requirements hold:
\begin{itemize}
\item $\mathbf{v}'_1 = \mathsf{mix}(t, \mathbf{z}^\star)$ for some $t \in \{0,1\}^{c_d}$ and $\mathbf{z}^\star \in \{-1,0,1\}^{nk\delta_\beta}$.  \smallskip
\item $\mathbf{v}'_2 = \mathsf{enc}(\overline{\mathbf{w}}_2)$ for some $\overline{\mathbf{w}}_2 \in \{-1,0,1\}^{L'_2}$. \smallskip
\item For $ j\in[4\ell]$, there exists $\overline{\mathbf{w}}_{3,j}\in\{-1,0,1\}^{n\ell}$ and $\overline{\mathbf{w}}_3=(\overline{\mathbf{w}}_{3,1}\cdots\|\overline{\mathbf{w}}_{3,4\ell}) \in \{-1,0,1\}^{4n\ell^2}$   such that $\mathbf{v}'_3 = (\mathsf{enc}(\overline{\mathbf{w}}_{3,1})\|\cdots\|\mathsf{enc}(\overline{\mathbf{w}}_{3,4\ell}))=\mathsf{enc}(\overline{\mathbf{w}}_3)$. \smallskip
\item  There exists $\mathbf{g}_1^{\star},\mathbf{g}_2^{\star}\in\{-1,0,1\}^{n\delta_B}$ and  $\mathbf{w}_4\in \{-1,0,1\}^{4n^2\ell^2\delta_B}$ be of the form:
\begin{align*}
(\mathsf{expd}(\overline{\mathbf{w}}_{3,1},\mathbf{g}_1^{\star})\|\cdots\|\mathsf{expd}(\overline{\mathbf{w}}_{3,2\ell},\mathbf{g}_1^{\star})\|
\mathsf{expd}(\overline{\mathbf{w}}_{3,2\ell+1},\mathbf{g}_2^{\star})\|\cdots\|\mathsf{expd}(\overline{\mathbf{w}}_{3,4\ell},\mathbf{g}_2^{\star})) \end{align*}  such that  $\mathbf{v}'_4 = \mathsf{Mult}( \mathbf{w}_4)$.
\end{itemize}
It is verifiable  that our secret vector $ \mathbf{w} $ belongs to $ \mathsf{VALID}$. \smallskip

Now  let $\mathcal{S}=  \{0,1\}^{c_d}\times \{-1,0,1\}^{nk\delta_\beta}\times\{-1,0,1\}^{L'_2}\times(\{-1,0,1\}^{n\ell})^{4\ell}\times(\{-1,0,1\}^{n\delta_B})^2$, and associate every element $$\eta=(\mathbf{f}_1,\mathbf{f}_2,\mathbf{f}_3,\mathbf{f}_{4,1},\ldots,\mathbf{f}_{4,2\ell},\mathbf{f}_{5,1},\ldots,\mathbf{f}_{5,2\ell},\mathbf{f}_6,\mathbf{f}_7)\in\mathcal{S}$$
 with $\Gamma_{\eta}$ that works as follows. For a vector of form $\mathbf{v}^{\star}=(\mathbf{v}_1^{\star}\|\mathbf{v}_2^{\star}\|\mathbf{v}_{3}^{\star}\|\mathbf{v}_4^{\star})\in\mathbb{Z}^{L}$, where
 $ \mathbf{v}_i^{\star}\in\mathbb{Z}^{L_i}$ for $i\in\{1,2\}$,  $\mathbf{v}_{3}^{\star}=(\mathbf{v}_{3,1}^{\star}\|\cdots\|\mathbf{v}_{3,4\ell}^{\star})$ with $\mathbf{v}_{3,j}^{\star}\in\mathbb{Z}^{3n\ell}$, and $\mathbf{v}_{4}^{\star}=(\mathbf{v}_{4,1}^{\star}\cdots\|\mathbf{v}_{4,4\ell}^{\star})$ with $\mathbf{v}_{4,j}^{\star}\in\mathbb{Z}^{9n^2\ell\delta_B}$, it transforms $\mathbf{v}^{\star}$ into vector $\Gamma_{\eta}(\mathbf{v}^{\star})$
 \begin{align*}
 ( \hspace*{1.6pt}
 &\Psi_{\mathbf{f}_1,\mathbf{f}_2}(\mathbf{v}_1^{\star})\hspace*{1.6pt}\|\hspace*{1.6pt}\Pi_{\mathbf{f}_3}(\mathbf{v}_2^{\star})\hspace*{1.6pt}\| \\
 &\Pi_{\mathbf{f}_{4,1}}(\mathbf{v}_{3,1}^{\star})  \hspace*{1.6pt} \| \hspace*{1.6pt} \cdots \hspace*{1.6pt} \| \hspace*{1.6pt} 
\Pi_{\mathbf{f}_{4,2\ell}}(\mathbf{v}_{3,2\ell}^{\star})  \hspace*{1.6pt} \|    \hspace*{1.6pt}
 \Pi_{\mathbf{f}_{5,1}}(\mathbf{v}_{3,2\ell+1}^{\star}) \hspace*{1.6pt} \| \cdots \| \hspace*{1.6pt}
 \Pi_{\mathbf{f}_{5,2\ell}}(\mathbf{v}_{3,4\ell}^{\star})\|
 \\
 &\Phi_{\mathbf{f}_{4,1},\mathbf{f}_6}(\mathbf{v}_{4,1}^{\star}) \hspace*{1.6pt} \| \cdots\| \hspace*{1.6pt} 
  \Phi_{\mathbf{f}_{4,2\ell},\mathbf{f}_6}(\mathbf{v}_{4,2\ell}^{\star})  \hspace*{1.6pt} \|   \hspace*{1.6pt}
  \Phi_{\mathbf{f}_{5,1},\mathbf{f}_7}(\mathbf{v}_{4,2\ell+1}^{\star}) \hspace*{1.6pt}\| \cdots\|  \hspace*{1.6pt} 
 \Phi_{\mathbf{f}_{5,2\ell},\mathbf{f}_7}(\mathbf{v}_{4,4\ell}^{\star})
 \hspace*{1.6pt})
 \end{align*}
It then follows from the equivalences in~(\ref{eq:equivalence-enc-vector}),~(\ref{eq:equivalence-mix}), and~(\ref{eq:equivalence-ext-vector}) that $ \mathsf{VALID}$, $\mathcal{S}$, and $\Gamma_\eta$ satisfy the conditions in~(\ref{eq:zk-equivalence}). Therefore, we have transformed the considered statement to a case of the abstract protocol from Section~\ref{subsection:Stern}. To obtain the desired statistical $\mathsf{ZKAoK}$ protocol, it suffices for the prover and verifier to run the interactive protocol described in Figure~\ref{Figure:Interactive-Protocol}. The protocol has perfect completeness, soundness error $2/3$ and communication cost $\mathcal{O}(L \cdot \log q)$, which is of the order $\mathcal{O}(n^2\cdot\log^3n)=\widetilde{\mathcal{O}}(\lambda^2)$.

 \subsection{Description of Our ATS Scheme}\label{subsection:ATS-construction}
We assume there is a trusted setup such that it generates parameters of the scheme. Specifically, it generates a public matrix $\mathbf{B}$ for generating users' key pairs, and two secret-public key pairs of our $\mathsf{KOE}$ scheme such that the secret keys are discarded and not known by any party. The group public key then consists of three parts: (i) the parameters from the trusted setup, (ii) a verification key of the Ducas-Micciancio signature, (iii) two public keys of our $\mathsf{KOE}$ scheme such that the group manager knows both secret keys. The issue key is the Ducas-Micciancio signing key,  while the opening key is any one of the corresponding secret keys of the two public keys. Note that both the issue key and the opening key are generated by the group manager.

When a user joins the group,  it first generates a secret-public key pair $(\mathbf{x},p)$ such that  $\mathbf{B}\cdot \mathbf{x}=p$. It then interacts with the group manager, who will determine whether user $p$ is traceable or not. If the user is traceable, group manager sets a bit $\mathsf{tr}=1$, randomizes the two public key generated by himself, and then generates a Ducas-Micciancio signature $\sigma_{\mathsf{cert}}$ on  user public key $p$ and the two randomized public keys ($\mathsf{epk}_1,\mathsf{epk}_2$). If the user is non-traceable, group manager sets a bit $\mathsf{tr}=0$, randomizes the two public key generated from the trusted setup, and then generates a signature on  $p$ and $\mathsf{epk}_1,\mathsf{epk}_2$. If it completes successfully, the group manager sends certificate $\mathsf{cert}=(p,\mathsf{epk}_1,\mathsf{epk}_2,\sigma_{\mathsf{cert}})$ to user $p$, registers this user to the group, and keeps himself the witness $w^{\mathsf{escrw}}$ that was ever used for randomization.

Once registered as a group member, the user can sign messages on behalf of the group. To this end, the user first encrypts his public key $p$ twice using his two randomized public keys, and obtains ciphertexts $\mathbf{c}_1,\mathbf{c}_2$. The user then generates a $\mathsf{ZKAoK}$ such that (i) he has a valid secret key $\mathbf{x}$ corresponding to $p$; (ii) he possesses a Ducas-Micciancio signature on $p$ and $\mathsf{epk}_1,\mathsf{epk}_2$; and (iii) $\mathbf{c}_1,\mathbf{c}_2$ are   correct ciphertexts of~$p$ under the randomized keys $\mathsf{epk}_1,\mathsf{epk}_2$, respectively.  Since the $\mathsf{ZKAoK}$ protocol the user employs has soundness error $2/3$ in each execution, it is repeated $\kappa=\omega(\log \lambda)$ times to make the error negligibly small. Then, it is made non-interactive via the Fiat-Shamir heuristic~\cite{FS86}. The signature then consists of the non-interactive zero-knowledge argument of knowledge (\textsf{NIZKAoK}) $\Pi_{\mathsf{gs}}$ and the two ciphertexts. Note that the $\mathsf{ZK}$ argument together with double encryption enables CCA-security of the underlying encryption scheme, which is known as the Naor-Yung transformation~\cite{NY90}.

To verify the validity of a signature, it suffices to  verify the validity of the argument $\Pi_{\mathsf{gs}}$. Should the need arises, the group manager can decrypt  using his opening key. If a user is traceable,  the opening key group manager possesses can be used to correctly identify the signer.  However, if a user is non-traceable,  then his anonymity is preserved against the manager.

To prevent corrupted opening, group manager is required to generate a $\mathsf{NIZKAoK}$ of correct opening $\Pi_{\mathsf{open}}$. Only when $\Pi_{\mathsf{open}}$ is a valid argument, we then accept the opening result.
Furthermore, there is an additional accounting mechanism for group manager to reveal which users he had chosen to be traceable. This is done by checking the consistency of $\mathsf{tr}$ and the randomized public keys in user's certificate with the help of the witness $w^{\mathsf{escrw}}$.

We describe the details of our scheme below.


\begin{description}
\item[$\mathsf{Setup}(\lambda)$:] Given the security parameter $\lambda$, it generates the following public parameter.
 \begin{itemize}
    \item  Let $n=\mathcal{O}(\lambda)$ be a power of $2$, and modulus $q=\widetilde{\mathcal{O}}(n^4)$, where $q=3^k$ for $k\in\mathbb{Z}^{+}$. Let  $R=\mathbb{Z}[X]/(X^n+1)$ and $R_q=R/qR$.

        Also, let $m\geq 2\lceil\log q\rceil +2$,  $\ell=\lfloor\log \frac{q-1}{2}\rfloor +1$,  $m_s=4\ell+1$, and $\overline{m} = m + k$ and $\overline{m}_s=m_s\cdot \ell$.

    \item Let integer $d$ and sequence $c_0,\ldots,c_d$ be described in Section~\ref{subsection:DM-signatures}.

    \item Let  $\beta=\widetilde{\mathcal{O}}(n)$ and $B=\widetilde{\mathcal{O}}(\sqrt{n})$ be two integer bounds, and $\chi$ be a $B$-bounded distribution over the ring $R$.

    \item Choose a collision-resistant hash function $\mathcal{H}_{\mathsf{FS}}:\{0,1\}^*\rightarrow \{1,2,3\}^{\kappa}$, where $\kappa=\omega(\log\lambda)$,  which will act as a random oracle in the Fiat-Shamir heuristic~\cite{FS86}.

    \item Choose a statistically hiding and computationally binding commitment scheme from~\cite{KTX08}, denoted as $\mathsf{COM}$,  which will be employed in our $\mathsf{ZK}$ argument systems. 

     \item Let $\mathbf{B} \xleftarrow{\$} R_q^{1 \times m}$, $\mathbf{a}_1^{(0)}\xleftarrow{\$} R_q^{\ell}$, $\mathbf{a}_2^{(0)}\xleftarrow{\$} R_q^{\ell}$, $s_{-1},s_{-2}\hookleftarrow\chi$, $\mathbf{e}_{-1},\mathbf{e}_{-2}\hookleftarrow \chi^{\ell}$. Compute

     $$\mathbf{b}_1^{(0)}=\mathbf{a}_1^{(0)}\cdot s_{-1}+\mathbf{e}_{-1}\in R_q^{\ell}; \hspace*{12pt} \mathbf{b}_2^{(0)}=\mathbf{a}_2^{(0)}\cdot s_{-2}+\mathbf{e}_{-2}\in R_q^{\ell}.$$

    \end{itemize}
    This algorithm outputs the
    public parameter $\mathsf{pp}$: \begin{eqnarray*} &\{\hspace*{6.6pt}n,q,k,R,R_q,\ell,m,m_s,\overline{m},\overline{m}_s,d,c_0,\cdots,c_d,\\
    &\beta, B,\chi, \mathcal{H}_{\mathsf{FS}},\kappa,\mathsf{COM},\mathbf{B},\{\mathbf{a}_i^{(0)}, \mathbf{b}_i^{(0)}\}_{i\in\{1,2\}}\hspace*{6.6pt}\}.
     \end{eqnarray*}$\mathsf{pp}$ is implicit for all algorithms below if not explicitly mentioned.
    \smallskip
  \item[$\mathsf{GKeyGen}(\mathsf{pp})$:] On input $\mathsf{pp}$, $\mathsf{GM}$ proceeds as follows. 
      \begin{itemize}
            \item Generate verification key \begin{eqnarray*}
            \mathbf{A}, \mathbf{F}_0 \in R_q^{1 \times \overline{m}}; \hspace*{2.8pt}\mathbf{A}_{[0]}, \ldots, \mathbf{A}_{[d]} \in R_q^{1 \times k};\hspace*{2.8pt}
             \mathbf{F}\in R_q^{1 \times \ell};\hspace*{2.8pt} \mathbf{F}_1 \in R_q^{1 \times \overline{m}_s};  \hspace*{2.8pt}u \in R_q
                \end{eqnarray*} and signing key $\mathbf{R}\in R_q^{m\times k}$ for the Ducas-Micciancio signature from Section~\ref{subsection:DM-signatures}. 

        \item Initialize the Naor-Yung double-encryption mechanism~\cite{NY90} with the key-oblivious encryption scheme described in Section~\ref{subsection:KOE-construction}.   Specifically, sample $s_1,s_2 \hookleftarrow \chi$, $\mathbf{e}_1,\mathbf{e}_2 \hookleftarrow \chi^{\ell}$,
         $\mathbf{a}_1^{(1)}\xleftarrow{\$} R_q^{\ell}$, $\mathbf{a}_2^{(1)}\xleftarrow{\$} R_q^{\ell}$ and compute
        \[
        \mathbf{b}_1^{(1)}=\mathbf{a}_1^{(1)}\cdot s_1+\mathbf{e}_1\in R_q^{\ell};\hspace*{12pt}
        \mathbf{b}_2^{(1)}=\mathbf{a}_2^{(1)}\cdot s_2+\mathbf{e}_2\in R_q^{\ell}.
        \]

    \end{itemize}

  Set the group public key $\mathsf{gpk}$, the issue key $\mathsf{ik}$ and the opening key $\mathsf{ok}$ as follows: \begin{eqnarray*}
  \mathsf{gpk}=\{\mathsf{pp},\mathbf{A},\{\mathbf{A}_{[j]}\}_{j=0}^{d}, \mathbf{F},\mathbf{F}_0,\mathbf{F}_1,u,  \mathbf{a}_1^{(1)},\mathbf{b}_1^{(1)},\mathbf{a}_2^{(1)},\mathbf{b}_2^{(1)}~~\},\end{eqnarray*}
\[\mathsf{ik}=\mathbf{R}, ~~~\mathsf{ok}=(s_1,\mathbf{e}_1).\]

$\mathsf{GM}$ then makes $\mathsf{gpk}$ public, sets the registration table $\mathbf{reg}=\emptyset$ and his internal state $S=0$.
\smallskip
  \item[$\mathsf{UKeyGen}(\mathsf{pp})$:] Given the public parameter, the user first chooses $\mathbf{x} \in R^m$ such that the coefficients are uniformly chosen from the set $\{-1,0,1\}$. He then calculates $p=\mathbf{B}\cdot\mathbf{x}\in R_q$. Set $\mathsf{upk} = p$ and $\mathsf{usk}=\mathbf{x}$. 

  \item[$\mathsf{Enroll}(\mathsf{gpk},\mathsf{ik},\mathsf{upk},\mathsf{tr})$:] Upon receiving a user public key $\mathsf{upk}$ from a user, $\mathsf{GM}$ determines the value of the bit $\mathsf{tr}\in\{0,1\}$, indicating whether the  user is traceable. He then does the following: 
      \begin{itemize}
      \item   Randomize two pairs of public keys $(\mathbf{a}_1^{(\mathsf{tr})},\mathbf{b}_1^{(\mathsf{tr})})$ and $(\mathbf{a}_2^{(\mathsf{tr})},\mathbf{b}_2^{(\mathsf{tr})})$ as described in Section~\ref{subsection:KOE-construction}.  Specifically, sample $g_1,g_2 \hookleftarrow \chi$, $\mathbf{e}_{1,1},\mathbf{e}_{1,2}\hookleftarrow \chi^{\ell}$, $\mathbf{e}_{2,1},\mathbf{e}_{2,2} \hookleftarrow \chi^{\ell}$. For each $i\in\{1,2\}$, compute
          \begin{eqnarray}\label{equation:randomization-of-public-key}
          \mathsf{epk}_i=(\mathbf{a}'_i,\mathbf{b}_i')=(\mathbf{a}_{i}^{(\mathsf{tr})}\cdot g_i+\mathbf{e}_{i,1},\hspace*{6.8pt}\mathbf{b}_{i}^{(\mathsf{tr})}\cdot g_i+\mathbf{e}_{i,2})\in R_q^{\ell}\times R_q^{\ell}.
          \end{eqnarray}
      \item Set the tag $t=(t_0,t_1\ldots, t_{c_d-1})^\top\in \mathcal{T}_d$, where $S=\sum_{j=0}^{c_d-1} 2^j\cdot t_j$, and
        compute $\mathbf{A}_{t} = [\mathbf{A}|\mathbf{A}_{[0]}+\sum_{i=1}^{d}t_{[i]}\mathbf{A}_{[i]}] \in R_q^{1\times (\overline{m} + k)}$.  \smallskip
     \item Let $\mathfrak{m}=(p\|\mathbf{a}'_1\|\mathbf{b}'_1\|\mathbf{a}'_2\|\mathbf{b}'_2)\in R_q^{m_s}$. \smallskip

    \item  Generate a signature $\sigma_{\mathsf{cert}}=(t,\mathbf{r},\mathbf{v})$
         on message $\rdec(\mathfrak{m}) \in R^{\overline{m}_s}$ - whose coefficients are in $\{-1,0,1\}$ - using his issue key $\mathsf{ik}=\mathbf{R}$. As in Section~\ref{subsection:DM-signatures}, we have $\mathbf{r} \in R^{\overline{m}}$, $\mathbf{v} \in R^{\overline{m} + k}$ and
        \begin{eqnarray}\label{equation:DM-certificate-condition}
         \begin{cases}
            \mathbf{A}_t\cdot\mathbf{v}=\mathbf{F}\cdot\rdec(\mathbf{F}_0\cdot\mathbf{r}+\mathbf{F}_1\cdot\rdec(\mathfrak{m}))+u,\\
            \|\mathbf{r}\|_{\infty}\leq \beta,~~\|\mathbf{v}\|_{\infty}\leq \beta.
         \end{cases}
        \end{eqnarray}


      \end{itemize}
    Set certificate $\mathsf{cert}$ and $w^{\mathsf{escrw}}$   as follows:
 \[\mathsf{cert}=(p,\mathbf{a}'_1,\mathbf{b}'_1,\mathbf{a}'_2,\mathbf{b}'_2, t,\mathbf{r}, \mathbf{v}), \hspace*{12pt}
w^{\mathsf{escrw}}=(g_1,\mathbf{e}_{1,1},\mathbf{e}_{1,2},g_2,\mathbf{e}_{2,1},\mathbf{e}_{2,2}).\]  $\mathsf{GM}$ sends $\mathsf{cert}$ to the user $p$, stores $\mathbf{reg}[S]=(p, \mathsf{tr}, w^{\mathsf{escrw}})$,  
 and  updates the state to $S+1$. 
  \item[$\mathsf{Sign}(\mathsf{gpk},\mathsf{cert},\mathsf{usk},M)$:]  To sign a message $M \in \{0,1\}^*$ using the certificate $\mathsf{cert}=(p,\mathbf{a}'_1,\mathbf{b}'_1,\mathbf{a}'_2,\mathbf{b}'_2, t,\mathbf{r}, \mathbf{v})$ and $\mathsf{usk}=\mathbf{x}$, the user proceeds as follows. 
      \begin{itemize}
        \item Encrypt the ring vector $\rdec(p)\in R_q^{\ell}$ whose coefficients are in $\{-1,0,1\}$ twice.  Namely,  sample $g'_1,g'_2 \hookleftarrow \chi$, $\mathbf{e}'_{1,1},\mathbf{e}'_{1,2}\hookleftarrow\chi ^{\ell}$, and $\mathbf{e}'_{2,1},\mathbf{e}'_{2,2}\hookleftarrow\chi ^{\ell}$. For each $i\in\{1,2\}$,  compute $\mathbf{c}_i=(\mathbf{c}_{i,1},\mathbf{c}_{i,2})\in R_q^{\ell}\times R_q^{\ell}$ as follows:
            \begin{align*}
            \mathbf{c}_{i,1}=\mathbf{a}'_i\cdot g'_i+\mathbf{e}'_{i,1}; \hspace*{12pt}\mathbf{c}_{i,2}=\mathbf{b}'_{i}\cdot g'_i+\mathbf{e}'_{i,2}+\lfloor q/4\rfloor\cdot \rdec(p).
            \end{align*}
        \item Generate a $\mathsf{NIZKAoK}$ $\Pi_{\mathsf{gs}}$ to demonstrate the possession of a valid tuple $\zeta$ of the following form
                \begin{align}\label{equation:ATS-witness-zeta}
               \zeta= (p,\mathbf{a}'_1,\mathbf{b}'_1,\mathbf{a}'_2,\mathbf{b}'_2, t,\mathbf{r}, \mathbf{v},\mathbf{x}, g'_1,\mathbf{e}'_{1,1},\mathbf{e}'_{1,2},g'_2,\mathbf{e}'_{2,1},\mathbf{e}'_{2,2})
                \end{align}
                such that
                    \begin{enumerate}[(i)]
                    \item\label{condition-1} The conditions in~(\ref{equation:DM-certificate-condition}) are satisfied. \smallskip
                    \item\label{condition-2} $\mathbf{c}_1$ and $\mathbf{c}_2$ are correct encryptions of $\rdec(p)$ with $B$-bounded randomness $g'_1,\mathbf{e}'_{1,1},\mathbf{e}'_{1,2}$ and $g'_2,\mathbf{e}'_{2,1},\mathbf{e}'_{2,2}$, respectively. \smallskip
                    \item\label{condition-3} $\|\mathbf{x}\|_\infty \leq 1$ and $\mathbf{B}\cdot\mathbf{x}=p$. \smallskip
                    \end{enumerate}
                    This is achieved by running the protocol from Section~\ref{subsection:main-zk-protocol},  
                which is repeated $\kappa=\omega(\log \lambda)$ times and made non-interactive via Fiat-Shamir heuristic~\cite{FS86} as a triple $\Pi_{\mathsf{gs}}=(\{\mathrm{CMT}_i\}_{i=1}^{\kappa},\mathrm{CH},\{\mathrm{RSP}_i\}_{i=1}^{\kappa})$ where the challenge $\mathrm{CH}$ is generated as $\mathrm{CH}=\mathcal{H}_{\mathsf{FS}}(M,\{\mathrm{CMT}_i\}_{i=1}^{\kappa},\xi)$ with  $\xi$ of the following form
                \begin{equation}\label{equation:public-input}
               \hspace*{-6.6pt}\xi=(\mathbf{A},\mathbf{A}_{[0]},\ldots,\mathbf{A}_{[d]},\mathbf{F},\mathbf{F}_0,\mathbf{F}_1,u,\mathbf{B},\mathbf{c}_1,\mathbf{c}_2)
                \end{equation}
            \item Output the group signature $\Sigma=(\Pi_{\mathsf{gs}},\mathbf{c}_1,\mathbf{c}_2)$.
        \end{itemize}
      \smallskip
\smallskip
  \item[$\mathsf{Verify}(\mathsf{gpk},M,\Sigma)$:]  Given the inputs, the verifier performs in the following manner. 
        \begin{itemize}
    \item Parse $\Sigma$ as $\Sigma = \big(\{\mathrm{CMT}_i\}_{i=1}^\kappa, (Ch_1, \ldots, Ch_\kappa), \{\mathrm{RSP}\}_{i=1}^\kappa, \mathbf{c}_1, \mathbf{c}_2\big)$.\\
      If $(Ch_1, \ldots, Ch_\kappa) \neq \mathcal{H}_{\mathsf{FS}}\big(M, \{\mathrm{CMT}_i\}_{i=1}^\kappa, \xi \big)$, output~$0$, where $\xi$ is as in~(\ref{equation:public-input}). 
  \item For each $i \in [\kappa]$, run the verification phase of the protocol in Section~\ref{subsection:main-zk-protocol} to verify the validity of $\mathrm{RSP}_i$ corresponding to $\mathrm{CMT}_i$ and $Ch_i$. If any of the verification process fails, output~$0$.
      \item Output~$1$.
      \smallskip
  \end{itemize}
\smallskip

  \item[$\mathsf{Open}(\mathsf{gpk},\mathsf{ok},M,\Sigma)$:] Let $\mathsf{ok}=(s_1,\mathbf{e}_1)$ and $\Sigma=(\Pi_{\mathsf{gs}},\mathbf{c}_1,\mathbf{c}_2)$. The group manager proceeds as follows.
      \begin{itemize}
        \item Use $s_1$ to decrypt $\mathbf{c}_1=(\mathbf{c}_{1,1},\mathbf{c}_{1,2})$ as in the decryption algorithm from Section~\ref{subsection:KOE-construction}. The result is $p'\in R_q$. 
        \item He then searches the registration information. If $\mathbf{reg}$ does not include an element $p'$, then return $\bot$. 
        \item Otherwise, he produces a $\mathsf{NIZKAoK}$ $\Pi_{\mathsf{open}}$ to show the knowledge of a tuple $(s_1,\mathbf{e}_1,\mathbf{y})\in R_q\times R_q^{\ell}\times R_q^{\ell}$ such that the following conditions hold.
            \begin{eqnarray}\label{equation:pi-open}
            \begin{cases}
                \| s_1 \|_\infty \leq B; \hspace*{2.8pt}  \| \mathbf{e}_1 \|_\infty \leq B; \hspace*{2.8pt}
                \| \mathbf{y}\|_\infty \leq  \lceil q/10 \rceil; \\
                \mathbf{a}_1^{(1)} \cdot s_1 + \mathbf{e}_1 = \mathbf{b}_1^{(1)}; \\
                \mathbf{c}_{1,2} - \mathbf{c}_{1,1}\cdot s_1 = \mathbf{y} + \lfloor q/4 \rfloor\cdot \rdec(p').
            \end{cases}
            \end{eqnarray}
       Since the conditions in~(\ref{equation:pi-open}) only encounter linear secret objects with bounded norm, we can easily handled them using the Stern-like techniques from Sections~\ref{subsection:zk-for-rlwe} and \ref{subsection:main-zk-protocol}. Therefore, we are able to have a   statistical $\mathsf{ZKAoK}$ for the above statement. Furthermore, the protocol is repeated $\kappa = \omega(\log \lambda)$ times and made non-interactive via the Fiat-Shamir heuristic, resulting in a triple $\Pi_{\mathsf{Open}}= (\{\mathrm{CMT}_i\}_{i=1}^\kappa, \mathrm{CH}, \{\mathrm{RSP}\}_{i=1}^\kappa)$, where $\mathrm{CH} \in \{1,2,3\}^\kappa$ is computed as
          \begin{eqnarray}\label{equation:pi-open-FS}
            \mathrm{CH}=\mathcal{H}_{\mathsf{FS}}\big(\{\mathrm{CMT}_i\}_{i=1}^\kappa, \mathbf{a}_{1}^{(1)},\mathbf{b}_1^{(1)}, M, \Sigma, p'\big).
          \end{eqnarray}
        \item Output $(p', \Pi_{\mathsf{Open}})$. \smallskip
\smallskip
     \end{itemize}
  \item[$\mathsf{Judge}(\mathsf{gpk},M,\Sigma, p', \Pi_{\mathsf{open}})$:] Given all the inputs, this algorithm does the following. \smallskip
  \begin{itemize}
 \item  If $\mathsf{Verify}$ algorithm outputs $0$ or $p'=\bot$,  return~$0$.  
 \item This algorithm then verifies the argument $\Pi_{\mathsf{Open}}$ with respect to common input $(\mathbf{a}_{1}^{(1)}, \mathbf{b}_1^{(1)}, M, \Sigma, p')$, in the same way as in the algorithm $\mathsf{Verify}$. If verification of the argument $\Pi_{\mathsf{open}}$ fails, output~$0$. 
 \item Else output~$1$. \smallskip \smallskip
\end{itemize}
  \item[$\mathsf{Account}(\mathsf{gpk},\mathsf{cert},w^{\mathsf{escrw}},\mathsf{tr})$:] Let the certificate be $\mathsf{cert}=(p,\mathbf{a}'_1,\mathbf{b}'_1,\mathbf{a}'_2,\mathbf{b}'_2,t,\mathbf{r},\mathbf{v})$ and witness be $w^{\mathsf{escrw}}=(g_1,\mathbf{e}_{1,1},\mathbf{e}_{1,2},g_2,\mathbf{e}_{2,1},\mathbf{e}_{2,2})$ and the bit $\mathsf{tr}$, this algorithm proceeds as follows. 
       \begin{itemize}

     \item  It checks whether $(t,\mathbf{r},\mathbf{v})$ is  a valid  Ducas-Micciancio signature on the message $(p,\mathbf{a}'_1,\mathbf{b}'_1,\mathbf{a}'_2,\mathbf{b}'_2)$. Specifically, it verifies whether $\mathsf{cert}$ satisfies the conditions in~(\ref{equation:DM-certificate-condition}). If not, output~$0$. 

     \item Otherwise, it then checks if   $(\mathbf{a}'_1,\mathbf{b}'_1)$ and $(\mathbf{a}'_2,\mathbf{b}'_2)$ are randomization of $(\mathbf{a}_{1}^{(\mathsf{tr})},(\mathbf{b}_{1}^{(\mathsf{tr})})$ and $(\mathbf{a}_{2}^{(\mathsf{tr})},(\mathbf{b}_{2}^{(\mathsf{tr})})$ with respect to randomness $(g_1,\mathbf{e}_{1,1},\mathbf{e}_{1,2})$ and $(g_2,\mathbf{e}_{2,1},\mathbf{e}_{2,2})$, respectively.
      Specifically, it verifies whether  the conditions in~(\ref{equation:randomization-of-public-key}) hold. If not, output~$0$. 
      \item Else output~$1$.

    \end{itemize}

\end{description}

\subsection{Analysis of Our ATS Scheme}\label{subsection:ATS-security-analysis}

\noindent
{\sc Efficiency. }
We first analyze the efficiency of our scheme from Section~\ref{subsection:ATS-construction} in terms of the security parameter $\lambda$.
\begin{itemize}
\item The bit-size of the public key $\mathsf{gpk}$ is of order $\mathcal{O}(\lambda\cdot\log^3 \lambda)=\widetilde{\mathcal{O}}(\lambda)$. \smallskip
\item The bit-size of the membership certificate $\mathsf{cert}$ is of order  $\mathcal{O}(\lambda\cdot\log^2 \lambda)=\widetilde{\mathcal{O}}(\lambda)$. \smallskip
\item The bit-size of  a signature $\Sigma$ is determined by that of the Stern-like  $\mathsf{NIZKAoK}$ ~ $\Pi_{\mathsf{ gs}}$, which is of order $\mathcal{O}(L\cdot \log q)\cdot \omega(\log \lambda)$, where $L$ is the bit-size  of a vector $\mathbf{w}\in\mathsf{VALID}$ from Section~\ref{subsection:main-zk-protocol}. Recall $\mathcal{O}(L\cdot \log q)=\mathcal{O}(\lambda^2\cdot\log^3\lambda)$. Therefore, the bit-size of $\Sigma$ is of order $\mathcal{O}(\lambda^2\cdot \log^3 \lambda) \cdot\omega(\log \lambda)=\widetilde{\mathcal{O}}(\lambda^2)$.\smallskip

\item The bit-size of the Stern-like $\mathsf{NIZKAoK}$ ~ $\Pi_{\mathsf{open}}$ is of order $\mathcal{O}(\lambda\cdot\log^3 \lambda)\cdot \omega(\log \lambda)=\widetilde{\mathcal{O}}(\lambda)$.
\end{itemize}

\smallskip
\noindent
{\sc Correctness.} For an honestly generated signature $\Sigma$ for message $M$, we first show that the $\mathsf{Verify}$ algorithm always outputs~$1$.  Due to the honest behavior of the user, when signing a message in the name of the group, this user possesses a valid tuple~$\zeta$ of the form~(\ref{equation:ATS-witness-zeta}). Therefore,  $\Pi_{\mathsf{gs}}$ will be accepted by the $\mathsf{Verify}$ algorithm with probability~$1$ due to the perfect completeness of our argument system.

If an honest user is traceable, then $\mathsf{Account}(\mathsf{gpk},\mathsf{cert},w^{\mathsf{escrw}},1)$ will output~$1$, implied by the correctness of Ducas-Micciancio signature scheme and honest behaviour of group manager. In terms of the correctness of the $\mathsf{Open}$ algorithm, we observe that $\mathbf{c}_{1,2}-\mathbf{c}_{1,1}\cdot s_1=$
\begin{align*}
(\mathbf{b}_1^{(\mathsf{tr})}-\mathbf{a}_1^{(\mathsf{tr})}\hspace*{-1pt}\cdot\hspace*{-1pt} s_1)\hspace*{-1pt}\cdot\hspace*{-1pt} g_1\hspace*{-1pt}\cdot \hspace*{-1pt} g'_1 + \mathbf{e}_{1,2}\hspace*{-1pt}\cdot \hspace*{-1pt}g'_1-\mathbf{e}_{1,1}\hspace*{-1pt}\cdot\hspace*{-1pt} s_1\hspace*{-1pt}\cdot \hspace*{-1pt} g'_1+\mathbf{e}'_{1,2}- \mathbf{e}'_{1,1} \hspace*{-1pt}\cdot\hspace*{-1pt} s_1+\lfloor q/4\rfloor\cdot \rdec(p),
\end{align*}
denoted as $\widetilde{\mathbf{e}}+\lfloor q/4\rfloor\cdot \rdec(p)$.
In this case, $\mathsf{tr}=1$,  $\mathbf{b}_1^{(\mathsf{tr})}-\mathbf{a}_1^{(\mathsf{tr})}\cdot s_1 =\mathbf{e}_1$, and $\|\widetilde{\mathbf{e}}\|_{\infty}\leq \big\lceil \frac{q}{10}\big\rceil$. The decryption can recover $\rdec(p)$ and hence  the real signer due to the correctness of our key-oblivious encryption from Section~\ref{subsection:KOE-construction}. Thus, correctness of the $\mathsf{Open}$ algorithm follows. What is more,  $\Pi_{\mathsf{open}}$ will be accepted by the $\mathsf{Judge}$ algorithm with probability~$1$ due to the perfect completeness of our argument system.

If an honest user is non-traceable, then again $\mathsf{Account}(\mathsf{gpk},\mathsf{cert},w^{\mathsf{escrw}},1)$ will output~$1$.  For the $\mathsf{Open}$ algorithm, since $\mathbf{b}_1^{(0)}-\mathbf{a}_1^{(0)}\cdot s_1=\mathbf{a}_{1}^{(0)}\cdot (s_{-1}-s_1)+\mathbf{e}_{-1}$, then we obtain $$\mathbf{c}_{1,2}-\mathbf{c}_{1,1}\cdot s_1=\mathbf{a}_{1}^{(0)}\cdot (s_{-1}-s_1)\cdot g_1\cdot g'_1+\widetilde{\mathbf{e}}+\lfloor q/4\rfloor\cdot \rdec(p),$$
where $\|\widetilde{\mathbf{e}}\|_{\infty}\leq \big\lceil \frac{q}{10}\big\rceil$. Observe that $\mathbf{a}_1^{(0)}\xleftarrow{\$} R_q^{\ell}$,  and $s_{-1}\neq s_1$ with overwhelming probability.
 Over the randomness of $g_1,g'_1$, the decryption algorithm described in Section~\ref{subsection:KOE-construction} will output a random element $p'\in R_q$.
Then, with overwhelming probability, $p'$ is not in the registration table and the $\mathsf{Open}$ algorithm outputs~$\bot$. 
It then follows that our scheme is correct. \smallskip

\smallskip
\noindent
{\sc Security. } In Theorem~\ref{theorem:ATS-security-theorem}, we prove that our scheme satisfies the security requirements of accountable tracing signatures, as specified by Kohlweiss and Miers.

\begin{theorem}\label{theorem:ATS-security-theorem}
Under the $\mathsf{RLWE}$ and $\mathsf{RSIS}$ assumptions, the accountable tracing signature scheme described in Section~\ref{subsection:ATS-construction} satisfies the following requirements in the random oracle model: (i) anonymity under tracing; (ii) traceability; (iii) non-frameability; (iv) anonymity with accountability; and (v) trace-obliviousness.
\end{theorem}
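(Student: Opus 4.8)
The plan is to establish the five requirements separately, grouping them by the hardness they rely on. The three privacy-flavoured notions --- anonymity under tracing ($\ms{AuT}$), anonymity with accountability ($\ms{AwA}$) and trace-obliviousness ($\ms{TO}$) --- will be reduced to the security of the underlying $\ms{KOE}$ scheme (Theorem~\ref{theorem:KOE-sec}, i.e.~Lemmas~\ref{lemma:KR}--\ref{lemma:KPr}) together with the statistical zero-knowledge property of the Stern-like argument (Theorem~\ref{Theorem:zk-protocol}); here the random oracle $\mathcal{H}_{\ms{FS}}$ will be programmed so that the proofs $\Pi_{\ms{gs}}$ can be produced by the zero-knowledge simulator of Theorem~\ref{Theorem:zk-protocol} without a witness. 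The two authentication-flavoured notions --- traceability ($\ms{Trace}$) and non-frameability ($\ms{NF}$) --- will instead be reduced to the unforgeability of the Ducas--Micciancio signature (Section~\ref{subsection:DM-signatures}, hence to $\ms{RSIS}$) and to the computational soundness of the argument, exploiting the knowledge extractor that recovers a witness $\mathbf{w}\in\ms{VALID}$ from three transcripts sharing a commitment.

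For $\ms{AuT}$ (Figure~\ref{Figure:ATS-AuT}) I would run a game sequence starting from the real experiment with $b=0$. First I replace the two proofs generated by $\ms{Ch}$ with simulated ones, programming $\mathcal{H}_{\ms{FS}}$; this step is indistinguishable by statistical $\mathsf{ZK}$. Next I invoke the Naor--Yung structure: since $\ms{Open}$ decrypts only $\mathbf{c}_1$ with $s_1$, I may first switch $\mathbf{c}_2$ from an encryption of $\rdec(p_0)$ to $\rdec(p_1)$ using $\ms{INDr}$ (Lemma~\ref{lemma:INDr}) while still answering $\ms{Open}$ queries via $s_1$; then I switch the opening secret key used internally from $s_1$ to $s_2$ --- here the soundness of the arguments attached to the adversary's $\ms{Open}$ queries guarantees that both ciphertexts encrypt the same plaintext, so the two decryptions agree --- switch $\mathbf{c}_1$ via $\ms{INDr}$, and revert the opening key. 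The final game is the real experiment with $b=1$, so $\ms{AuT}$ follows. $\ms{AwA}$ (Figure~\ref{Figure:ATS-AwA}) is treated analogously but against a corrupted $\mathsf{GM}$: the crucial observation is that for a non-traceable user ($\ms{tr}=0$) the keys $\ms{epk}_1,\ms{epk}_2$ are randomisations of the base keys $(\mathbf{a}_i^{(0)},\mathbf{b}_i^{(0)})$ whose secret keys were discarded during $\ms{Setup}$, so no party knows them; after simulating the proofs, I move from user $0$'s signature to user $1$'s by first applying key privacy ($\ms{KPr}$, Lemma~\ref{lemma:KPr}) to swap the encryption key and then plaintext indistinguishability ($\ms{INDr}$) to swap the encrypted public key, embedding the $\ms{KOE}$ challenge into the discarded base keys. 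For $\ms{TO}$ (Figure~\ref{Figure:ATS-TO}) I would argue over the challenge users in a hybrid: the only data depending on the bit $b$ are the keys $\ms{epk}_1,\ms{epk}_2$ inside the certificate, which are randomisations of $(\mathbf{a}_i^{(b)},\mathbf{b}_i^{(b)})$; by key randomisability ($\ms{KR}$, Lemma~\ref{lemma:KR}) each such randomisation is indistinguishable from a fresh uniform key, hence the two choices $b=0,1$ both look uniform and thus indistinguishable to a user who lacks the $\mathsf{GM}$ secret keys. Because $\ms{Open}$ returns $\bot$ on challenge users, the reduction never has to decrypt their ciphertexts, so embedding the $\ms{KR}$ instance into the challenge certificates is clean.

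For $\ms{Trace}$ (Figure~\ref{Figure:ATS-Trace}) I would assume a forging adversary that outputs a valid $(M,\Sigma)$ which either opens to $\bot$ or fails $\ms{Judge}$. Applying the forking lemma to $\Pi_{\ms{gs}}$ and the extractor of Theorem~\ref{Theorem:zk-protocol}, I extract a witness $\zeta$ containing a Ducas--Micciancio signature $(t,\mathbf{r},\mathbf{v})$ on some $(p,\ms{epk}_1,\ms{epk}_2)$ together with ciphertexts $\mathbf{c}_1,\mathbf{c}_2$ that correctly encrypt $\rdec(p)$. If $p$ was never certified, this breaks $\ms{RSIS}$ via unforgeability of the Ducas--Micciancio scheme; otherwise $p$ is registered and traceable (all enrolled users carry $\ms{tr}=1$ in this game), so the relation $\mathbf{b}_1^{(1)}=\mathbf{a}_1^{(1)}\cdot s_1+\mathbf{e}_1$ makes decryption with $s_1$ recover $p$, whence $\ms{Open}$ identifies the signer and the honest $\Pi_{\ms{open}}$ passes $\ms{Judge}$ --- contradicting the winning condition. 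For $\ms{NF}$ (Figure~\ref{Figure:ATS-NF}), where the adversary controls $\mathsf{GM}$, a successful framing of an honest $p=\mathbf{B}\cdot\mathbf{x}$ yields, after extraction, a short $\mathbf{x}'$ with $\mathbf{B}\cdot\mathbf{x}'=p$ and $\|\mathbf{x}'\|_\infty\le 1$. Since $m\ge 2\lceil\log q\rceil+2$, the map $\mathbf{x}\mapsto\mathbf{B}\cdot\mathbf{x}$ has many short preimages and the honest $\mathbf{x}$ retains high min-entropy given $p$; thus $\mathbf{x}'\neq\mathbf{x}$ except with negligible probability, and then $\mathbf{x}-\mathbf{x}'$ is a nonzero short solution to $\mathbf{B}\cdot(\mathbf{x}-\mathbf{x}')=0$, breaking $\ms{RSIS}$.

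The main obstacle I anticipate is the interface between the non-interactive Fiat--Shamir proofs and the rewinding-based knowledge extractor: because the Stern-like protocol has soundness error $2/3$ and needs three transcripts sharing a commitment but with pairwise distinct challenges, the forking argument used in $\ms{Trace}$ and $\ms{NF}$ must be set up carefully (e.g.~via an improved forking lemma over the $\kappa=\omega(\log\lambda)$ parallel repetitions) so that extraction succeeds with non-negligible probability while keeping the random-oracle answers consistent. A secondary delicate point is the key-switching step in $\ms{AuT}$: arguing that replacing $s_1$ by $s_2$ when answering $\ms{Open}$ queries is sound requires that every submitted ciphertext is accompanied by a proof whose soundness forces $\mathbf{c}_1$ and $\mathbf{c}_2$ to encrypt the same plaintext, which in turn demands correctly accounting for the negligible soundness error accumulated over all queries.
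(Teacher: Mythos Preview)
Your overall strategy matches the paper's: each of the five properties is handled by a separate lemma (Lemmas~\ref{lemma:ATS-AuT}--\ref{lemma:ATS-TO}), the privacy notions via game hops relying on the $\ms{KOE}$ properties and statistical zero-knowledge, the authentication notions via the forking lemma and the Stern-like extractor. Your treatments of $\ms{NF}$, $\ms{AwA}$ and $\ms{TO}$ are essentially the paper's. Two concrete gaps remain.

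\textbf{Traceability.} Your claim that ``all enrolled users carry $\ms{tr}=1$ in this game'' is wrong. In the $\ms{Enroll}$ oracle of Figure~\ref{Figure:ATS-Trace} the bit is $\ms{tr}\vee\ms{tr'}$ with $\ms{tr'}=(\ms{upk}\notin\text{dom}~S)$; a user whose $\ms{upk}$ came from the $\ms{UKG}$ oracle (so the adversary does \emph{not} know its secret key) can be enrolled with $\ms{tr}=0$. Hence the extracted $p^*$ may belong to a non-traceable honest user, and then $\ms{Open}$ with $s_1$ will \emph{not} recover $p^*$, because the ciphertext is formed under a randomisation of $(\mathbf{a}_1^{(0)},\mathbf{b}_1^{(0)})$ whose secret was discarded. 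The paper closes this case with the very collision argument you reserved for $\ms{NF}$: the extracted $\mathbf{x}^*$ satisfies $\mathbf{B}\cdot\mathbf{x}^*=p^*$, the challenger holds the honest $\mathbf{x}'$, and Lemma~\ref{lemma:counting-argument} gives $\mathbf{x}^*\neq\mathbf{x}'$ with probability at least $1/2$, yielding an $\ms{RSIS}$ solution. Without this step your $\ms{Trace}$ reduction does not go through.

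\textbf{Anonymity under tracing.} In $\ms{AuT}$ the two certificates $\ms{cert}_0,\ms{cert}_1$ supplied by the adversary contain \emph{different} randomised keys $(\mathbf{a}'_{j,0},\mathbf{b}'_{j,0})\neq(\mathbf{a}'_{j,1},\mathbf{b}'_{j,1})$ for $j\in\{1,2\}$. Your hops only change the plaintext via $\ms{INDr}$, so the resulting challenge ciphertexts are still formed under user~$0$'s keys and do not have the distribution of $\Sigma_1$. The paper's game sequence (its Games~$5$--$7$, and symmetrically for the second ciphertext) inserts additional hops using key privacy under key randomisation ($\ms{KPr}$, Lemma~\ref{lemma:KPr}) to move the encryption key from $(\mathbf{a}'_{j,0},\mathbf{b}'_{j,0})$ through a fresh key to $(\mathbf{a}'_{j,1},\mathbf{b}'_{j,1})$. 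You already invoke $\ms{KPr}$ in your $\ms{AwA}$ sketch; the same ingredient is needed here.
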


For the proofs of  traceability and non-frameability, the lemma below  from~\cite{LNWX18} is needed.
\begin{lemma}[\cite{LNWX18}]\label{lemma:counting-argument}
Let $\mathbf{B}\in R_q^{1\times m}$, where $m\geq 2 \lceil \log q\rceil +2$. If $\mathbf{x}$ is a uniform element over $R^m$ with $\|\mathbf{x}\|_{\infty}\leq 1$, then with probability at least $1-2^{-n}$, there exists a different $\mathbf{x'}\in R^m$ with $\|\mathbf{x'}\|_{\infty}\leq 1$ and  $\mathbf{B}\cdot \mathbf{x}'=\mathbf{B}\cdot \mathbf{x} \in R_q$.
\end{lemma}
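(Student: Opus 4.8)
The plan is to prove this via a direct counting (pigeonhole) argument applied to the additive homomorphism $\mathbf{x} \mapsto \mathbf{B}\cdot\mathbf{x}$. First I would fix the two relevant finite sets. Let $\mathcal{D} = \{\mathbf{x} \in R^m : \|\mathbf{x}\|_\infty \leq 1\}$ be the set from which $\mathbf{x}$ is sampled uniformly. Since each of the $m$ ring components of $\mathbf{x}$ is a polynomial of degree less than $n$ with every coefficient in $\{-1,0,1\}$, we have $|\mathcal{D}| = 3^{mn}$. The map $f : \mathcal{D} \to R_q$ defined by $f(\mathbf{x}) = \mathbf{B}\cdot\mathbf{x}$ has image contained in $R_q$, and $|R_q| = q^n$.

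Next I would count the \emph{bad} inputs, namely those $\mathbf{x} \in \mathcal{D}$ for which no distinct $\mathbf{x}' \in \mathcal{D}$ satisfies $f(\mathbf{x}') = f(\mathbf{x})$; equivalently, those $\mathbf{x}$ whose image $f(\mathbf{x})$ has a singleton fiber $f^{-1}(f(\mathbf{x}))$. There is a bijection between the set of bad inputs and the set of image values with a unique preimage, and the latter has size at most $|\mathrm{Im}(f)| \leq |R_q| = q^n$. Hence the number of bad inputs is at most $q^n$. Because $\mathbf{x}$ is uniform over $\mathcal{D}$, the probability that $\mathbf{x}$ is bad is at most $q^n / 3^{mn}$.

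Finally I would verify that $q^n/3^{mn} \leq 2^{-n}$. Taking $n$-th roots, this is equivalent to $3^m \geq 2q$, for which it suffices to check $m \log_2 3 \geq 1 + \log_2 q$. Since $m \geq 2\lceil \log q\rceil + 2 \geq 2\log_2 q + 2$ and $\log_2 3 > 1$, the inequality $3^m \geq 2q$ holds with ample room. Therefore the probability that $\mathbf{x}$ is bad is at most $2^{-n}$, so with probability at least $1 - 2^{-n}$ the sampled $\mathbf{x}$ admits a distinct $\mathbf{x}' \in \mathcal{D}$ with $\mathbf{B}\cdot\mathbf{x}' = \mathbf{B}\cdot\mathbf{x}$ over $R_q$, which is exactly the claim.

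I expect the only genuinely delicate point to be the counting of bad inputs: one must argue cleanly that the number of inputs lying in a singleton fiber is bounded by the number of distinct image values (hence by $q^n$), rather than accidentally bounding it by $|\mathcal{D}|$. Everything else — the two cardinality computations and the elementary inequality extracted from the parameter constraint $m \geq 2\lceil \log q\rceil + 2$ — is routine.
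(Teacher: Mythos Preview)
Your counting argument is correct and is exactly the standard pigeonhole proof for this type of statement. Note, however, that the present paper does not actually prove this lemma: it is quoted verbatim from~\cite{LNWX18} and used as a black box in the proofs of traceability and non-frameability, so there is no ``paper's own proof'' to compare against. Your argument is what one expects the proof in~\cite{LNWX18} to be, and all the steps --- the cardinalities $|\mathcal{D}|=3^{mn}$ and $|R_q|=q^n$, the bound on singleton fibers by $|R_q|$, and the elementary check that $3^m\geq 2q$ under $m\geq 2\lceil\log q\rceil+2$ --- are sound.
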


The proof of the Theorem~\ref{theorem:ATS-security-theorem} follows from  Lemma~\ref{lemma:ATS-AuT}-\ref{lemma:ATS-TO} given below.

\begin{lemma}\label{lemma:ATS-AuT}
Assuming  the hardness of the $\mathsf{RLWE}$ problem, in the random oracle model, the given accountable tracing signature scheme is anonymous under tracing.

\end{lemma}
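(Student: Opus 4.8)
The plan is to show that the challenge signature $\Sigma_b$ returned by the $\ms{Ch}$ oracle is computationally independent of the bit $b$, through a sequence of indistinguishable games bounding $\mb{Adv}_{\ms{ATS},\mc{A}}^{\ms{AuT}}(\lambda)$. The tools are the statistical zero-knowledge of the argument system of Section~\ref{subsection:main-zk-protocol} (Theorem~\ref{Theorem:zk-protocol}), the soundness of its Fiat--Shamir compilation in the random oracle model, and the $\ms{INDr}/\ms{KPr}$ security of our $\ms{KOE}$ scheme (Lemmas~\ref{lemma:INDr} and~\ref{lemma:KPr}), all of which hold under $\ms{RLWE}$. The key structural observation is that in the $\ms{AuT}$ experiment of Figure~\ref{Figure:ATS-AuT} the adversary receives $\ms{gpk}$ and $\ms{ik}=\mb{R}$ but \emph{not} the opening key $\ms{ok}=(s_1,\mb{e}_1)$; hence a reduction can generate $\ms{gpk}$ and $\ms{ik}$ itself, embed an $\ms{RLWE}/\ms{KOE}$ challenge into one of the two Naor--Yung key pairs $(\mb{a}_i^{(1)},\mb{b}_i^{(1)})$, and still answer $\ms{Open}$ queries using the \emph{other} secret key. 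Because every challenge query passes $\ms{Account}(\cdot,\cdot,1)$, each $\ms{epk}_i^{(b)}$ supplied by $\mc{A}$ is a known randomization of $(\mb{a}_i^{(1)},\mb{b}_i^{(1)})$, which lets the reduction forward $(\ms{epk}_i^{(b)},r,\rdec(p_b),\cdot)$ to its $\ms{KOE}$ challenger.

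First I would move to a game $G_1$ in which every proof $\Pi_{\ms{gs}}$ produced by $\ms{Ch}$ is replaced by a simulated transcript obtained from the ZK simulator while programming $\mathcal{H}_{\ms{FS}}$ at the challenge points; by statistical zero-knowledge this is statistically close to $G_0$, and now the only $b$-dependence sits in the two ciphertexts $\mb{c}_1^{(b)},\mb{c}_2^{(b)}$, each encrypting $\rdec(p_b)$ under $\ms{epk}_i^{(b)}$. Next, in $G_2$ I would answer $\ms{Open}$ queries by decrypting $\mb{c}_2$ with $s_2$ instead of $\mb{c}_1$ with $s_1$: for any verifying $\Sigma\notin Q$, soundness of the argument (condition~(\ref{condition-2}) in $\ms{Sign}$) forces $\mb{c}_1$ and $\mb{c}_2$ to encrypt the \emph{same} $\rdec(p')$, so the two decryption routes agree, while the challenge signatures lie in $Q$ and are never opened. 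With $\ms{Open}$ now using only $s_2$, in $G_3$ the reduction no longer needs $s_1$, so I can invoke the pseudorandomness of $\ms{KOE}$ ciphertexts established in Lemma~\ref{lemma:INDr} (together with Lemma~\ref{lemma:KPr} to absorb the key) to replace each $\mb{c}_1^{(b)}$ by a fresh uniform pair, independent of $b$; this is a hybrid over the polynomially many $\ms{Ch}$ queries embedding the $\ms{KOE}$ challenge in the first key pair.

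Then I would run the symmetric half: in $G_4$, switch $\ms{Open}$ back to decrypting $\mb{c}_1$ with $s_1$ (again sound, since adversarial queries stay consistent and challenge queries are never opened, even though their $\mb{c}_1$ is now uniform), and in $G_5$ replace each $\mb{c}_2^{(b)}$ by a uniform pair using the same $\ms{INDr}/\ms{KPr}$ argument on the second key pair, the reduction now using $s_1$ for opening. After $G_5$ the returned signature consists of a simulated proof together with two uniformly random ciphertext pairs, so its distribution is entirely independent of $b$; hence $\Pr[\mb{Exp}^{\ms{AuT}-1}=1]=\Pr[\mb{Exp}^{\ms{AuT}-0}=1]$ in the final game, and summing the game gaps yields a negligible $\ms{AuT}$ advantage.

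The hard part will be the Naor--Yung decryption-switch steps $G_2$ and $G_4$: I must argue a form of simulation-soundness, namely that programming $\mathcal{H}_{\ms{FS}}$ to inject simulated challenge proofs does not help $\mc{A}$ craft a \emph{fresh} accepting signature whose $\mb{c}_1,\mb{c}_2$ encrypt different plaintexts, since such a signature would make the $s_1$- and $s_2$-openings diverge and distinguish the two games. This requires the standard ROM analysis (the programmed points coincide with challenge signatures, which are excluded from $\ms{Open}$, so an inconsistent accepting $\Sigma\notin Q$ would contradict the special soundness / extractability of the protocol except with negligible probability). Secondary care points are the query-by-query hybrid in $G_3$ and $G_5$ and the fact that the two keys $\ms{epk}_i^{(0)},\ms{epk}_i^{(1)}$ are randomizations of the \emph{same} base key, so the step that erases the key-dependence must be justified by the pseudorandomness of the ciphertext rather than by key-randomizability alone.
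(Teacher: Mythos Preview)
Your proposal follows essentially the same Naor--Yung key-switching strategy as the paper's proof: simulate the zero-knowledge proofs, switch the $\ms{Open}$ oracle from $s_1$ to $s_2$ (and later back), and in between invoke $\ms{KOE}$ security on the ciphertext whose secret key is currently ``free''. The paper carries this out in a sequence of twelve games going from $b=0$ to $b=1$ directly---using $\ms{INDr}$ to change the plaintext from $\rdec(p_0)$ to $\rdec(p_1)$ and then $\ms{KPr}$ twice to hop the encryption key from $\ms{epk}_i^{(0)}$ through a fresh key to $\ms{epk}_i^{(1)}$---whereas you collapse this into a single ``ciphertext-to-uniform'' step per side. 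That is a legitimate and somewhat more economical variant (it relies on the ciphertext pseudorandomness that is in fact established inside the proofs of Lemmas~\ref{lemma:INDr} and~\ref{lemma:KPr}, though not stated as a standalone $\ms{KOE}$ property, so you should say explicitly that you are using those intermediate hybrids rather than the black-box $\ms{INDr}$/$\ms{KPr}$ statements).

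There is, however, a small but genuine gap you should patch: the $\ms{Open}$ oracle returns not only a decrypted identity $p'$ but also a proof $\Pi_{\ms{open}}$ demonstrating correct decryption \emph{under $s_1$} (see the relation~(\ref{equation:pi-open})). Once you are in the hybrid where $s_1$ is unknown to the reduction (your $G_3$), decrypting with $s_2$ gives the right $p'$, but you cannot honestly produce $\Pi_{\ms{open}}$. The paper handles this by inserting an earlier game (its Game~2) in which all opening proofs $\Pi_{\ms{open}}$ are replaced by simulated transcripts via the statistical zero-knowledge of the opening argument; you should add this step before your decryption-switch game, and symmetrically reinstate real $\Pi_{\ms{open}}$ proofs at the end. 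With that fix your argument goes through, and your identification of the simulation-soundness issue at the decryption-switch steps matches exactly what the paper isolates in its Games~3 and~8.
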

\begin{proof}
We prove this lemma using a series of indistinguishable games. In the initial game, the challenger runs the experiment $\mathbf{Exp}_{\mathsf{ATS},\mathcal{A}}^{\mathsf{AuT}-0}(\lambda)$ while in the last game, the challenger runs the experiment $\mathbf{Exp}_{\mathsf{ATS},\mathcal{A}}^{\mathsf{AuT}-1}(\lambda)$. 
Let $W_i$ be the event that the adversary outputs~$1$ in Game~$i$.

\begin{description}
\item[Game~$0$:] This is exactly the experiment $\mathbf{Exp}_{\mathsf{ATS},\mathcal{A}}^{\mathsf{AuT}-0}(\lambda)$, where the adversary receives a challenged signature $(\Pi_{\mathsf{gs}}^{*},\mathbf{c}_1^*,\mathbf{c}_2^*)\leftarrow\mathsf{Sign}(\mathsf{gpk},\mathsf{cert}_0,\mathsf{usk}_{0},M)$ in the challenge phase with $p_0=\mathbf{B}\cdot \mathsf{usk}_{0}$. So $\mathrm{Pr}[W_0]=\mathrm{Pr}[\mathbf{Exp}_{\mathsf{ATS},\mathcal{A}}^{\mathsf{AuT}-0}(\lambda)=1]$. \smallskip

\item[Game $1$:]  We modify Game~$0$ as follows: the challenger will keep decryption key $(s_2, \mathbf{e}_2)$ secret (by himself) instead of erasing it. However, the view of the adversary $\mathcal{A}$ is still the same as in Game $0$. Therefore, $\mathrm{Pr}[W_0]=\mathrm{Pr}[W_1]$. \smallskip

\item[Game~$2$:] This game is the same as Game~$1$ with one exception: it generates   simulated proofs for the opening oracle queries by programming the random oracle $\mc{H}_{\mathsf{FS}}$. Note that the challenger still follows the original game (that is, it uses $s_1$ to decrypt $\mathbf{c}_1$) to identify the real signer. The views of $\mathcal{A}$ in Game $1$ and Game $2$ are statistically close  due to the statistical zero-knowledge property of our argument system. Therefore $\mathrm{Pr}[W_1]\overset{s}{\approx}\mathrm{Pr}[W_2]$. \smallskip

\item[Game~$3$:] This game modifies Game~$2$ as follows. It uses $s_2$ instead of $s_1$ to  answer the opening oracle queries. In other words, it now uses $\mathbf{s}_2$ to decrypt $\mathbf{c}_2$ to identify the signer. The view of the adversary in this game is identical to that in Game~$2$ until event $F_1$, where $\mc{A}$ queries the opening oracle a valid signature $(\Pi_{\mathsf{gs}},\mathbf{c}_1,\mathbf{c}_2)$ with $\mathbf{c}_1,\mathbf{c}_2$ encrypting distinct messages, happens. Since the event $F_1$ violates the soundness of our argument system, we have $\vert\mathrm{Pr}[W_2]-\mathrm{Pr}[W_3]\vert\leq \mathrm{Pr}[F_1]\leq \mathbf{Adv}_{\Pi_{\mathsf{gs}}}^{\mathsf{sound}}(\lambda)=\mathsf{negl}(\lambda)$. \smallskip

\item[Game~$4$:] This game changes Game~$3$ as follows. It generates a simulated proof $\Pi_{\mathsf{gs}}^*$ in the  challenge  phase even though the challenger has the correct witness to generate a real proof. Due to the statistical zero-knowledge property of our argument system, this change is negligible to $\mathcal{A}$.  Therefore $\mathrm{Pr}[W_3]\overset{s}{\approx}\mathrm{Pr}[W_4]$. \smallskip

\item[Game~$5$:] In this game, we modify Game~$4$ by modifying the distribution of the challenged signature $\Sigma^*=(\Pi_{\mathsf{gs}}^*,\mathbf{c}_1^*,\mathbf{c}_2^*)$ as follows. For $i\in\{0,1\}$, parse $\mathsf{cert}_i=(p_i,\mathbf{a}'_{1,i},\mathbf{b}'_{1,i},\mathbf{a}'_{2,i},\mathbf{b}'_{2,i},t_i,\mathbf{r}_i,\mathbf{v}_i)$. Recall that in Game~$4$,  both $\mathbf{c}_1^*$ and $\mathbf{c}_2^*$ encrypt the same message, i.e., $\rdec(p_{0})$, under the randomized key $(\mathbf{a}'_{1,0},\mathbf{b}'_{1,0})$ and $(\mathbf{a}'_{2,0},\mathbf{b}'_{2,0})$, respectively. Here we change $\mathbf{c}_1^*$ to be encryption of $\rdec(p_{1})$ and keep $\mathbf{c}_2^*$ unchanged. By the semantic security under key randomization of our  key oblivious encryption scheme for public key $(\mathbf{a}_1^{(1)},\mathbf{b}_1^{(1)})$ (which is implied by the $\mathsf{RLWE}$ assumption since we no longer use $s_1$ to open signatures), the change made in this game is negligible to the adversary.  Therefore we have $\vert\mathrm{Pr}[W_4]-\mathrm{Pr}[W_5]\vert=\mathsf{negl}(\lambda)$. \smallskip
\item[Game~$6$:] In this game, we further modify  the distribution of the challenged signature $\Sigma^*$. We change $\mathbf{c}_1^*$ to be encryption of $\rdec(p_{1})$ under a fresh and then randomized key. By the property of key privacy under key randomization of our key-oblivious encryption scheme, the change made in this game is negligible to the adversary.  Therefore we have $\vert\mathrm{Pr}[W_5]-\mathrm{Pr}[W_6]\vert=\mathsf{negl}(\lambda)$. \smallskip
\item[Game~$7$:] In this game, we again modify  the distribution of the challenged signature $\Sigma^*$. We change $\mathbf{c}_1^*$ to be encryption of $\rdec(p_{1})$ under the randomized key $(\mathbf{a}'_{1,1},\mathbf{b}'_{1,1})$. By the same argument of indistinguishability between Game~$6$ and Game~$5$, we have $\vert\mathrm{Pr}[W_6]-\mathrm{Pr}[W_7]\vert=\mathsf{negl}(\lambda)$. \smallskip

\item[Game~$8$:]  This game is the same as Game~$7$ with one modification: it changes back to $s_1$ for the opening oracle queries and erases $(s_2,\mathbf{e}_2)$ again.     This change is indistinguishable to $\mathcal{A}$ until event $F_2$, where $\mc{A}$ queries a valid signature $(\Pi_{\mathsf{gs}},\mathbf{c}_1,\mathbf{c}_2
    )$ with $\mathbf{c}_1,\mathbf{c}_2$ encrypting different messages to the opening oracle, occurs. Since event $F_2$ violates the simulation soundness of our argument system, we have $\vert\mathrm{Pr}[W_7]-\mathrm{Pr}[W_8]\vert\leq \mathsf{Adv}_{\Pi_{\mathsf{gs}}}^{\mathsf{ss}}(\lambda)=\mathsf{negl}(\lambda)$.  \smallskip

\item[Game~$9$:] In this game, we modify Game~$8$ by modifying the distribution of the challenged signature $\Sigma^*=(\Pi_{\mathsf{gs}}^*,\mathbf{c}_1^*,\mathbf{c}_2^*)$ again.
It changes $\mathbf{c}_2^*$ to be encryption of $\rdec(p_{1})$ under the randomized key $(\mathbf{a}'_{2,1},\mathbf{b}'_{2,1})$ in the  {challenge} phase.  
By the same argument of indistinguishability from Game~$4$ to Game~$7$, we have $\vert\mathrm{Pr}[W_8]-\mathrm{Pr}[W_9]\vert=\mathsf{negl}(\lambda)$.
 \smallskip

\item[Game~$10$:] 
Note that in Game~$9$, both $\mathbf{c}_1^*$ and $\mathbf{c}_2^*$ encrypt the same message, i.e., $\rdec(p_{1})$, under the randomized key $(\mathbf{a}'_{1,1},\mathbf{b}'_{1,1})$ and $(\mathbf{a}'_{2,1},\mathbf{b}'_{2,1})$, respectively. Therefore, the challenger has correct witness to generate $\Pi_{\mathsf{gs}}^{*}$. In this game, we modify Game~$9$ by switching back to
a real proof $\Pi_\mathsf{gs}^*$ in the {challenge} phase. Then the views of $\mathcal{A}$ in Game~$9$ and Game~$10$ are statistically indistinguishable by the statistical zero-knowledge property of our argument system.  Hence $\mathrm{Pr}[W_9]\overset{s}{\approx}\mathrm{Pr}[W_{10}]$. \smallskip

\item[Game~$11$:] This game changes Game~$10$ in one aspect. It now generates  real proofs  for the opening oracle queries. Due to the statistical zero-knowledge property of our argument system,  Game~$10$ and Game~$11$  are statistically indistinguishable to $\mathcal{A}$. In other words, we have $\mathrm{Pr}[W_{10}]\overset{s}{\approx}\mathrm{Pr}[W_{11}]$. This is indeed the experiment $\mathbf{Exp}_{\mathsf{ATS},\mc{A}}^{\mathsf{AuT}-1}(\lambda)$. Hence, we have   $\mathrm{Pr}[W_{11}]=\mathrm{Pr}[\mathbf{Exp}_{\mathsf{ATS},\mc{A}}^{\mathsf{AuT}-1}(\lambda)=1]$.

\end{description}
As a result, we obtain  $$\vert\mathrm {Pr}[\mathbf{Exp}_{\mathsf{ATS},\mc{A}}^{\mathsf{AuT}-1}(\lambda)=1]-\mathrm{Pr}[\mathbf{Exp}_{\mathsf{ATS},\mc{A}}^{\mathsf{AuT}-0}(\lambda)=~1]\vert = \mathsf{negl}(\lambda),$$
and hence our scheme is anonymous under tracing.
\end{proof}

\begin{lemma}\label{lemma:ATS-Trace}
Assuming the  hardness of the $\mathsf{RSIS}$ problem, in the random oracle model, the given accountable tracing signature scheme is traceable .
\end{lemma}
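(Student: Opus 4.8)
The plan is to assume, toward a contradiction, that a $\mathrm{PPT}$ adversary $\mathcal{A}$ wins the traceability experiment with non-negligible probability, and to build from $\mathcal{A}$ either a forger against the Ducas--Micciancio signature or a solver for $\mathsf{RSIS}$. First I would record how the simulator answers the oracles: since it generates $\mathsf{gpk}$, $\mathsf{ik}$ and $\mathsf{ok}=(s_1,\mathbf{e}_1)$ itself, it can run $\mathsf{Enroll}$, $\mathsf{Sign}$ and $\mathsf{Open}$ exactly as in the real scheme, and in particular the final opening performed by the challenger uses $s_1$ honestly. Observe that whenever $\mathcal{A}$ wins it outputs a fresh pair $(M,\Sigma=(\Pi_{\mathsf{gs}},\mathbf{c}_1,\mathbf{c}_2))$ with $(M,\Sigma)\notin Q$ and $\mathsf{Verify}=1$, so $\Pi_{\mathsf{gs}}$ is an accepting Fiat--Shamir transcript.

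Next I would extract a witness. Replaying $\mathcal{A}$ and reprogramming the random oracle $\mathcal{H}_{\mathsf{FS}}$ in the standard improved-forking manner, with non-negligible probability one obtains, for some repetition index, three accepting transcripts sharing the same commitment but with pairwise distinct challenges in $\{1,2,3\}$. Feeding these to the knowledge extractor of Theorem~\ref{Theorem:zk-protocol} yields a witness $\zeta=(p,\mathbf{a}'_1,\mathbf{b}'_1,\mathbf{a}'_2,\mathbf{b}'_2,t,\mathbf{r},\mathbf{v},\mathbf{x}',g'_1,\mathbf{e}'_{1,1},\mathbf{e}'_{1,2},g'_2,\mathbf{e}'_{2,1},\mathbf{e}'_{2,2})$ in which $(t,\mathbf{r},\mathbf{v})$ is a valid Ducas--Micciancio signature on $\mathfrak{m}=(p\|\mathbf{a}'_1\|\mathbf{b}'_1\|\mathbf{a}'_2\|\mathbf{b}'_2)$, the pair $(\mathbf{c}_1,\mathbf{c}_2)$ is a correct bounded-noise encryption of $\mathsf{rdec}(p)$ under $(\mathbf{a}'_1,\mathbf{b}'_1)$ and $(\mathbf{a}'_2,\mathbf{b}'_2)$, and $\|\mathbf{x}'\|_\infty\le 1$ with $\mathbf{B}\cdot\mathbf{x}'=p$.

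Now I would split into two cases according to whether $\mathfrak{m}$ was ever signed by the simulator during an $\mathsf{Enroll}$ query. If it was not, then $(t,\mathbf{r},\mathbf{v})$ is an existential forgery of the Ducas--Micciancio scheme; embedding the signature challenge into the verification key and answering $\mathsf{Enroll}$ via the signing oracle turns this case into a break of $\mathsf{RSIS}$ through the existential unforgeability of the Ducas--Micciancio signature (Section~\ref{subsection:DM-signatures}). Otherwise $\mathfrak{m}$ was signed for some enrolled user, so $(\mathbf{a}'_1,\mathbf{b}'_1)$ coincides with the randomized key stored in that user's certificate and $p$ lies in $\mathsf{reg}$. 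If that enrollment was \emph{traceable} (i.e.\ $\mathsf{tr}\vee\mathsf{tr}'=1$), then $(\mathbf{a}'_1,\mathbf{b}'_1)$ is a randomization of $(\mathbf{a}_1^{(1)},\mathbf{b}_1^{(1)})$, for which the simulator knows $s_1$; by correctness of the $\mathsf{KOE}$ scheme the decryption of $\mathbf{c}_1$ recovers $p\in\mathsf{reg}$, $\mathsf{Open}$ outputs $p$ together with an honestly generated $\Pi_{\mathsf{open}}$, and $\mathsf{Judge}$ accepts by completeness, so $\mathcal{A}$ does not win, a contradiction. The only remaining possibility is a \emph{non-traceable} enrollment, which by definition of the $\mathsf{Enroll}$ oracle forces $\mathsf{tr}=0$ and $p\in\mathrm{dom}\,S$, i.e.\ $p$ is a key produced by $\mathsf{UKeyGen}$ whose secret $\mathbf{x}$ was never revealed to $\mathcal{A}$.

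In this last case I would reduce to $\mathsf{RSIS}$ via a kernel collision on $\mathbf{B}$. Embedding the $\mathsf{RSIS}$ matrix challenge as $\mathbf{B}$ and sampling each honest user's secret $\mathbf{x}$ with coefficients in $\{-1,0,1\}$ (so that the simulator still knows $\mathbf{x}$ and can answer $\mathsf{Sign}$), the extracted $\mathbf{x}'$ satisfies $\mathbf{B}(\mathbf{x}-\mathbf{x}')=0$ with $\|\mathbf{x}-\mathbf{x}'\|_\infty\le 2$. Because $\mathcal{A}$'s view of this user depends only on $p=\mathbf{B}\mathbf{x}$ — the $\mathsf{Sign}$ and $\mathsf{Open}$ transcripts being statistical zero-knowledge — the simulator's secret $\mathbf{x}$ stays uniformly distributed among the preimages of $p$; Lemma~\ref{lemma:counting-argument} guarantees that with probability at least $1-2^{-n}$ there are at least two such preimages, so $\mathbf{x}'\ne\mathbf{x}$ holds with probability at least $1/2$, and then $\mathbf{x}-\mathbf{x}'$ is a valid $\mathsf{RSIS}$ solution. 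I expect the main obstacle to be the extraction step rather than the algebra: one must argue that rewinding the Fiat--Shamir proof succeeds with non-negligible probability across the $\kappa$ repetitions, and that the two reductions (Ducas--Micciancio forgery versus $\mathbf{B}$-kernel collision) can each be set up so that $\mathsf{Enroll}$, $\mathsf{Sign}$ and $\mathsf{Open}$ are all simulated faithfully. Once the witness is in hand, the case analysis and the counting argument close the proof.
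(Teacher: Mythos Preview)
Your proposal is correct and follows essentially the same approach as the paper: extract a witness from the accepting $\Pi_{\mathsf{gs}}$ via rewinding/forking, invoke Ducas--Micciancio unforgeability to conclude the extracted certificate was issued by $\mathsf{Enroll}$, and then split on whether that enrollment was traceable (contradiction via $\mathsf{KOE}$ correctness and completeness of $\Pi_{\mathsf{open}}$) or non-traceable (collision on $\mathbf{B}$ via Lemma~\ref{lemma:counting-argument}, yielding an $\mathsf{RSIS}$ solution). Your write-up is in fact slightly more careful than the paper's in two respects: you make the ``$\mathfrak{m}$ never signed $\Rightarrow$ DM forgery'' branch explicit rather than folding it into a single sentence, and you justify why $\mathbf{x}$ remains hidden from $\mathcal{A}$'s view (via the statistical ZK of $\Pi_{\mathsf{gs}}$) before invoking the preimage-counting argument.
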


\begin{proof}
We show that the success probability $\epsilon$ of $\mathcal{A}$ against traceability is negligible by the unforgeability of the Ducas-Micciancio signature recalled in Section~\ref{subsection:DM-signatures}, which in turn relies on the hardness of the $\mathsf{RSIS}$ problem, or by the hardness of solving a   $\mathsf{RSIS}$ instance directly.
\smallskip

Let $\mathcal{C}$ be the challenger and honestly run the experiment $\mathbf{Exp}_{\mathsf{ATS},\mathcal{A}}^{\mathsf{Trace}}(\lambda)$.
When $\mathcal{A}$ halts, it outputs $(M^*,\Pi_{\mathsf{gs}}^*,\mathbf{c}_1^*,\mathbf{c}_2^*)$. Let us consider the case that $\mathcal{A}$ wins.   Parse $\Pi_{\mathsf{gs}}^*=(\{\mathrm{CMT}_i^*\}_{i=1}^{\kappa},\mathrm{CH}^*,\{\mathrm{RSP}_i^*\}_{i=1}^{\kappa})$. Let
$$\xi^*=(\mathbf{A},\mathbf{A}_{[0]},\ldots,\mathbf{A}_{[d]},\mathbf{F},\mathbf{F}_0,\mathbf{F}_1,u,\mathbf{B},\mathbf{c}_1^*,\mathbf{c}_2^*). $$

Then $\mathrm{CH}^*=\mathcal{H}_{\mathsf{FS}}\big(M^*, \{\mathrm{CMT}_i^*\}_{i=1}^\kappa, \xi^* \big)$ and for each $i\in[\kappa]$, $\mathrm{RSP}_i^*$ is a valid response corresponding to $\mathrm{CMT}_i^*$ and $\mathrm{CH}_i^*$. This is due to  the fact that $\mathcal{A}$ wins and hence $\Pi_{\mathsf{gs}}^*$ passes the verification process.

\smallskip
We remark that  $\mathcal{A}$ had queried the tuple $\big(M^*, \{\mathrm{CMT}_i^*\}_{i=1}^\kappa, \xi^* \big)$  to the hash oracle $\mathcal{H}_{\mathsf{FS}}$ with all but negligible probability. Since we can only guess correctly the value $\mathcal{H}_{\mathsf{FS}}\big(M^*, \{\mathrm{CMT}_i^*\}_{i=1}^\kappa, \xi^* \big)$ with probability $3^{-\kappa}$, which is negligible. Therefore, $\mathcal{A}$ had queried the tuple $\big(M^*, \{\mathrm{CMT}_i^*\}_{i=1}^\kappa, \xi^* \big)$ to $\mathcal{H}_{\mathsf{FS}}$ with probability $\epsilon'=\epsilon-3^{-\kappa}$. Let  this tuple  be the $\theta^*$-th oracle query made by $\mathcal{A}$ and assume $\mathcal{A}$ had made $Q_H$ queries in total.

Up to this point, the challenger $\mc{C}$ then replays the behaviour of $\mathcal{A}$ for at most $32 \cdot Q_H/\epsilon'$ times. In each new replay, $\mathcal{A}$ is given the same hash answers $r_1,\ldots,r_{\theta^*-1}$ as in the original run for the first $\theta^*-1$ hash queries while it is given uniformly random and independent values $r'_{\theta^*},\ldots, r'_{Q_H}$ for the remaining hash queries.
According to the forking lemma of Brickell et al.~\cite{BPVY00}, with probability $\geq 1/2$, $\mc{B}$ obtains $3$-fork involving the same tuple $\big(M^*, \{\mathrm{CMT}_i^*\}_{i=1}^\kappa, \xi^* \big)$ with pairwise distinct hash values $\mathrm{CH}_{\theta^*}^{(1)}, \mathrm{CH}_{\theta^*}^{(2)}, \mathrm{CH}_{\theta^*}^{(3)}\in\{1,2,3\}^{\kappa}$  and corresponding valid responses $\mathrm{RSP}_{\theta^*}^{(1)}$, $\mathrm{RSP}_{\theta^*}^{(2)}$, $\mathrm{RSP}_{\theta^*}^{(3)}$. We observe that with probability $1-(\frac{7}{9})^{\kappa}$,  there exists  some $j\in\{1,2,\ldots,\kappa\}$ such that $\{\mathsf{CH}_{\theta^*,j}^{(1)},\mathsf{CH}_{\theta^*,j}^{(2)},\mathsf{CH}_{\theta^*,j}^{(3)}\}=\{1,2,3\}$. \smallskip

In other words, we obtain three valid responses
$\mathrm{RSP}_{\theta^*,j}^{(1)}$, $\mathrm{RSP}_{\theta^*,j}^{(2)}$, $\mathrm{RSP}_{\theta^*,j}^{(3)}$ for all the challenges $1,2,3$   with respect to the same commitment $\mathrm{CMT}_{j}^*$. Due to the computational binding property of the $\mathsf{COM}$ scheme, $\mathcal{C}$ is able to extract $\zeta^*$ of form
$$\zeta^*=(p^*,\mathbf{a}_1^*,\mathbf{b}_1^*,\mathbf{a}_2^*,\mathbf{b}_2^*, t^*,\mathbf{r}^*, \mathbf{v}^*,\mathbf{x}^*, g_1^*,\mathbf{e}_{1,1}^*,\mathbf{e}_{1,2}^*,g_2^*,\mathbf{e}_{2,1}^*,\mathbf{e}_{2,2}^*)
 $$
such that $t^*\in\mathcal{T}_d$, $\mathbf{r}^*, \mathbf{v}^*$ have infinity bound $\beta$, $\mathbf{g}_1^*,\mathbf{e}_{1,1}^*,\mathbf{e}_{1,2}^*,g_2^*,\mathbf{e}_{2,1}^*, \mathbf{e}_{2,2}^*$ have infinity bound $B$, $\mathbf{x}^*$ has infinity bound~$1$; and equations $\mathbf{B}\cdot \mathbf{x}^*=p^*$ and
$$\mathbf{A}_{t^*}\cdot \mathbf{v}^*=u+\mathbf{F}\cdot \rdec(\mathbf{F}_0\cdot \mathbf{r}^*+\mathbf{F}_1\cdot \rdec(p^*\|\mathbf{a}_1^*\|\mathbf{b}_1^*\|\mathbf{a}_2^*\|\mathbf{b}_2^*))$$ hold, and   $\mathbf{c}_1^*,\mathbf{c}_2^*$ are ciphertexts of $\rdec(p^*)$ under the key $(\mathbf{a}_1^*,\mathbf{b}_1^*)$ and $(\mathbf{a}_2^*,\mathbf{b}_2^*)$ with randomness $(g_1^*,\mathbf{e}_{1,1}^*,\mathbf{e}_{1,2}^*)$ and $(g_2^*,\mathbf{e}_{2,1}^*,\mathbf{e}_{2,2}^*)$, respectively.
\smallskip

Since $\mathcal{A}$ wins the game, then either (i) the $\mathsf{Open}$ algorithm outputs~$\bot$ or  (ii) the $\mathsf{Open}$ algorithm outputs $({p'},\Pi_{\mathsf{open}}^*)$ with ${p'}\neq \bot$ but the proof $\Pi_{\mathsf{open}}^*$ is not accepted by the $\mathsf{Judge}$ algorithm.
\smallskip

By the unforgeability of the underlying signature scheme, with overwhelming probability, $(p^*,\mathbf{a}_1^*,\mathbf{b}_1^*,\mathbf{a}_2^*,\mathbf{b}_2^*, t^*,\mathbf{r}^*, \mathbf{v}^*)$ is a certificate returned by the $\mathsf{Enroll}$ oracle. In other words, $p^*$ is a registered user. If $p^*$ is a non-traceable user, then $\mathcal{A}$ does not hold the user secret key of $p^*$, denoted as $\mathbf{x}'$. Note that this is ensured by the definition of traceability described in Section~\ref{subsection:ATS-security-model}. With probability $\geq 1/2$, $\mathbf{x}^*\neq \mathbf{x}'$ by Lemma~\ref{lemma:counting-argument}, in which case we obtain  a vector $\mathbf{y}=\mathbf{x}^*-\mathbf{x}'\neq \mathbf{0}$ so that $\mathbf{B}\cdot \mathbf{y}=0$ and $\|\mathbf{y}\|_{\infty}\leq \|\mathbf{x}^*\|_{\infty}+\|\mathbf{x}'\|_{\infty}\leq 2$. This solves a $\mathsf{RSIS}$ instance. Therefore, the $\mathsf{Open}$ algorithm outputs $\bot$ with negligible probability. In other words, case (i) happens with negligible probability. On the other hand, if $p^*$ is a traceable user. Then by the correctness of the underlying encryption scheme, the $\mathsf{Open}$ algorithm will output $p^*$. Furthermore, by the honest behaviour of decryption (performed by the honest challenger), the $\mathsf{Judge}$ algorithm always outputs $1$. This implies case (ii) occurs with negligible probability.  This concludes the proof.
\end{proof}

\begin{lemma}\label{lemma:ATS-NF}
Assuming the hardness of the $\mathsf{RSIS}$ problem, in the random oracle model,  the given accountable tracing signature scheme is non-frameable.
\end{lemma}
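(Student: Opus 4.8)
The plan is to reduce non-frameability to the hardness of $\mathsf{RSIS}$ with respect to the public matrix $\mathbf{B}\in R_q^{1\times m}$, which is fixed in the public parameters (and therefore outside the adversary's control, since the challenger only accepts forgeries for which $\mathsf{gpk.pp}=\mathsf{pp}$). Given an $\mathsf{RSIS}$ challenge $\mathbf{B}$ with norm bound $2$, the reduction $\mathcal{B}$ embeds it into $\mathsf{pp}$ and invokes $\mathcal{A}$. Because in this experiment $\mathcal{A}$ itself generates $(\mathsf{gpk},\mathsf{st})$, the reduction needs to know none of the group manager's secrets; it only has to answer the $\mathsf{UKG}$ and $\mathsf{Sign}$ queries and to handle the random oracle $\mathcal{H}_{\mathsf{FS}}$ by lazy sampling.

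Simulating these oracles is in fact perfect. On a $\mathsf{UKG}$ query, $\mathcal{B}$ samples $\mathbf{x}\in R^m$ with coefficients uniform in $\{-1,0,1\}$, sets $p=\mathbf{B}\cdot\mathbf{x}$, stores $S[p]=\mathbf{x}$, and returns $\mathsf{upk}=p$, exactly reproducing the honest $\mathsf{UKeyGen}$ distribution. On a $\mathsf{Sign}(\mathsf{cert},M)$ query for an honest user, $\mathcal{B}$ retrieves $\mathbf{x}=S[\mathsf{cert}.\mathsf{upk}]$ and runs the genuine signing algorithm; it can do so because it holds $\mathbf{x}$ and because every public object needed to form the ciphertexts and the real argument $\Pi_{\mathsf{gs}}$ (in particular $\mathsf{epk}_1,\mathsf{epk}_2$) is contained in the adversarially supplied certificate. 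Thus, in contrast with the anonymity proof, no simulated proofs are required here.

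When $\mathcal{A}$ halts with a winning tuple $(M^*,\Sigma^*,\mathsf{upk}^*,\Pi^*)$, we have $(M^*,\Sigma^*)\notin Q$, $\mathsf{Verify}=1$, $\mathsf{upk}^*\in\mathrm{dom}(S)$ and $\mathsf{Judge}=1$. Since the pair is fresh, $\Pi_{\mathsf{gs}}^*$ is a genuine adversarial argument, so $\mathcal{B}$ may apply verbatim the forking/rewinding argument from the proof of Lemma~\ref{lemma:ATS-Trace}: it guesses the hash query fixing the challenge of $\Pi_{\mathsf{gs}}^*$, replays $\mathcal{A}$ to obtain a $3$-fork, and extracts a witness $\zeta^*$. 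This yields a vector $\mathbf{x}^*\in R^m$ with $\|\mathbf{x}^*\|_\infty\le 1$ and $\mathbf{B}\cdot\mathbf{x}^*=p^*$, where $p^*$ is the plaintext that $\Pi_{\mathsf{gs}}^*$ certifies to be carried by $\mathbf{c}_1^*$. The delicate step is to show $p^*=\mathsf{upk}^*$: since $\mathsf{Judge}$ accepts, the opening argument inside $\Pi^*$ is sound and pins down the decryption of $\mathbf{c}_1^*$ to equal $\mathsf{upk}^*$, while the ciphertext well-formedness proved by $\Pi_{\mathsf{gs}}^*$ pins down the encrypted message as $\rdec(p^*)$; uniqueness of the recovered message then forces $\mathsf{upk}^*=p^*$. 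I expect this reconciliation of the signing argument with the opening argument to be the main obstacle, because it has to be argued against a fully corrupted group manager whose keys and certificates are all adversarial.

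It remains to produce a collision for $\mathbf{B}$. As $\mathsf{upk}^*=p^*\in\mathrm{dom}(S)$, the reduction itself holds an honestly sampled preimage $\mathbf{x}$ with $\mathbf{B}\cdot\mathbf{x}=p^*=\mathbf{B}\cdot\mathbf{x}^*$ and $\|\mathbf{x}\|_\infty\le 1$. The adversary's entire view depends on $\mathbf{x}$ only through $p^*$, and by Lemma~\ref{lemma:counting-argument} the value $p^*$ admits, with probability at least $1-2^{-n}$, a second short preimage, so $\mathbf{x}$ is information-theoretically undetermined given the view and $\Pr[\mathbf{x}^*=\mathbf{x}]\le 1/2$. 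Hence, with probability at least $1/2$, $\mathbf{y}=\mathbf{x}^*-\mathbf{x}\neq\mathbf{0}$ satisfies $\mathbf{B}\cdot\mathbf{y}=\mathbf{0}$ and $\|\mathbf{y}\|_\infty\le 2$, a valid $\mathsf{RSIS}$ solution. Folding together the noticeable success probability of the forking step and this constant factor shows that the non-frameability advantage of any $\mathrm{PPT}$ adversary is negligible under the $\mathsf{RSIS}$ assumption, which concludes the proof.
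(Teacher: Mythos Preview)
Your proposal is correct and follows essentially the same route as the paper: embed the $\mathsf{RSIS}$ instance into $\mathbf{B}$, simulate $\mathsf{UKG}$ and $\mathsf{Sign}$ honestly, fork on $\Pi_{\mathsf{gs}}^*$ to extract a short preimage of the framed $\mathsf{upk}^*$, use soundness of $\Pi_{\mathsf{open}}^*$ to reconcile the extracted plaintext with $\mathsf{upk}^*$, and finish with Lemma~\ref{lemma:counting-argument}. Your write-up is in fact more explicit than the paper's about the reduction setup and the reconciliation step; the only minor imprecision is the claim that the adversary's view depends on the honest $\mathbf{x}$ ``only through $p^*$'', since real signing proofs use $\mathbf{x}$ as a witness---but this is immediately repaired by the statistical zero-knowledge of the argument system, which the paper also relies on implicitly.
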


\begin{proof}
We show that the success probability $\epsilon$ of $\mathcal{A}$ against non-frameability is negligible assuming the hardness of solving a $\mathsf{RSIS}$ instance.

Let $\mathcal{C}$ be the challenger and faithfully run the experiment $\mathbf{Exp}_{\mathsf{ATS},\mathcal{A}}^{\mathsf{NF}}(\lambda)$. When $\mathcal{A}$ halts, it outputs the tuple $(M^*,\Pi_{\mathsf{gs}}^*,\mathbf{c}_1^*,\mathbf{c}_2^*, {p}^*,\Pi_{\mathsf{open}}^*)$. Let us consider the case that $\mathcal{A}$ wins.

The fact that $\mathcal{A}$ wins the game implies  $(\Pi_{\mathsf{gs}}^*,\mathbf{c}_1^*,\mathbf{c}_2^*)$ is a valid signature of the message $M^*$ that was not obtained from queries. By the same extraction technique as in Lemma~\ref{lemma:ATS-Trace}, we can extract witness $\mathbf{x}'\in R_q^{m}$ and ${p'}\in R_q$ such that $\|\mathbf{x}'\|_{\infty}\leq 1$, $\mathbf{B}\cdot \mathbf{x}'=p'$  and $\mathbf{c}_1^*,\mathbf{c}_2^*$ are correct encryptions of  $\rdec(p')$. By the correctness of the underlying encryption scheme, $\mathbf{c}_1^*$ will be decrypted to $p'$.

\smallskip

The fact that $\mathcal{A}$ wins the game also implies $\Pi_{\mathsf{open}}^*$ passes the verification process of the  $\mathsf{Judge}$ algorithm. Due to the soundness of the argument system that is used to generate $\Pi_{\mathsf{open}}^*$,  $\mathbf{c}_1^*$ will be decrypted to $p^*$. Hence we have $p'=p^*$. We observe that $\mathcal{A}$ wins the game also implies that $\mathcal{A}$ does not know the user secret key $\mathbf{x}^{*}$ that corresponds to $p^*$. Thus we obtain: $\mathbf{B}\cdot \mathbf{x'}=p'=p^*=\mathbf{B}\cdot \mathbf{x}^*$, where $\|\mathbf{x}^*\|_{\infty}\leq 1$. Lemma~\ref{lemma:counting-argument} implies that $\mathbf{x'}\neq \mathbf{x}^*$ with probability at least $1/2$.  If they are not equal, we obtain a  vector $\mathbf{y}=\mathbf{x'}-\mathbf{x}^*\neq \mathbf{0}$ such that $\mathbf{B}\cdot \mathbf{y}=0$ and $\|\mathbf{y}\|_{\infty}\leq \|\mathbf{x}^*\|_{\infty}+\|\mathbf{x}'\|_{\infty}\leq 2$. However, under the hardness of the $\mathsf{RSIS}$ problem, the success probability of $\mathcal{A}$ is negligible. This concludes the proof.
\end{proof}

\begin{lemma}\label{lemma:ATS-AwA}
Assuming the hardness of  the $\mathsf{RLWE}$ problem, in the random oracle model, the given accountable tracing signature scheme is anonymous with accountability.
\end{lemma}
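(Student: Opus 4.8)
The plan is to proceed by a sequence of indistinguishable games running from $\mathbf{Exp}_{\mathsf{ATS},\mathcal{A}}^{\mathsf{AwA}-0}(\lambda)$ to $\mathbf{Exp}_{\mathsf{ATS},\mathcal{A}}^{\mathsf{AwA}-1}(\lambda)$, closely mirroring the proof of Lemma~\ref{lemma:ATS-AuT} but with one conceptual twist and one simplification. The twist is the following observation, which is what makes anonymity possible against an otherwise all-powerful adversary: in $\mathsf{AwA}$ the adversary controls the group manager (it outputs $\mathsf{gpk}$ and hence knows $\mathsf{ik}$ and $\mathsf{ok}=(s_1,\mathbf{e}_1)$, and may even have chosen $(\mathbf{a}_i^{(1)},\mathbf{b}_i^{(1)})$ maliciously), yet both challenge certificates are required to pass $\mathsf{Account}(\cdot,\cdot,\cdot,0)$. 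By the definition of $\mathsf{Account}$, each randomized key $\mathsf{epk}_{i,c}$ is then forced to be an honest $\mathsf{KeyRand}$ of the \emph{trusted-setup} key $(\mathbf{a}_i^{(0)},\mathbf{b}_i^{(0)})$, whose secret $s_{-i}$ was discarded during $\mathsf{Setup}$ and is thus unknown to every party, including the corrupted manager. The simplification is that the $\mathsf{AwA}$ game exposes no $\mathsf{Open}$ oracle, so none of the soundness/simulation-soundness hybrids of Lemma~\ref{lemma:ATS-AuT} are needed and the reductions never decrypt.

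Concretely, I would first (Game~$1$) replace the real argument $\Pi_{\mathsf{gs}}^*$ in the challenge signature by a simulated one, programming $\mathcal{H}_{\mathsf{FS}}$; this is statistically indistinguishable by the zero-knowledge property of Theorem~\ref{Theorem:zk-protocol}. Once the proof is simulated, the challenge signature is determined solely by $(\mathbf{c}_1^*,\mathbf{c}_2^*)$, and I would morph these from encryptions of $\rdec(p_0)$ under $(\mathsf{epk}_{1,0},\mathsf{epk}_{2,0})$ into encryptions of $\rdec(p_1)$ under $(\mathsf{epk}_{1,1},\mathsf{epk}_{2,1})$. I treat $\mathbf{c}_1^*$ in three sub-steps: switch the plaintext from $\rdec(p_0)$ to $\rdec(p_1)$ keeping the key $\mathsf{epk}_{1,0}$ (plaintext indistinguishability under key randomization, Lemma~\ref{lemma:INDr}); switch the key to a freshly generated then randomized key (key privacy under key randomization, Lemma~\ref{lemma:KPr}); and switch once more to $\mathsf{epk}_{1,1}$ (Lemma~\ref{lemma:KPr} in reverse, the intermediate fresh key being the device that lets the two-base-key $\mathsf{KPr}$ game relate two randomizations of the same base key $(\mathbf{a}_1^{(0)},\mathbf{b}_1^{(0)})$). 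I then repeat the identical three sub-steps for $\mathbf{c}_2^*$ with base key $(\mathbf{a}_2^{(0)},\mathbf{b}_2^{(0)})$. After this, both ciphertexts encrypt $\rdec(p_1)$ under the keys from $\mathsf{cert}_1$, so the challenger again holds a valid witness $\zeta$ (built from $\mathsf{cert}_1$, $\mathsf{usk}_1$, and fresh encryption randomness) and I switch $\Pi_{\mathsf{gs}}^*$ back to a real proof, landing in $\mathbf{Exp}^{\mathsf{AwA}-1}$.

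Each ciphertext sub-step is justified by a reduction that embeds the challenge public key of the relevant $\mathsf{KOE}$ experiment into the trusted-setup key $(\mathbf{a}_i^{(0)},\mathbf{b}_i^{(0)})$ inside $\mathsf{pp}$ — legitimate precisely because $\mathsf{pp}$ is produced by the honest $\mathsf{Setup}$ before $\mathcal{A}$ outputs $\mathsf{gpk}$. When $\mathsf{Ch}$ is invoked, the reduction reads the randomization randomness from the supplied $w_c^{\mathsf{escrw}}=(g_i,\mathbf{e}_{i,1},\mathbf{e}_{i,2})$, which — because $\mathsf{Account}(\cdot,\cdot,w_c^{\mathsf{escrw}},0)$ succeeds — is exactly a valid $\mathsf{KeyRand}$ witness for $\mathsf{epk}_{i,c}$, and forwards it with the candidate plaintexts $\rdec(p_0),\rdec(p_1)$ to the $\mathsf{KOE}$ challenger, using the returned ciphertext as $\mathbf{c}_i^*$. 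The corrupted-manager secrets $\mathsf{ik},\mathsf{ok},(\mathbf{a}_i^{(1)},\mathbf{b}_i^{(1)})$ are all simulatable by the reduction and never touch the $\mathsf{KOE}$ challenge.

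I expect the main obstacle to be conceptual rather than computational: one must argue carefully that the full power handed to the manager is neutralized by the $\mathsf{tr}=0$ constraint, which pins the challenge ciphertexts to randomizations of the discarded-secret keys $(\mathbf{a}_i^{(0)},\mathbf{b}_i^{(0)})$ so that $\mathsf{KOE}$ security (hence $\mathsf{RLWE}$, via Theorem~\ref{theorem:KOE-sec}) applies untouched. Summing the negligible gaps across the games then yields $|\mathrm{Pr}[\mathbf{Exp}^{\mathsf{AwA}-1}(\lambda)=1]-\mathrm{Pr}[\mathbf{Exp}^{\mathsf{AwA}-0}(\lambda)=1]|=\mathsf{negl}(\lambda)$, establishing anonymity with accountability.
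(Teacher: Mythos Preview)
Your proposal is correct and follows essentially the same approach as the paper. The paper's own proof is a two-sentence sketch stating that the argument mirrors Lemma~\ref{lemma:ATS-AuT} but without switching between decryption keys, because the randomized keys in the challenge certificates are randomizations of $(\mathbf{a}_i^{(0)},\mathbf{b}_i^{(0)})$ rather than of the manager-generated $(\mathbf{a}_i^{(1)},\mathbf{b}_i^{(1)})$; you have simply fleshed out this sketch in full, correctly identifying both the simplification (no $\mathsf{Open}$ oracle, hence no key-switching or simulation-soundness hybrids) and the conceptual core (the $\mathsf{Account}(\cdot,\cdot,\cdot,0)$ check pins the ciphertext keys to the trusted-setup keys whose secrets were discarded, so the $\mathsf{KOE}$ reductions go through despite the corrupted manager).
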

\begin{proof}
The proof of this lemma is similar to Lemma~\ref{lemma:ATS-AuT} except that we do not need to switch between two decryption keys. This is because the randomized keys in the certificate of the challenged users are obtained from the  pairs $(\mathbf{a}_1^{(0)},\mathbf{b}_1^{(0)})$ and $(\mathbf{a}_2^{(0)},\mathbf{b}_2^{(0)})$, which are not related to the opening key.  The details are omitted here.
\end{proof}

\begin{lemma}\label{lemma:ATS-TO}
Assuming  the hardness of the $\mathsf{RLWE}$ problem, in the random oracle model,  the given accountable tracing signature  scheme is  trace-oblivious.

\end{lemma}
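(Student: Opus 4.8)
The goal is to show that the two experiments $\mathbf{Exp}_{\mathsf{ATS},\mathcal{A}}^{\mathsf{TO}-0}(\lambda)$ and $\mathbf{Exp}_{\mathsf{ATS},\mathcal{A}}^{\mathsf{TO}-1}(\lambda)$ are computationally indistinguishable. The only difference between them is the traceability bit $b$ handed to $\mathsf{Enroll}$ inside every $\mathsf{Ch}$ query, which governs whether the randomized keys $(\mathsf{epk}_1,\mathsf{epk}_2)$ placed in a challenge certificate are obtained by randomizing the trusted-setup pairs $(\mathbf{a}_i^{(0)},\mathbf{b}_i^{(0)})$ (whose secrets have been discarded) or the group-manager pairs $(\mathbf{a}_i^{(1)},\mathbf{b}_i^{(1)})$ (for which $s_1$ is the opening key). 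The plan is to argue, via a sequence of games, that a randomized key carries no information about the base pair from which it was derived: by the key-randomizability (\textsf{KR}) property of our \textsf{KOE} scheme (Lemma~\ref{lemma:KR}), and hence under \textsf{RLWE}, every $\mathsf{KeyRand}$ output is indistinguishable from a uniformly random pair in $R_q^{\ell}\times R_q^{\ell}$. In addition, one must verify that the extra $\mathsf{Open}$ oracle --- present here but absent from the $\mathsf{AwA}$ game --- does not leak $b$.

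First I would move from $\mathbf{Exp}^{\mathsf{TO}-0}$ to an intermediate game in which, for every $\mathsf{Ch}$ query, both $\mathsf{epk}_1$ and $\mathsf{epk}_2$ are replaced by freshly sampled uniform pairs. Since the base pairs here are the discarded-secret pairs $(\mathbf{a}_i^{(0)},\mathbf{b}_i^{(0)})$, the reduction to \textsf{KR} (equivalently, two nested \textsf{RLWE} instances: first replacing $(\mathbf{a}_i^{(0)},\mathbf{b}_i^{(0)})$ by uniform with secret $s_{-i}$, then arguing the randomization with secret $g_i$ is pseudorandom) never needs the opening key, so the $\mathsf{Open}$ oracle can keep using $s_1$ throughout. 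A standard hybrid over the polynomially many $\mathsf{Ch}$ queries handles the multiplicity. Next I would move from this all-uniform game to $\mathbf{Exp}^{\mathsf{TO}-1}$, in which the challenge keys are randomizations of $(\mathbf{a}_i^{(1)},\mathbf{b}_i^{(1)})$. The $\mathsf{epk}_2$ side is immediate: its underlying secret $s_2$ is not the opening key, so replacing $(\mathbf{a}_2^{(1)},\mathbf{b}_2^{(1)})$ by uniform and invoking \textsf{KR} costs nothing while $\mathsf{Open}$ still runs on $s_1$.

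The delicate part is the $\mathsf{epk}_1$ side, because the second component of its randomization, $\mathbf{b}_1^{(1)}\cdot g_1+\mathbf{e}_{1,2}$, can only be shown pseudorandom after $(\mathbf{a}_1^{(1)},\mathbf{b}_1^{(1)})$ is replaced by a uniform pair, which forfeits knowledge of the opening key $s_1$. As in the proof of Lemma~\ref{lemma:ATS-AuT}, I would therefore first retain $(s_2,\mathbf{e}_2)$, replace the answers of $\mathsf{Open}$ by simulated proofs (programming $\mathcal{H}_{\mathsf{FS}}$, statistically close by zero-knowledge), and switch $\mathsf{Open}$ to decrypt with $s_2$ instead of $s_1$; this switch is undetectable up to the soundness and simulation-soundness of $\Pi_{\mathsf{gs}}$, since a valid signature is guaranteed to carry two ciphertexts encrypting the same $\rdec(p)$ (the Naor-Yung safeguard). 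With $s_1$ now unused, the \textsf{KR}/\textsf{RLWE} argument applies to $(\mathbf{a}_1^{(1)},\mathbf{b}_1^{(1)})$, after which I switch $\mathsf{Open}$ back to $s_1$. I expect this decryption-key switch --- together with ordering the two base pairs so that whichever secret is momentarily sacrificed is never the one currently driving $\mathsf{Open}$ --- to be the main obstacle.

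Finally I would check that $\mathsf{Open}$ never leaks $b$. For any signature the adversary derives from a challenge certificate, decryption under the opening key lands in the registration table only in the traceable world ($b=1$), where it equals the challenge user's $p\in U$ and the oracle wrapper returns $\bot$; in the non-traceable world, or in the intermediate uniform-key games, the term $\mathbf{c}_{1,2}-\mathbf{c}_{1,1}\cdot s_1$ carries noise far exceeding $\lceil q/10\rceil$, so the recovered element is essentially uniform over $R_q$, lies outside $\mathbf{reg}$ except with negligible probability, and $\mathsf{Open}$ itself already returns $\bot$. Thus the wrapper outputs $\bot$ on challenge-derived signatures in every world, while its behaviour on ordinary enrolled users is identical across the two experiments; a union bound over the polynomially many registered identities absorbs the collision events. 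Collecting the negligible gaps along the chain then yields $\mathbf{Adv}_{\mathsf{ATS},\mathcal{A}}^{\mathsf{TO}}(\lambda)=\mathsf{negl}(\lambda)$, which completes the proof.
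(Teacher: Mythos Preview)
Your proposal follows the same skeleton as the paper --- a four-step hybrid that walks each randomized key in a challenge certificate through a fresh \textsf{KeyGen} output using the \textsf{KR} property of the \textsf{KOE} scheme --- but you go substantially further than the paper does. The paper's own proof of this lemma is a bare four-game sketch that invokes key randomizability at each hop and says nothing whatsoever about the $\mathsf{Open}$ oracle; in particular it never explains how a reduction to \textsf{KR} for the pair $(\mathbf{a}_1^{(1)},\mathbf{b}_1^{(1)})$ can still answer opening queries once knowledge of $s_1$ has been given up.

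Your identification of this obstacle and your fix --- handle the $\mathsf{epk}_2$ side first while $\mathsf{Open}$ still runs on $s_1$, then retain $(s_2,\mathbf{e}_2)$, simulate $\Pi_{\mathsf{open}}$, switch decryption to $s_2$, apply \textsf{KR} to the $(\mathbf{a}_1^{(1)},\mathbf{b}_1^{(1)})$ side, and switch back --- is correct and is precisely the Naor--Yung key-switching maneuver the paper itself deploys in the proof of Lemma~\ref{lemma:ATS-AuT}. Your closing paragraph, checking that the wrapper's $\bot$-on-$U$ filter equalizes the oracle's output on challenge-derived signatures across all hybrids (garbage $\notin\mathbf{reg}$ in the non-traceable and uniform-key worlds, correct $p\in U$ in the traceable world), is likewise a point the paper leaves implicit. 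In short, the paper's argument and yours coincide at the top level, but the paper simply omits the parts you single out as the main obstacle; your proposal is the more complete proof.
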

\begin{proof}
We proceed through a sequence of hybrids. Let $W_i$ be the event that adversary outputs $1$ in Game~$i$.
\begin{description}
\item[Game~$0$:] Let this game be the experiment $\mathbf{Exp}_{\mathsf{ATS},\mathcal{A}}^{\mathsf{TO}-0}(\lambda)$, where the adversary receives  $\mathsf{cert}$ for user $p$ of his choice.  Parse $\mathsf{cert}$ as $(p,\mathbf{a}'_{1},\mathbf{b}'_1,\mathbf{a}'_2,\mathbf{b}'_2,t,\mathbf{r},\mathbf{v})$. Note that $(\mathbf{a}'_{1},\mathbf{b}'_1)$ and $(\mathbf{a}'_2,\mathbf{b}'_2)$ are randomized keys from $(\mathbf{a}_{1}^{(0)},\mathbf{b}_1^{(0)})$ and $(\mathbf{a}_{2}^{(0)},\mathbf{b}_2^{(0)})$, respectively. We then have $\text{Pr}[W_0]=\text{Pr}[\mathbf{Exp}_{\mathsf{ATS},\mathcal{A}}^{\mathsf{TO}-0}(\lambda)=1]$.  

\item[Game~$1$:] We modify Game~$0$ by replacing $(\mathbf{a}'_{1},\mathbf{b}'_1)$ with a new fresh key $(\widetilde{\mathbf{a}}_1,\widetilde{\mathbf{b}}_1)$ generated by the $\mathsf{KeyGen}$ algorithm of our $\mathsf{KOE}$ scheme. It then follows from the key randomizability of our encryption scheme, this modification is negligible to the adversary. Therefore, we have $\vert\text{Pr}[W_0]-\text{Pr}[W_1]\vert=\mathrm{negl}(\lambda)$.  

\item[Game~$2$:]  We modify Game~$1$ by replacing $(\mathbf{a}'_{2},\mathbf{b}'_2)$ with a new fresh key $(\widetilde{\mathbf{a}}_2,\widetilde{\mathbf{b}}_2)$  as in Game~$1$. By the same argument, we have $\vert\text{Pr}[W_1]-\text{Pr}[W_2]\vert=\mathrm{negl}(\lambda)$.  

\item[Game~$3$:] We change Game~$2$ by replacing $(\widetilde{\mathbf{a}}_2,\widetilde{\mathbf{b}}_2)$ with $(\mathbf{a}'_2,\mathbf{b}'_2)$ that are randomized key from $(\mathbf{a}_{2}^{(1)},\mathbf{b}_2^{(1)})$. By the key randomizability of our encryption scheme, we have $\vert\text{Pr}[W_2]-\text{Pr}[W_3]\vert=\mathrm{negl}(\lambda)$.  

\item[Game~$4$:] We change Game~$3$ by replacing $(\widetilde{\mathbf{a}}_1,\widetilde{\mathbf{b}}_1)$ with $(\mathbf{a}'_1,\mathbf{b}'_1)$ that are randomized key from $(\mathbf{a}_{1}^{(1)},\mathbf{b}_1^{(1)})$. We then have $\vert\text{Pr}[W_3]-\text{Pr}[W_4]\vert=\mathrm{negl}(\lambda)$. This is exactly the experiment $\mathbf{Exp}_{\mathsf{ATS},\mathcal{A}}^{\mathsf{TO}-1}(\lambda)$. Therefore, we obtain  $\text{Pr}[W_4]=\text{Pr}[\mathbf{Exp}_{\mathsf{ATS},\mathcal{A}}^{\mathsf{TO}-1}(\lambda)=1]$.
\end{description}
Therefore, we obtain $\vert\text{Pr}[\mathbf{Exp}_{\mathsf{ATS},\mathcal{A}}^{\mathsf{TO}-1}(\lambda)=1]-\text{Pr}[\mathbf{Exp}_{\mathsf{ATS},\mathcal{A}}^{\mathsf{TO}-0}(\lambda)=1]\vert=\mathrm{negl}(\lambda)$. This implies that our scheme is trace-oblivious.
\end{proof}

\section*{Acknowledgements}
The research is supported by Singapore Ministry of Education under Research Grant MOE2016-T2-2-014(S). Khoa Nguyen is also supported by the Gopalakrishnan -- NTU Presidential Postdoctoral Fellowship 2018.

\bibliographystyle{abbrv}

\end{document}